\documentclass[11pt,a4paper]{article}

\usepackage{placeins}
\usepackage{amsmath,amssymb,amsfonts}

\usepackage{mathrsfs}

\usepackage{hyperref}

\usepackage{geometry}

\usepackage{authblk}

\usepackage{textcomp}

\usepackage{color}

\usepackage{amsthm}

\usepackage{bm}

\usepackage{tikz-cd}

\usepackage{mathtools}

\usepackage{tabularx}

\usepackage{booktabs}

\usepackage[
  backend=biber,
  style=numeric,
  sorting=nyt,
    sortcites=true,
    giveninits=true,
  maxbibnames=99
]{biblatex}

\newcommand{\R}{\mathbb{R}}
\newcommand{\dd}{\mathrm{d}}
\newcommand{\Lie}{\mathcal{L}}
\theoremstyle{plain} \newtheorem{theorem}{Theorem}[section] \newtheorem{proposition}[theorem]{Proposition} \newtheorem{corollary}[theorem]{Corollary}  \theoremstyle{definition} \newtheorem{definition}[theorem]{Definition} \theoremstyle{remark} \newtheorem{remark}[theorem]{Remark} \newtheorem{example}[theorem]{Example} 
\newcommand{\Cinfty}{\mathscr{C}^\infty} \newcommand{\T}{\mathrm{T}} \newcommand{\cT}{\mathrm{T}^\ast}  \newcommand*{\inn}[1]{\iota_{#1}} 
     \DeclareMathOperator{\rk}{rank} \DeclareMathOperator{\cork}{corank}   \DeclareMathOperator{\pr}{pr}   
     \newcommand{\parder}[2]{\frac{\partial #1}{\partial #2}}      
\newcommand{\subjclass}[2][]{%
\noindent\textbf{MSC 2020:} #2\par
}

\newcommand{\keywords}[1]{%
\noindent\textbf{Keywords:} #1\par
}

\newcolumntype{L}[1]{>{\raggedright\arraybackslash}p{#1}}

\renewbibmacro*{journal}{%
  \iffieldundef{shortjournal}
    {\iffieldundef{journaltitle}
       {}
       {\printtext[journaltitle]{%
          \printfield[titlecase]{journaltitle}%
          \setunit{\subtitlepunct}%
          \printfield[titlecase]{journalsubtitle}}}}
    {\printfield[journaltitle]{shortjournal}}}
\newcolumntype{Y}{>{\raggedright\arraybackslash}X}
\title{\bf A Guide to Applications of $k$-Contact Geometry\\
in Dissipative Field Equations}

\author{J.~de~Lucas}
\author{J.~Lange}
\author{M.~Krych}

\affil{\small Department of Mathematical Methods in Physics, Faculty of Physics,
University of Warsaw, ul.~Pasteura~5, 02--093 Warsaw, Poland}

\date{}

\begin{document}

\maketitle


\begin{abstract}

We study the practical scope of the $k$-contact Hamilton--De Donder--Weyl formalism as a geometric framework for dissipative field equations. In particular, our work focuses on canonical $k$-contact manifolds on $\bigoplus^k \T^*Q\times\mathbb R^k$ and  $k$-contactifications of exact $k$-symplectic phase spaces. A special two-contactification of exact two-symplectic structures on cotangent bundles is defined and analysed.  We also develop several tools for applications, including splitting results for the Hamilton--De Donder--Weyl equations on $k$-contactifications, regularity conditions for such spaces, criteria for the ultrahyperbolicity, hyperbolicity, or ellipticity of PDEs associated with Hamiltonian $k$-contact systems,  dissipation laws associated with infinitesimal dynamical symmetries, relevance and applications of quadratic dissipative terms in the Hamiltonian, etc. Our methods yield explicit Hamiltonian descriptions for several nonlinear nonconservative PDEs with polynomial  dissipative terms, including damped Klein--Gordon, Allen--Cahn, generalized Burgers,   porous medium equations with linear absorption, complex Ginzburg--Landau, damped nonlinear Schr\"odinger, Fisher--KPP, damped $\phi^4$, damped sine--Gordon, and FitzHugh--Nagumo equations, and many others. Our work also stresses the many further practical applications of this framework. 
\end{abstract}

\subjclass[2020]{37J55, 53D10 (primary), 70S05, 35Q70, 35A30 (secondary)}



\keywords{$k$-contact geometry; Hamilton--De Donder--Weyl theory;
dissipative field equations; $k$-contact Hamiltonian systems;
geometric formulation of PDEs}


\section{Introduction}

The geometric formulation of classical field theories has traditionally been developed within symplectic, \(k\)-symplectic, \(k\)-cosymplectic, and multisymplectic geometry \cite{KT_79,LR_89,Arn_89}. In this setting, the Hamilton--De Donder--Weyl (HdDW) formalism provides a covariant finite-dimensional description of field equations with several independent variables and has proved particularly successful for conservative systems \cite{Gun_87,Kij_73,KT_79,Mer_97}. However, many physically relevant partial differential equations fall outside the standard conservative framework. Damping, viscosity, relaxation, gain and loss, reaction terms, nonlinear diffusion, and entropy production are ubiquitous in models arising in mechanics, nonlinear optics, pattern formation, transport, and mathematical biology. Typical examples include damped wave and Klein--Gordon equations, reaction--diffusion systems, the porous medium equation with linear absorption, complex amplitude equations, and excitable-media models \cite{AllenCahn1979,Vazquez2007,AransonKramer2002,Murray2002,Nagumo1962}.

Contact geometry provides a natural geometric language for nonconservative mechanics by allowing the Hamiltonian to vary along the dynamics \cite{Bra_17,BCT_17,LS_17}. In the field theory setting, contact geometry can be extended to \(k\)-contact geometry, which generalises contact Hamiltonian methods to field theories with several independent variables. This program was initiated for dissipative Hamiltonian field theories in \cite{GGMRR_20}, further developed in the Lagrangian and Skinner--Rusk settings in \cite{GGMRR_21,GRR_22}, extended to the nonautonomous case in \cite{Riv_23}, and studied from the distributional and geometric viewpoint in \cite{LRS_25}. More recently, symmetries and dissipation laws in the \(k\)-contact setting have also been further developed in \cite{RSS_24}. Altogether, these works show that \(k\)-contact geometry provides a natural covariant extension of contact geometry to dissipative field theories. It is also worth mentioning that \(k\)-contact geometry has recently been applied to control systems, integrable systems, Lie systems, and first-order differential equations \cite{SF_25,LRS_25,deLucasRivasSobczak2026}, while also motivating several differential-geometric problems \cite{LRS_25,Riv_22,SF_25}.

The purpose of the present article is to show that the \(k\)-contact Hamilton--De Donder--Weyl formalism is not merely a conceptual extension of conservative field theory, but also a practical source of explicit geometric realizations for a broad class of nonconservative systems of partial differential equations. Our aim is not to give an exhaustive classification, but rather to identify robust geometric mechanisms that produce dissipative PDEs from \(k\)-contact Hamiltonian systems, to organise a substantial family of worked examples within a common framework, and to indicate further classes of models that appear accessible from the same viewpoint. In particular, the present paper substantially extends the range of applications of \(k\)-contact geometry to dissipative field theories and, to a significant extent, the corresponding geometric methods.

Two complementary geometric settings play a central rôle throughout the paper. The first one is the canonical \(k\)-contact manifold \(\bigoplus^k \T^*Q\times\mathbb R^k\), which is employed to analyse damped equations and Hamiltonians with polynomial  dependence on the dissipative variables. We recall from \cite{deLucasLangeSardonRivas2026} the regularity theory for these models and the analysis of systems of PDEs generated by Hamiltonian functions with a term with polynomial dependence on the dissipative variables. As a novelty with respect to \cite{deLucasLangeSardonRivas2026}, we analyse the nature of the systems of PDEs generated in this method. In particular, it is studied whether these systems of PDEs are hyperbolic, elliptic, or ultrahyperbolic. 

The second geometric setting consists of two-contactifications of  exact  $k$-symplectic phase spaces of particular types. This is indeed a new framework motivated by our applications. In particular some of our models are defined on a cotangent bundle whose natural exact symplectic structure admits a second exact closed two-form giving rise to a two-symplectic manifold. Then, a two-contactification of this model is developed. This second mechanism is analysed  systematically: in Section~\ref{sec:tools_apps} we prove a splitting result for the HdDW equations on \(k\)-contactifications of exact \(k\)-symplectic manifolds, introduce {\it adapted two-contact phase spaces of cotangent type}, formulate a partial regularity condition with respect to the spatial momenta, and show how this allows one to eliminate those momenta and recover second-order PDEs from first-order two-contact Hamiltonian systems. 

Section \ref{Sec:Dissipation} develops dissipation laws and explains how infinitesimal dynamical symmetries generate geometrically natural dissipated quantities recalling previous results in \cite{CCM_18,Riv_22}. 

All our techniques and tools are then applied in Section~\ref{sec:examples} to a wide family of systems of PDEs with physical applications. On the canonical side, we obtain \(k\)-contact realizations for damped Klein--Gordon-type equations \cite{WangCheng2005}, damped sine--Gordon equations \cite{McLaughlinScott1978,CostabileEtAl1978,CuevasMaraverKevrekidisWilliams2014}, and a damped \(\phi^4\) field on \(\mathbb R^{3,1}\) \cite{QuinteroSanchezMertens2001,LizunovaVanWezel2021}, while Section~\ref{sec:2} also contains a damped double sine--Gordon example illustrating the general canonical mechanism \cite{Popov2005,TakacsWagner2006}. By means of the adapted two-contact phase spaces formalism, we construct   descriptions for Allen--Cahn and reaction--diffusion equations \cite{AllenCahn1979,Fife1979}, a family of generalized Burgers-type equations extending the classical viscous Burgers equation \cite{Burgers1948}, the porous medium equation with linear absorption  \cite{Barenblatt1952,Aronson1986,Vazquez2007,BandleNanbuStakgold1998,Kwak1998}, the Fisher--KPP equation with linear loss \cite{Fisher1937,Kolmogorov1937,Murray2002,OrugantiShiShivaji2002,KurataShi2008,Wang2016}, the complex Ginzburg--Landau equation \cite{KuramotoTsuzuki1976,AransonKramer2002,CrossHohenberg1993,LugiatoLefever1987}, damped nonlinear Schr\"odinger-type systems \cite{SulemSulem1999}, and the FitzHugh--Nagumo model \cite{FitzHugh1961,Nagumo1962,Hastings1982FHN,ErmentroutTerman2010}. These examples, summarised in Table~\ref{tab:kcontact_examples}, show that the same geometric formalism can accommodate dissipative equations, nonlinear diffusion, reaction--diffusion dynamics, complex amplitude equations, and excitable-media systems within a unified Hamiltonian picture. When applicable, we also discuss the analytic type of the resulting systems of second-order PDEs.

The contribution of the paper is therefore threefold. First, it provides a substantial collection of explicit dissipative PDEs admitting genuine \(k\)-contact Hamilton--De Donder--Weyl formulations. Second, it develops a geometric toolbox for constructing such realizations, especially in the reduced exact two-symplectic setting, where the distinction between field variables and spatial momenta becomes intrinsic. Third, it shows that the theory is flexible enough to suggest many further applications beyond the worked examples: the additional models collected in Table \ref{tab:kcontact_applications_II} indicate natural candidate equations and Hamiltonians for future developments, ranging from further damped scalar field equations to thermoelastic, sigma-model, relaxation, and multi-sector thermodynamic systems. It is worth noting that Hamiltonians with polynomial dependence on the dissipative variables are also studied, clarifying their geometric meaning and their use in the construction of dissipative PDEs. In particular, quadratic terms in the dissipative variables are used to describe systems of PDEs with coefficients depending on independent variables.

At the same time, the paper also clarifies the present scope of the theory. The first-order \(k\)-contact formalism developed here works particularly well for equations that can be encoded either directly on canonical \(k\)-contact manifolds or on contactifications of reduced exact \(k\)-symplectic phase spaces with suitable regularity properties. This viewpoint already covers a broad range of dissipative models, but it also points towards natural extensions, including higher-order \(k\)-contact field theories, singular and constrained formulations, symmetry reduction, Hamilton--Jacobi methods, and deeper relations with nonconservative continuum theories \cite{deLucasLangeSardonRivas2026}.

The paper is organised as follows. Section~\ref{sec:2} reviews the basic notions of \(k\)-contact geometry, including \(k\)-contact Hamiltonian systems, and regularity for relevant classes of Hamiltonians. Section~\ref{sec:tools_apps} develops the general tools used later on, including splitting results for HdDW equations on \(k\)-contactifications, adapted reduced two-contact phase spaces, partial regularity, and dissipation laws. Section~\ref{sec:examples} contains the main applications. Finally, Section~\ref{sec:conclusions} summarises the conclusions and discusses further directions suggested by the present results.



\begin{table}[ht]
\centering
\footnotesize
\setlength{\tabcolsep}{4pt}
\renewcommand{\arraystretch}{1.18}
\begin{tabularx}{\textwidth}{|p{2.7cm}|Y|Y|}
\hline
\textbf{Model} & \textbf{Field equation} & \textbf{Hamiltonian / geometric realization} \\
\hline
Damped wave & \(u_{tt}-c^2u_{xx}+k u_t=0\) & On \(\mathcal J_{\mathbb R,2}\): \(h=\frac{1}{2\rho}(p^t)^2-\frac{1}{2\tau}(p^x)^2+k z^t\), with \(c^2=\tau/\rho\). \\
\hline
Damped Klein--Gordon-type model & \(u_{tt}-c^2u_{xx}+z^t u_t+\epsilon u=0\) & On \(\mathcal{J}_{:\mathbb{R},2}\): \(h=\frac12\bigl((p^t)^2-c^{-2}(p^x)^2\bigr)+\frac12(z^t)^2+\frac{\epsilon}{2}u^2\). \\
\hline
Allen--Cahn type / reaction--diffusion & \(u_t-\nu u_{xx}+V'(u)+\lambda u=0\) & On the contactification of a reduced exact \(2\)-symplectic phase space: \(h_{\mathrm{RD}}=-\nu^{-1}p^xq^x+v\bigl(V'(u)+\lambda u\bigr)\). \\
\hline
Generalized Burgers family & \(u_t-\partial_x\!\bigl(D(u)u_x\bigr)+B(u)u_x+C(u)=0\) & Two-contactification of an exact two-symplectic phase space: \(h=-D(u)^{-1}p_u^x p_v^x-\frac{B(u)}{D(u)}\,z^x+v\,C(u)\). It includes viscous Burgers for \(D(u)=\nu\), \(B(u)=u\), \(C(u)=0\). \\
\hline
Porous medium with linear absorption& \(u_t=(u^m)_{xx}-bu\), \(m>1\) & Two-contactification of an exact two-symplectic phase space: \(h_{\mathrm{PME}}=-(m u^{m-1})^{-1}p^x q^x+2bz^t\). This realization is local on the region where \(u\neq0\). \\
\hline
Complex Ginzburg--Landau & \(\psi_t=(1+i\alpha)\psi_{xx}+(1-\gamma)\psi-(1+i\beta)|\psi|^2\psi\) & In real form, on the two-contactification of an exact two-symplectic phase space: \(h_{\mathrm{CGL}}=-\frac{p_a^x(q_a^x+\alpha q_b^x)+p_b^x(-\alpha q_a^x+q_b^x)}{1+\alpha^2}-c\,R_a-d\,R_b+2\gamma z^t\). \\
\hline
Damped nonlinear Schr\"odinger & \(i\psi_t+\psi_{xx}+|\psi|^2\psi+i\gamma\psi=0\) & In real form, on the two-contactification of an exact two-symplectic phase space: \(h_{\mathrm{dNLS}}=-p_a^x q_b^x+p_b^x q_a^x-c\,R_a(a,b)-d\,R_b(a,b)\). \\
\hline
Fisher--KPP with linear loss & \(u_t=\mu u_{xx}+ru(1-u)-\lambda u\) & Two-contactification of an exact two-symplectic phase space: \(h_{\mathrm{FKPP}}=-\mu^{-1}p^xq^x+v\bigl(-ru(1-u)+\lambda u\bigr)\). \\
\hline
Damped \(\phi^4\) field & \(\phi_{tt}-\Delta\phi+m^2\phi+g\phi^3+\lambda\phi_t=0\) & On \(\mathcal J_{\mathbb R,4}\): \(h_{\phi^4}=\frac12\bigl((p^0)^2-\sum_{i=1}^3(p^i)^2\bigr)+\frac{m^2}{2}\phi^2+\frac{g}{4}\phi^4+\lambda z^0\). \\
\hline
Damped sine--Gordon & \(u_{tt}-c^2u_{xx}+\sin u+\lambda u_t=0\) & On \(\mathcal J_{\mathbb R,2}\): \(h_{\mathrm{sG}}=\frac12\bigl((p^t)^2-c^{-2}(p^x)^2\bigr)+(1-\cos u)+\lambda z^t\). \\
\hline
FitzHugh--Nagumo & \(u_t=D_u u_{xx}+f(u)-v+I\), \(v_t=D_v v_{xx}+\varepsilon(u-a v)\) & Two-contactification of an exact two-symplectic phase space: \(h_{\mathrm{FHN}}=-D_u^{-1}p_u^xq_u^x-D_v^{-1}p_v^xq_v^x-r(f(u)-v+I)-s\,\varepsilon(u-a v)\). \\
\hline
\end{tabularx}
\caption{Examples developed in the paper. Canonical \(k\)-contact models coexist with realizations on contactifications of reduced exact \(k\)-symplectic phase spaces, depending on the structure of the equation under consideration.}
\label{tab:kcontact_examples}
\end{table}
\section{{\it k}-contact manifolds and {\it k}-contact Hamiltonian systems}
\label{sec:2}

Let us describe the fundamental notions, results, and assumptions used hereafter in this work. Recall that $\mathbb{R}^k$ admits a canonical basis $\{e_1,\ldots,e_k\}$. Direct sums of vector bundles are considered to be Whitney sums and structures are smooth unless otherwise stated.  Manifolds are always finite-dimensional.  

After clarifying some standard notions on distributions and $k$-vector fields, we recall the definition of $k$-contact manifolds and $k$-contact Hamiltonian systems. Two main examples of $k$-contact manifolds are studied in this work: first-order jet manifolds and $k$-contactifications of exact symplectic manifolds. 
After explaining these notions, our work turns to  analysing how our formalism applies to Hamiltonian functions with an affine and polynomial dependence on the so-called dissipative variables and how such dependences yield second-order partial differential equations provided regularity conditions are satisfied, which is relevant for applications.

\subsection{On {\it k}-vector fields and related notions}

Let us recall some basic definitions and concepts, including $k$-contact vector fields, $k$-contact geometry, and other related notions to be used hereafter \cite{LSV_15,MRSV_15,Riv_22,LRS_25,deLucasRivasSobczak2026}.

    Assume $M$ to be an $m$-dimensional  manifold and let $\T M$ be its tangent bundle.
    \begin{itemize}
        \item A \textit{distribution} on $M$ attaches each $x\in M$ with a vector subspace  $D_x = D\,\cap\T_xM$ of $\T_xM$. We call $\dim D_x$ the {\it rank} of $D$ at $x\in M$.
        \item A distribution $D$ is \textit{smooth} if there exists a neighbourhood $U$ around every $x\in M$ and a family of vector fields $X_1, \ldots, X_r:U\rightarrow \T U$ such that $D_{x'}=\langle X_1(x'), \ldots, X_r(x')\rangle$ for every $x'\in U$. Note that $r$ does not need to match the dimension of $D_x$. 
        \item A distribution $D$ is \textit{regular} if it is smooth and of constant rank. 
    \end{itemize}

If $D$ has constant rank $k$, we write $\rk D = k$. If $D$ has constant corank $p$, we denote it by $\cork D = p$.  
The space of vector fields on $M$ taking values in a distribution $D\subset \T M$ is denoted by $\Gamma(D)$. Note that we use the term `distribution' instead of `generalised distribution', as in part of the standard literature, to simplify the notation and because it does not lead to any misunderstanding.

Let us write $\bigoplus^k\T M = \T M\oplus\overset{(k)}{\dotsb}\oplus_M \T M$ for the Whitney sum of $k$ copies of $\T M$. Define also the natural projections
\begin{equation*}
    \pr^\alpha\colon\bigoplus\nolimits^k\T M\to\T M\,, \qquad \pr_M\colon\bigoplus\nolimits^k\T M\to M\,, \qquad \alpha=1,\ldots, k\, ,
\end{equation*}
where $\pr^\alpha$ stands for the projection onto the $\alpha$-th component of the Whitney sum. A {\it $k$-vector field} on $M$ is a section ${\bf X}\colon M\to\bigoplus^k\T M$ of $\pr_M$, i.e. a map making the diagram
\begin{equation}
\label{diag:projection}
    \begin{tikzcd}[row sep=huge, column sep=huge]
        & \bigoplus^k\T M \arrow[d, "\pr^\alpha"]\\
        M \arrow[r, "X_\alpha"] \arrow[ur, "{\bf X}"] & \T M
    \end{tikzcd}
\end{equation}
commutative. Throughout the paper, $\mathfrak{X}^k(M)$ denotes the space of $k$-vector fields on $M$.

Diagram \eqref{diag:projection} allows us to consider a $k$-vector field ${\bf X}\in\mathfrak{X}^k(M)$ as a family of vector fields $X_1,\dotsc,X_k\in\mathfrak{X}(M)$ given by $X_\alpha = \pr^\alpha\circ{\bf X}$ with $\alpha=1,\ldots,k$. Taking this into account, one can denote ${\bf X} = (X_1,\dotsc, X_k)$. 
 \label{dfn:first-prolongation-k-tangent-bundle}
    Given a map $\phi\colon U\subset\R^k\to M$, its {\it first prolongation} to $\bigoplus^k\T M$ is the map $\phi'\colon U\subset\R^k\to\bigoplus^k\T M$ 
    defined by
    $$ \phi'(t) = \left( \phi(t); \T_t\phi\left( \parder{}{t^1}\bigg\vert_t \right),\dotsc,\T_t\phi\left( \parder{}{t^k}\bigg\vert_t \right) \right)\,, \qquad t\in\mathbb{R}^k, $$
    where $t^1,\dotsc,t^k$ are the canonical Cartesian coordinates on $\R^k$.

    An {\it integral section} of a $k$-vector field ${\bf X} = (X_1,\dotsc,X_k)\in\mathfrak{X}^k(M)$ is a map $\phi\colon U\subset\R^k\to M$ such that
    $ \phi' = {\bf X}\circ\phi\,, $
    namely $\T\phi \left(\parder{}{t^\alpha}\right) = X_\alpha\circ\phi$ for $\alpha=1,\ldots,k$. A $k$-vector field ${\bf X}\in\mathfrak{X}^k(M)$ is {\it integrable} if and only if every point of $M$ is in the image of an integral section of ${\bf X}$.

  Let ${\bf X} = (X_1,\ldots, X_k)$ be a $k$-vector field  on $M$ with local expression $ X_\alpha = \sum_{i=1}^nX_\alpha^i\parder{}{x^i}\,$ for $\alpha=1,\ldots,k$. 
Then, $\phi\colon U\subset\R^k\to M$ is an integral section of ${\bf X}$ if and only if its coordinates, $\phi^1,\ldots,\phi^n$, form a solution of the system of partial differential equations
\begin{equation}
\label{eq:PDEs}
\parder{\phi^i}{t^\alpha} = X_\alpha^i\circ\phi\,,\qquad i=1,\ldots,n\,,\qquad \alpha=1,\ldots,k\,.
\end{equation}

Then, ${\bf X}$ is integrable if, and only if, $[X_\alpha,X_\beta] = 0$ for $\alpha,\beta=1,\ldots,k$. These are precisely the necessary and sufficient conditions for the integrability of the system of PDEs \eqref{eq:PDEs} (see \cite{OLV_86} for details).

Each $\bm\theta\in \Omega^\ell(M,\R^k)$ can be represented uniquely as $\bm\theta=\sum_{\alpha=1}^k\theta^\alpha\otimes e_\alpha$ for some differential $\ell$-forms $\theta^1,\ldots,\theta^k \in \Omega^\ell(M)$. 
Then, $\bm\theta\in \Omega^\ell(M,\R^k)$ is {\it nondegenerate} if $\ker\bm\theta:=\bigcap^k_{\alpha=1}\ker\theta^\alpha=0$.
The contractions of a $k$-vector field ${\bm Z}$ and a vector field $Z\in \mathfrak{X}(M)$ on $M$ with $\bm \theta\in \Omega^\ell(M,\mathbb{R}^k)$ are given, respectively, by
$$
\iota_{\bm Z}\bm \theta=\sum_{\alpha=1}^k\iota_{Z_\alpha}\theta^\alpha \in \Omega^{\ell-1}(M),\qquad
\iota_{Z}\bm \theta=\sum_{\alpha=1}^k\iota_{Z}\theta^\alpha\otimes e_\alpha\in \Omega^{\ell-1}(M,\mathbb{R}^k).
$$

Let us finally define the fibred differential. 
\begin{definition}
Let 
\(
\alpha \colon Q \times \mathbb{R}^k \longrightarrow 
\bigwedge\nolimits^k \cT Q \times \mathbb{R}^k
\)
be a section and fix $z\in\mathbb{R}^k$.
Define
\begin{align*}\textstyle
\alpha_z \colon Q &\longrightarrow \bigwedge\nolimits^k \cT Q \\
x & \longmapsto \mathrm{pr}_{\Lambda^k \cT Q}(\alpha(x,z))\,,
\end{align*}
where $\mathrm{pr}_{\Lambda^k \cT Q}$ denotes the canonical projection from $\bigwedge\nolimits^k \cT Q \times \mathbb{R}^k$ onto $\Lambda^k\T^*Q$.
The \emph{exterior derivative of $\alpha$ over the bundle $Q\times \mathbb{R}^k$ onto $Q$} is the section
\(
\dd_Q \alpha \colon Q \times \mathbb{R}^k \longrightarrow
\bigwedge\nolimits^{k+1}\cT Q \times \mathbb{R}^k
\)
defined by
\(
\dd_Q \alpha(x,z) = (\dd \alpha_z(x),\, z)\,.
\)
\end{definition}
\subsection{\texorpdfstring{$k$}--contact geometry}

Let us present the basic notions of $k$-contact geometry to be used in this work \cite{LRS_25,GGMRR_20,GGMRR_21}. 

\begin{definition}\label{def:k-form}
A \textit{$k$-contact form on an open subset $U \subset M$} is a differential one-form on $U$ taking values in $\mathbb{R}^k$, that is $\bm{\eta} \in \Omega^1(U,\mathbb{R}^k)$, such that
 $\ker \bm{\eta} \subset \T U$ is a regular non-zero distribution of corank $k$ and $\ker \bm{\eta} \oplus \ker \dd \bm{\eta} = \T U$.

If $\bm{\eta}$ is a $k$-contact form defined on $M$, the pair $(M,\bm{\eta})$ is called a \textit{co-oriented $k$-contact manifold}, and $\ker \dd \bm{\eta}$ is called the \textit{Reeb distribution} of $(M,\bm{\eta})$. Moreover, if $\dim M = n + nk + k$ for some $n,k \in \mathbb{N}$ and there exists an integrable distribution $\mathcal{V} \subset \ker \bm{\eta}$ with $\rk \mathcal{V} = nk$, then $(M,\bm{\eta},\mathcal{V})$ is called a \textit{polarised co-oriented $k$-contact manifold}. The distribution $\mathcal{V}$ is referred to as a \textit{polarisation} of $(M,\bm{\eta})$.
\end{definition} 

According to the above formalism, a one-contact form is a contact form (cf. \cite{Riv_22}). Nevertheless, Definition~\ref{def:k-form} also requires $\ker \bm{\eta}\neq 0$ to avoid considering as a contact manifold the case $k=1$ with $\dim M=1$. In this instance, $\ker \eta=0$ is sometimes said to be {\it maximally non-integrable} \cite{GG_22}, although it is an integrable distribution. Moreover, allowing this case involves dealing with several technical nuances in $k$-contact geometry, which is undesirable. Despite this,  $\ker \bm{\eta}\neq 0$ is a mere technical condition that excludes an instance of no real relevance in contact or $k$-contact geometry. 

\begin{theorem}
\label{thm:k-contact-Reeb}
Let $\big(M,\bm\eta = \sum_{\alpha=1}^k \eta^\alpha \otimes e_\alpha\big)$ be a co-oriented $k$-contact manifold. There exists a unique family of $k$-vector fields $R_1,\ldots,R_k \in \mathfrak{X}(M)$ such that
\begin{equation}\label{eq:k-contact-Reeb} 
\inn{R_\alpha} \eta^\beta = \delta_\alpha^\beta\,,\qquad 
\inn{R_\alpha} \dd \eta^\beta = 0\,, \qquad \alpha,\beta = 1,\ldots,k\,.
\end{equation}
The vector fields $R_1,\ldots,R_k$ commute with each other and form a basis of the Reeb distribution 
\(
\ker \dd \bm\eta.
\)
\end{theorem}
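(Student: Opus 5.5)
The plan is to work pointwise, using only linear algebra and the splitting $\ker \bm\eta \oplus \ker \dd\bm\eta = \T M$ from Definition~\ref{def:k-form}. Since $\ker\bm\eta$ is regular of corank $k$, the Reeb distribution $\ker\dd\bm\eta$ is a complementary distribution and has rank exactly $k$ at every point. Moreover, the restriction map
\[
\bm\eta_x|_{\ker \dd\bm\eta_x}\colon \ker\dd\bm\eta_x \to \R^k
\]
is injective, because its kernel lies in $\ker\bm\eta_x\cap\ker\dd\bm\eta_x=0$. Being injective between spaces of equal dimension, it is a linear isomorphism.

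\textbf{Existence and uniqueness.} At each $x\in M$, I would set $R_\alpha(x)$ to be the preimage of $e_\alpha$ under this isomorphism. Then $R_\alpha(x)\in\ker\dd\bm\eta_x$, which gives $\inn{R_\alpha}\dd\eta^\beta=0$ for every $\beta$, and $\inn{R_\alpha}\eta^\beta=\delta_\alpha^\beta$. Conversely, any family satisfying \eqref{eq:k-contact-Reeb} takes values in $\ker\dd\bm\eta$ and has prescribed image under the isomorphism, so it is unique. The same isomorphism shows that $R_1(x),\ldots,R_k(x)$ form a basis of $\ker\dd\bm\eta_x$.

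\textbf{Smoothness.} This is the step needing some care. I would argue that $\ker\dd\bm\eta$ is a smooth distribution of constant rank: it is the kernel of the bundle map $\T M\to \cT M\otimes\R^k$, $v\mapsto\inn{v}\dd\bm\eta$. This map has constant rank $\dim M-k$, and the kernel of a smooth constant-rank bundle morphism is a smooth subbundle. Locally, I would pick a smooth frame $Y_1,\ldots,Y_k$ of this subbundle. The matrix $(\eta^\beta(Y_\gamma))$ is smooth and invertible by the isomorphism above, so $R_\alpha=\sum_\gamma (A^{-1})^\gamma_\alpha Y_\gamma$ is smooth.

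\textbf{Commutation.} Cartan's formula gives $\Lie_{R_\alpha}\eta^\beta=\dd\inn{R_\alpha}\eta^\beta+\inn{R_\alpha}\dd\eta^\beta=0$, and hence $\Lie_{R_\alpha}\dd\eta^\beta=0$ as well. Applying $\inn{[R_\alpha,R_\gamma]}=[\Lie_{R_\alpha},\inn{R_\gamma}]$, I would compute
\[
\inn{[R_\alpha,R_\gamma]}\eta^\beta=\Lie_{R_\alpha}\delta^\beta_\gamma-\inn{R_\gamma}\Lie_{R_\alpha}\eta^\beta=0,
\]
and similarly $\inn{[R_\alpha,R_\gamma]}\dd\eta^\beta=0$. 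Thus $[R_\alpha,R_\gamma]$ lies in $\ker\bm\eta\cap\ker\dd\bm\eta=0$.

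I expect the main obstacle to be the smoothness/constant-rank step. Every other part of the proof follows directly from the direct-sum decomposition.
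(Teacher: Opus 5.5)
Your proof is correct and complete. The paper states this theorem without proof, quoting it from the cited $k$-contact literature, and your route (the pointwise isomorphism $\bm\eta_x|_{\ker\dd\bm\eta_x}\colon\ker\dd\bm\eta_x\to\R^k$, the constant-rank argument for smoothness, and the Cartan computation placing $[R_\alpha,R_\gamma]$ in $\ker\bm\eta\cap\ker\dd\bm\eta=0$) is the standard argument there; it also matches the paper's one accompanying remark, that $\ker\dd\bm\eta$ is integrable as a constant-rank intersection of kernels of closed two-forms, which is exactly your identity $\inn{[R_\alpha,R_\gamma]}\dd\eta^\beta=0$.
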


Since $\ker \dd \bm\eta$ is the intersection of the kernels of closed two-forms and has constant rank $k$ by assumption, it is integrable. 

\begin{definition}
Given a $k$-contact manifold $(M,\bm\eta)$, the \textit{Reeb $k$-vector field} of $(M,\bm\eta)$ is the integrable $k$-vector field
\(
\mathbf{R} = (R_1,\ldots,R_k)
\) 
on $M$ whose components are described in Theorem~\ref{thm:k-contact-Reeb}. The vector fields $R_1,\ldots,R_k$ are called the \textit{Reeb vector fields} of the co-oriented $k$-contact manifold $(M,\bm\eta)$.
\end{definition}

The following two examples describe the main $k$-contact manifolds to be used in this work. The first is also the canonical local model for polarised $k$-contact manifolds as they are all locally of this type \cite{LRS_25}. In particular, it is related to first-order jet bundles and many examples in physics \cite{LRS_25}. 

\begin{example} 
\label{ex:canonical-k-contact-structure}
\rm
The manifold $\mathcal{J}_{Q,k} = \bigoplus^k \cT Q \times \mathbb{R}^k$ carries a canonical $k$-contact manifold structure defined by the $k$-contact form
\(
\bm\eta_{Q,k} = \sum_{\alpha=1}^k (\dd z^\alpha - \theta^\alpha) \otimes e_\alpha.
\)
Here, $\{z^1,\ldots,z^k\}$ are linear coordinates on $\mathbb{R}^k$, pulled back to $\mathcal{J}_{Q,k}$ via its canonical projection onto $\mathbb{R}^k$. These coordinates are hereafter called {\it dissipative variables} due to their relation to dissipation in field theories \cite{Riv_22}. Moreover, each $\theta^\alpha$ is the pull-back to $\mathcal{J}_{Q,k}$ of the Liouville one-form $\theta$ on the $\alpha$-th copy of the cotangent bundle $\cT Q$ via the natural projections $\mathcal{J}_{Q,k}\to \bigoplus^k\T^*Q\stackrel{{\rm pr}_\alpha}{\rightarrow} \cT Q$. Additionally, the canonical projection $\mathcal{J}_{Q,k}\rightarrow  Q\times \mathbb{R}^k$ gives rise to a vertical distribution $\mathcal{V}\subset \T\mathcal{J}_{Q,k}$ of rank $k \cdot \dim Q$ contained in $\ker \bm\eta_{Q,k}$. Therefore, $\left(\mathcal{J}_{Q,k}, \bm\eta_{Q,k}, \mathcal{V}\right)$ is a polarised co-oriented $k$-contact manifold.

Any choice of local coordinates $\{q^1,\ldots,q^n\}$ on $Q$, together with linear coordinates on $\mathbb{R}^k$, induces a natural coordinate system $\{q^i, p_i^\alpha, z^\alpha\}$, with $\alpha = 1,\ldots,k$,  on $\mathcal{J}_{Q,k}$. In these coordinates,
\[
{\bm \eta}_{Q,k}= \sum_{\alpha=1}^k\left(\dd z^\alpha - \sum_{i=1}^n p_i^\alpha \dd q^i\right)\otimes e_\alpha,\quad
\dd {\bm \eta}_{Q,k}= \sum_{\alpha=1}^k\left(\sum_{i=1}^n\dd q^i \wedge \dd p_i^\alpha\right)\otimes e_\alpha,\quad
R_\beta = \parder{}{z^\beta},
\]
for \(\beta = 1,\ldots,k\), and
\[
\ker \bm\eta_{Q,k} =
\left\langle
\parder{}{p_i^\alpha},
\parder{}{q^i} + \sum_{\alpha=1}^k p_i^\alpha \parder{}{z^\alpha}
\right\rangle_{ i = 1,\ldots,n}\!\!\!\!\!\!\!\!\!\!\!\!\!\!\!\!,\!\!\!\!\quad
\ker \dd \bm\eta_{Q,k} =
\left\langle
R_1,\ldots,R_k
\right\rangle,
\!
\mathcal{V}_{Q,k}=\left\langle\parder{}{p^\alpha_i}\right\rangle_{\substack{i=1,\ldots,n\\\alpha=1,\ldots,k}} .
\]
\end{example}

\begin{example}\label{ex:contactification-k-symplectic-manifold}
Let us study the $k$-contactification of an exact $k$-symplectic manifold. Recall that a $k$-symplectic manifold is a pair $(P,\bm \omega)$, where $\bm \omega\in \Omega^2(P,\mathbb{R}^k)$ is a closed two-form taking values in $\mathbb{R}^k$ with $\ker \bm\omega=0$. More specifically, $(P,\bm \omega)$ is {\it exact} if $(P,\bm\omega = \sum_{\alpha=1}^k \omega^\alpha \otimes e_\alpha)$ with $\bm \omega = -\dd \bm\theta=-\sum_{\alpha=1}^k\dd \theta^\alpha\otimes e_\alpha$. Then, the $k$-contactification of $(P,-\dd\bm\theta)$ is the product manifold $M = P \times \mathbb{R}^k$ along with the 
$\mathbb{R}^k$-valued one-form
\[
\bm\eta_{\bm \theta} = \sum_{\alpha=1}^k ( \dd z^\alpha - \theta_M^\alpha)\otimes e_\alpha \in \Omega^1(M,\mathbb{R}^k),
\]
where $\{z^1,\ldots,z^k\}$ denote the pull-back  via the natural projection $M\rightarrow \mathbb{R}^k$ of linear coordinates on $\mathbb{R}^k$, and each $\theta_M^\alpha$ is the pull-back of $\theta^\alpha$ from $P$ to $M$ relative to the canonical projection $\rho:M\rightarrow P$. Then, $(M,\bm\eta_{\bm \theta})$ becomes a co-oriented $k$-contact manifold. Indeed, $\ker \bm\eta_{\bm \theta}$ has corank $k$ and is non-zero, while $\dd \bm\eta_{\bm \theta} = -\dd \bm\theta_M$ for $\bm \theta_M=\sum_{\alpha=1}^k\theta^\alpha_M\otimes e_\alpha$. Then,
\[
\ker \dd \bm\eta_{\bm\theta} = \left\langle \parder{}{z^1},\ldots,\parder{}{z^k} \right\rangle
\]
has rank $k$, since $(P,\bm\omega)$ is $k$-symplectic. Hence, $\bm\eta_{\bm \theta}$ defines a $k$-contact form on $M$.

The  $k$-contact form $\bm\eta_{Q,k}$ described in Example~\ref{ex:canonical-k-contact-structure} is  the $k$-contactification of the exact $k$-symplectic manifold $(P = \bigoplus^k \cT Q, \bm\omega_{Q,k}=\sum_{\alpha=1}^k \omega_{Q}^\alpha \otimes e_\alpha)$, where $\omega_{Q}^\alpha$ denotes the pull-back to $P$ of the canonical symplectic form from the $\alpha$-th copy of $\cT Q$. Since $\T^*Q$ is an exact symplectic manifold with a canonical Liouville form $\theta$, then $\bm\omega_{Q,k}=-\dd \bm\theta_{Q,k}$ where $\bm\theta_{Q,k} = \sum_{\alpha=1}^k \theta_Q^\alpha \otimes e_\alpha$ and $\theta^\alpha_Q$ is the pull-back to $P$ of the  Liouville form on the $\alpha$-copy of $\T^*Q$ in $P$.
\end{example}

\begin{theorem}[$k$-contact Darboux Theorem \cite{GGMRR_20}]
\label{thm:kcontDarboux}
Let $(M,\bm\eta,\mathcal{V})$ be a polarised co-oriented $k$-contact manifold. Around every point of $M$, there exist local coordinates $\{q^i,p_i^\alpha,z^\alpha\}$, with $1 \leq i \leq n$ and $1 \leq \alpha \leq k$, such that
\[
\bm \eta = \sum_{\alpha=1}^k \left(\dd z^\alpha - \sum_{i=1}^np_i^\alpha \dd q^i\right)\otimes e_\alpha,
\,\,
\ker \dd \bm\eta = \left\langle   \parder{}{z^1} ,\ldots,  \parder{}{z^k} \right\rangle,
 \,\,
\mathcal{V} = \left\langle \parder{}{p_i^\alpha} \right\rangle_{\substack{i=1,\ldots,n\\\alpha=1,\ldots,k}} .
\]
These coordinates are called \textit{Darboux coordinates} of $(M,\bm\eta,\mathcal{V})$. Note that the Reeb vector fields read $R_\alpha=\partial/\partial z^\alpha$ for $\alpha=1,\ldots,k$. 
\end{theorem}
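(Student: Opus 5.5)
The strategy is to reduce the $k$-contact Darboux theorem to a Darboux-type theorem for polarised $k$-symplectic structures on the leaf space of the Reeb distribution. Throughout, Theorem~\ref{thm:k-contact-Reeb} supplies the commuting Reeb vector fields $R_1,\ldots,R_k$ spanning $\ker \dd\bm\eta$.

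\textbf{Step 1: straighten the Reeb fields and descend.} Since the $R_\alpha$ commute and are pointwise independent, around any point $x_0$ we can choose coordinates $\{y^a,w^\alpha\}$ with $R_\alpha=\partial/\partial w^\alpha$. The Cartan formula gives $\Lie_{R_\alpha}\dd\eta^\beta=\dd\,\iota_{R_\alpha}\dd\eta^\beta=0$. Moreover, $\iota_{R_\alpha}\dd\eta^\beta=0$, so each $\dd\eta^\beta$ is basic with respect to the local projection $\pi\colon U\to P$ onto the $y$-coordinates. Hence $\dd\eta^\beta=\pi^*\omega^\beta$ for closed two-forms $\omega^\beta$ on $P$.

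The condition $\ker\bm\eta\oplus\ker\dd\bm\eta=\T M$ implies that $\bm\omega=\sum_\beta\omega^\beta\otimes e_\beta$ is $k$-symplectic on $P$. Indeed, a vector killed by all $\omega^\beta$ lifts to an element of $\ker\dd\bm\eta$, which is spanned by the $R_\alpha$ and therefore projects to zero.

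\textbf{Step 2: project the polarisation.} Because $\mathcal{V}\subset\ker\bm\eta$ meets $\ker \dd\bm\eta=\langle R_\alpha\rangle$ trivially, $\pi_*\mathcal{V}$ has rank $nk$, where $\dim P=n+nk$. To show that $\mathcal{V}$ is $R_\alpha$-invariant, take $Y\in\Gamma(\mathcal{V})$. Then $\iota_{[R_\alpha,Y]}\eta^\beta=R_\alpha(\eta^\beta(Y))-Y(\eta^\beta(R_\alpha))-\dd\eta^\beta(R_\alpha,Y)=0$, so $[R_\alpha,Y]\in\ker\bm\eta$. Invariance of $\mathcal{V}$ itself does not follow from this computation; it must be obtained from the isotropy $\dd\eta^\beta|_{\mathcal{V}}=0$ together with the dimension count. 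I expect this to be the main technical obstacle.

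\textbf{Step 3: Darboux coordinates on $P$.} If invariance fails, I would first replace $\mathcal{V}$ by $\mathcal{V}\oplus\langle R_\alpha\rangle$, which is still integrable, and project that distribution instead. Either way we obtain an integrable polarisation $\pi_*\mathcal{V}$ of $(P,\bm\omega)$ that is $\omega^\beta$-isotropic for every $\beta$.

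The $k$-symplectic Darboux theorem of Awane/de Le\'on et al. then yields coordinates $\{q^i,p^\alpha_i\}$ with $\omega^\alpha=\dd q^i\wedge\dd p^\alpha_i$ and $\pi_*\mathcal{V}=\langle\partial/\partial p^\alpha_i\rangle$. This step is the genuinely nontrivial input, and it is where the dimension condition $\dim M=n+nk+k$ and $\rk\mathcal{V}=nk$ are used.

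\textbf{Step 4: build the $z^\alpha$.} The forms $\eta^\alpha+\pi^*(p^\alpha_i\dd q^i)$ are closed, since $\dd\eta^\alpha=\pi^*\omega^\alpha=-\pi^*\dd(p_i^\alpha\dd q^i)$. On a contractible neighbourhood the Poincar\'e lemma gives functions $z^\alpha$ with $\dd z^\alpha=\eta^\alpha+p^\alpha_i\dd q^i$, which is the required expression for $\bm\eta$.

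It remains to check that $\{q^i,p^\alpha_i,z^\alpha\}$ is a chart. We have $\dd z^\alpha(R_\beta)=\eta^\alpha(R_\beta)=\delta^\alpha_\beta$, while $q,p$ are pulled back from $P$, so the Jacobian is invertible. In these coordinates $R_\alpha=\partial/\partial z^\alpha$, and $\mathcal{V}=\langle\partial/\partial p^\alpha_i\rangle$ follows from $\mathcal{V}\subset\ker\bm\eta$ together with $\pi_*\mathcal{V}=\langle\partial_{p}\rangle$.
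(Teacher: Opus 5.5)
The paper gives no proof of this statement and only cites the literature, so your argument has to stand on its own. Steps 1 and 4 are correct. The gap is exactly where you flagged it, and you then step around it rather than through it.

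Step 3 needs $\pi_*\mathcal{V}$ to be a well-defined distribution on $P$. That is, the subspaces $\pi_*\mathcal{V}_x$ must not depend on $x\in\pi^{-1}(y)$, which amounts to $[R_\alpha,\Gamma(\mathcal{V})]\subset\Gamma(\mathcal{V})$. You never prove this. Your fallback is circular: you showed $[R_\alpha,Y]\in\ker\bm\eta$, and $(\mathcal{V}\oplus\ker\dd\bm\eta)\cap\ker\bm\eta=\mathcal{V}$, so integrability of $\mathcal{V}\oplus\langle R_1,\ldots,R_k\rangle$ is \emph{equivalent} to Reeb-invariance of $\mathcal{V}$.

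Invariance is also forced by the conclusion, since $\partial/\partial z^\alpha$ and $\partial/\partial p^\beta_i$ commute. For $k=1$ it is false. Take $\mathbb{R}^3$ with $\eta=\dd z-p\,\dd q$ and $R=\partial_z$, and let $\mathcal{V}=\langle Y\rangle$ with $Y=\partial_p+z(\partial_q+p\,\partial_z)$. This line field lies in $\ker\eta$ and is integrable of rank $nk=1$, yet $[R,Y]=\partial_q+p\,\partial_z\notin\mathcal{V}$. So, with the paper's definition of polarisation, the statement fails for $k=1$ unless Reeb-invariance of $\mathcal{V}$ is added as a hypothesis. In that case no argument from isotropy and dimension can close Step 2, because Lagrangian subspaces are not unique.

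For $k\geq 2$ your instinct is right, and the step closes with a short linear-algebra lemma. Let $(W,\omega^1,\ldots,\omega^k)$ be a $k$-symplectic vector space of dimension $n(k+1)$, and let $V\subset W$ be $nk$-dimensional and isotropic for every $\omega^\alpha$. Then $v\mapsto(\omega^\alpha(v,\cdot))_\alpha$ is well defined and injective as a map $V\to((W/V)^*)^k$. Let $V'$ be another such subspace, with image $U\subset W/V$ of dimension $r$. This map sends $V'\cap V$, of dimension $nk-r$, into $(U^\circ)^k$, of dimension $k(n-r)$. Hence $(k-1)r\le 0$, so $r=0$ and $V'=V$.

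Apply this in $\T_yP$ to $\pi_*\mathcal{V}_x$ and $\pi_*\mathcal{V}_{x'}$ for $x,x'\in\pi^{-1}(y)$; both are $nk$-dimensional and $\omega$-isotropic because $\dd\eta^\alpha=\pi^*\omega^\alpha$. They coincide, so $\bar{\mathcal{V}}=\pi_*\mathcal{V}$ is well defined. It is integrable, because the unique lifts of its sections into $\mathcal{V}$ are $\pi$-related to them. Consequently $\mathcal{V}\oplus\ker\dd\bm\eta=\pi_*^{-1}(\bar{\mathcal{V}})$ is Reeb-invariant, and $[R_\alpha,Y]\in\pi_*^{-1}(\bar{\mathcal{V}})\cap\ker\bm\eta=\mathcal{V}$.

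With this lemma inserted, your Steps 3 and 4 go through as written: the $k$-symplectic Darboux theorem, then the Poincar\'e lemma and the chart check.
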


Theorem~\ref{thm:kcontDarboux} allows us to regard the manifold introduced in Example~\ref{ex:canonical-k-contact-structure} as the canonical local model of polarised co-oriented $k$-contact manifolds (see \cite{LRS_25} for details).

\subsection{On {\it k}-contact Hamiltonian systems}
\label{sub:k-contact-Hamiltonian-systems}

After introducing the geometric framework of $k$-contact geometry, we now address its Hamiltonian formulation of field theories. Special attention is paid to the relation of Hamilton--De Donder-Weyl (HdDW) equations in $k$-contact polarised manifolds and the relation of the described equations with the theory of partial differential equations.

\begin{definition}
The {\it $k$-contact Hamilton--De Donder--Weyl equations} for a $k$-vector field
${\bf X} = (X_1,\ldots,X_k) \in \mathfrak{X}^k(M)$ associated with $h\in \Cinfty(M)$ and $(M,\bm \eta)$ are
\begin{equation}\label{eq:k-contact-HdDW-fields1}
\iota_{\bm X}\dd\bm\eta=\sum_{\alpha=1}^k \inn{X_\alpha} \dd \eta^\alpha
= \dd h - \sum_{\alpha=1}^k (\Lie_{R_\alpha} h)\, \eta^\alpha,\qquad
\iota_{\bm X}\bm\eta=\sum_{\alpha=1}^k \inn{X_\alpha} \eta^\alpha
= -h .
\end{equation}
A $k$-vector field ${\bm X}$ satisfying these equations for some $h$ is called an
\textit{${\bm \eta}$-Hamiltonian $k$-vector field}. Meanwhile,  $(M,\bm \eta,h)$ is called an {\it $\bm\eta$-Hamiltonian system} and $h \in \Cinfty(M)$ is its \textit{Hamiltonian function}.
\end{definition}

The space of $\bm \eta$-Hamiltonian $k$-vector fields is a vector space. From now on, $(M,\bm\eta)$ always stands for a $k$-contact manifold, and $(M,\bm\eta,h)$ is an ${\bm \eta}$-Hamiltonian system. Every $(M,\bm \eta,h)$ induces an ${\bm \eta}$-Hamiltonian $k$-vector field \cite{HLM_26,Riv_22}. Indeed, one has the following result (see \cite[Theorem 3.6]{HLM_26} for more details).

\begin{proposition}\label{prop:k-contact-HdDW-have-solutions}
The $k$-contact Hamilton--De Donder--Weyl equations for a $k$-vector field ${\bf X}$ given by
\eqref{eq:k-contact-HdDW-fields1} admit solutions for every $h\in \Cinfty(M)$. These solutions are not unique only if $k>1$.
\end{proposition}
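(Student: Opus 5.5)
The plan is to reduce the equations \eqref{eq:k-contact-HdDW-fields1} to a pointwise linear-algebra problem and then glue local solutions with a partition of unity. At each $x\in M$, use the splitting $\T_xM=\ker\bm\eta_x\oplus\ker\dd\bm\eta_x$ from Definition~\ref{def:k-form}, with $\ker\dd\bm\eta_x=\langle R_1(x),\ldots,R_k(x)\rangle$ by Theorem~\ref{thm:k-contact-Reeb}. Decompose each $X_\alpha=X_\alpha^H+\sum_\beta f_\alpha^\beta R_\beta$ with $X^H_\alpha\in\Gamma(\ker\bm\eta)$. The second equation then reads $\sum_\alpha f_\alpha^\alpha=-h$. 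The first equation depends only on the $X^H_\alpha$, since $\inn{R_\beta}\dd\eta^\alpha=0$.

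Next, check that the right-hand side $\gamma:=\dd h-\sum_\alpha(\Lie_{R_\alpha}h)\eta^\alpha$ annihilates every $R_\beta$: indeed $\gamma(R_\beta)=R_\beta h-R_\beta h=0$. Hence $\gamma$ is determined by its restriction to $\ker\bm\eta$, and it suffices to solve $\sum_\alpha\inn{X^H_\alpha}\dd\eta^\alpha|_{\ker\bm\eta}=\gamma|_{\ker\bm\eta}$. Consider the linear map
\[
\flat\colon \textstyle\bigoplus^k\ker\bm\eta_x\to(\ker\bm\eta_x)^*,\qquad (v_1,\ldots,v_k)\mapsto \sum_\alpha \inn{v_\alpha}\dd\eta^\alpha_x\big|_{\ker\bm\eta_x}.
\]
The key step is to show that $\flat$ is surjective. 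By duality, surjectivity is equivalent to the following: if $w\in\ker\bm\eta_x$ satisfies $\dd\eta^\alpha_x(v,w)=0$ for all $v\in\ker\bm\eta_x$ and all $\alpha$, then $w=0$. For such a $w$, $\inn{w}\dd\eta^\alpha_x$ also vanishes on the Reeb directions, so it vanishes on all of $\T_xM$. Thus $w\in\ker\dd\bm\eta_x\cap\ker\bm\eta_x=0$. This is the main point, and it uses exactly the direct-sum condition.

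Since $\ker\bm\eta$ is a regular distribution, $\flat$ is a surjective morphism of vector bundles of constant rank. Its kernel is therefore a subbundle, and $\flat$ admits a smooth right inverse, obtained locally or globally via a complement of its kernel, for instance using a Riemannian metric. This produces smooth $X^H_\alpha$. For the Reeb components, take for example $f_\alpha^\beta=-\tfrac{h}{k}\delta_\alpha^\beta$. The resulting ${\bf X}$ satisfies both equations, which proves existence.

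For non-uniqueness when $k>1$, note that the solution set is an affine space modelled on the pairs consisting of an element of $\ker\flat$ and functions $f_\alpha^\beta$ with $\sum_\alpha f^\alpha_\alpha=0$. When $k>1$, the off-diagonal $f_\alpha^\beta$ with $\alpha\neq\beta$ can be chosen freely, so solutions are not unique. (The paper's "only if" phrasing indicates that for $k=1$ the solution is unique. In that case $\flat$ is the isomorphism given by nondegeneracy of $\dd\eta$ on $\ker\eta$, and $f=-h$ is forced.)
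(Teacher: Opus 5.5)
Your proof is correct and complete. It covers both the literal reading of the last sentence (uniqueness for $k=1$) and the intended one (non-uniqueness for $k>1$).

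The paper does not prove the statement itself; it cites \cite[Theorem 3.6]{HLM_26}. The only argument visible in the text is the Darboux-coordinate form \eqref{eq:k-contact-HdDW-fields-Darboux-coordinates1}. From it one reads off locally that the first line fixes $(X_\beta)^i$, while the other two lines fix only the traces $\sum_\alpha (X_\alpha)^\alpha_i$ and $\sum_\alpha (X_\alpha)^\alpha$. Hence local solutions exist, are non-unique for $k>1$, and glue by a partition of unity because the solution set is affine.

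Your route is intrinsic instead. You split $\T M=\ker\bm\eta\oplus\ker\dd\bm\eta$ and note that both sides of the first equation annihilate the Reeb fields. For the left-hand side this follows from $\iota_{R_\beta}\dd\eta^\alpha=0$ and antisymmetry; you use it implicitly when you say it suffices to solve on $\ker\bm\eta$, and it is worth stating. You then reduce everything to pointwise surjectivity of $\flat$, which you obtain from the direct-sum axiom by a clean duality argument.

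What your approach buys is generality and transparency. It applies to every co-oriented $k$-contact manifold. The coordinate route needs Darboux charts, hence a polarisation with $\dim M=n+nk+k$. That condition fails, for instance, for the six- and ten-dimensional two-contactifications used in Section~\ref{sec:examples}, since $3n+2=6$ and $3n+2=10$ have no integer solution. Your argument also isolates exactly which axiom makes the equations solvable. It describes the whole solution set as an affine space modelled on $\Gamma(\ker\flat)$ together with trace-free Reeb components, so the $k=1$ uniqueness comes for free.

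The coordinate route, in exchange, gives explicit component formulas, which is what the paper actually uses in its applications.
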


Let ${\bf X} = (X_1,\ldots,X_k) \in \mathfrak{X}^k(M)$ be a $k$-vector field with local expression in Darboux coordinates
\[
X_\alpha
=
\sum_{i=1}^n (X_\alpha)^i \parder{}{q^i}
\;+\;
\sum_{\beta=1}^k \sum_{i=1}^n (X_\alpha)^\beta_i \parder{}{p_i^\beta}
\;+\;
\sum_{\beta=1}^k (X_\alpha)^\beta \parder{}{z^\beta}\,,
\qquad \alpha=1,\ldots,k\,.
\]
Note that the existence of Darboux coordinates for some $(M,\bm\eta)$, namely putting $\bm\eta$ locally of the form ${\bm \eta}=\sum_{\alpha=1}^k (\dd z^\alpha -\sum_{i=1}^np^\alpha_i\dd q^i)\otimes e_\alpha$, yields that $(M,\bm\eta)$ induces, locally, a polarised $k$-contact manifold. Then, equations \eqref{eq:k-contact-HdDW-fields1} imply
\begin{equation}\label{eq:k-contact-HdDW-fields-Darboux-coordinates1}
\begin{dcases}
(X_\beta)^i = \parder{h}{p_i^\beta}\,,\\
\sum_{\alpha=1}^k (X_\alpha)^\alpha_i
= -\left( \parder{h}{q^i} + \sum_{\alpha=1}^k p_i^\alpha \parder{h}{z^\alpha} \right)\,,\\
\sum_{\alpha=1}^k (X_\alpha)^\alpha
= \sum_{\alpha=1}^k \sum_{j=1}^n p_j^\alpha \parder{h}{p_j^\alpha} - h\,,
\end{dcases}\qquad i=1,\ldots,n,\qquad \beta=1,\ldots,k.
\end{equation}

\begin{definition}\label{dfn:k-contact-Hamiltonian-system}
Given a map $\psi \colon D \subset \mathbb{R}^k \to M$, the \textit{$k$-contact Hamilton--De Donder--Weyl equations} related to the $\bm \eta$-Hamiltonian system $(M,\bm \eta,h)$ are
\begin{equation}\label{eq:k-contact-HdDW}
\sum_{\alpha=1}^k \inn{\psi'_\alpha} \dd \eta^\alpha
= \left( \dd h - \sum_{\alpha=1}^k (\Lie_{R_\alpha} h)\, \eta^\alpha \right) \circ \psi\,,\qquad
\sum_{\alpha=1}^k \inn{\psi'_\alpha} \eta^\alpha
= - h \circ \psi\,.
\end{equation}
\end{definition}

It is worth stressing that a solution for the $k$-contact HdDW equation \eqref{eq:k-contact-HdDW} gives rise to a $k$-contact vector field on ${\rm Im}\,\psi$, which is, locally, an embedded submanifold diffeomorphic to $D$.

In \textit{Darboux coordinates} for a polarised $k$-contact manifold $(M,\bm \eta,\mathcal{V})$, a section $\psi:\mathbb{R}^k\rightarrow M$ takes the form $\psi(t) = (q^i(t),p_i^\alpha(t), z^\alpha(t))$, while equations \eqref{eq:k-contact-HdDW} read
\begin{equation}\label{eq:k-contact-HdDW-Darboux-coordinates}
\begin{dcases}
\parder{q^i}{t^\beta} = \parder{h}{p_i^\beta} \circ \psi\,,\\
\sum_{\alpha=1}^k \parder{p_i^\alpha}{t^\alpha}
= -\left( \parder{h}{q^i} + \sum_{\alpha=1}^k p_i^\alpha \parder{h}{z^\alpha} \right) \circ \psi\,,\\
\sum_{\alpha=1}^k \parder{z^\alpha}{t^\alpha}
= \left( \sum_{\alpha=1}^k \sum_{j=1}^n p_j^\alpha \parder{h}{p_j^\alpha} - h \right) \circ \psi\,,
\end{dcases}
\end{equation}
for $i=1,\ldots,n$ and $\beta=1,\ldots,k$.

The second line of equations in \eqref{eq:k-contact-HdDW-Darboux-coordinates} are called the \textit{balance equations of the polymomenta,} since they are related to the balance of momenta, their conservation laws, and their dissipation in field theories \cite{Riv_22}. Meanwhile, the third line of equations in \eqref{eq:k-contact-HdDW-Darboux-coordinates} is called the \textit{balance equation of the dissipative variables}, since it is related to the balance of the dissipative variables and their dissipation in field theories \cite{Riv_22}. The first line of equations in \eqref{eq:k-contact-HdDW-Darboux-coordinates} are the so-called  \textit{evolution equations of the fields $q^i(t)$}, since they are related to the evolution of the fields in field theories \cite{Riv_22}.

The existence of solutions of equations \eqref{eq:k-contact-HdDW-fields1}
does not imply the integrability of the associated $k$-vector fields. Indeed, as in the $k$-symplectic case, equations \eqref{eq:k-contact-HdDW} and
\eqref{eq:k-contact-HdDW-fields1} are not fully equivalent, since a $k$-vector field solving \eqref{eq:k-contact-HdDW-fields1} does not need to give rise to a solution
of \eqref{eq:k-contact-HdDW}. For instance, this happens when $X_1,\ldots,X_k$ do not commute between themselves at any point of $M$ (see \cite{GGMRR_20} for further details).
 
Let us illustrate our previous theory by studying a damped double sine--Gordon equation. Consider coordinates \((t,x)\) on \(\mathbb R^2\) and let \(Q=\mathbb R\). The phase space is \(\mathcal J_{\mathbb R,2}=(\T^*\mathbb R\oplus \T^*\mathbb R)\times\mathbb R^2\) with coordinates \((u,p^t,p^x,z^t,z^x)\). The variable \(u\) represents a scalar field, while \(p^t\) and \(p^x\) denote the conjugate momenta associated with the independent variables \(t\) and \(x\), respectively. The double sine--Gordon equation appears in several physical contexts, for instance in long Josephson junctions and related non-integrable field-theoretic models \cite{KasTanHataTaka2001,Popov2005,TakacsWagner2006}. Let us consider the Hamiltonian function \(h_{\mathrm{DSG}}\in C^\infty(\mathcal J_{\mathbb R,2})\) given by
\begin{equation}\label{eq:Hamiltonian-DSG}
h_{\mathrm{DSG}}(u,p^t,p^x,z^t,z^x)
=
\frac12\Bigl((p^t)^2-\frac{1}{c^2}(p^x)^2\Bigr)
+a(1-\cos u)+b(1-\cos 2u)
+\frac12(z^t)^2,
\end{equation}
where \(c>0\) and \(a,b\in\mathbb R\) are constants.

The Hamilton--De Donder--Weyl equations for the associated \(\bm\eta_{\mathbb R,2}\)-Hamiltonian two-vector field read
\[
\partial_t u=p^t,\qquad
\partial_x u=-\frac{1}{c^2}p^x,\qquad
\partial_t p^t+\partial_x p^x=-(a\sin u+2b\sin 2u+z^t p^t),
\]
together with
\[
\partial_t z^t+\partial_x z^x
=
\frac12\Bigl((p^t)^2-\frac{1}{c^2}(p^x)^2\Bigr)
-a(1-\cos u)-b(1-\cos 2u)-\frac12(z^t)^2.
\]

Using the first two equations in the third one, we obtain
\begin{equation}\label{eq:damped-DSG}
u_{tt}-c^2u_{xx}+a\sin u+2b\sin 2u+z^t u_t=0.
\end{equation}
Hence, \(h_{\mathrm{DSG}}\) yields a damped double sine--Gordon equation with effective dissipation coefficient \(z^t\). One may regard \(z^t=\lambda(t,x)\) as a prescribed space-time dependent attenuation coefficient, while \(z^x\) is chosen so that the balance equation for the dissipative variables is satisfied. In this way, \eqref{eq:damped-DSG} becomes
\[
u_{tt}-c^2u_{xx}+a\sin u+2b\sin 2u+\lambda(t,x)u_t=0.
\]
Therefore, the quadratic term \(\frac12(z^t)^2\) in the Hamiltonian allows one to encode variable dissipation directly within the two-contact formalism in a non-autonomous manner.

\subsection{HdDW-equations for relevant Hamiltonian functions and second-order PDEs}\label{Sec:SpecialTypes}

The aim of this section is to study particular types of $k$-contact Hamiltonian systems on $\mathcal{J}_{Q,k}$, their HdDW equations as well as their use to study higher-order systems of PDEs with relevant applications \cite{Riv_22}.

The following definition is a generalisation of the notion of regularity for Hamiltonian functions in the symplectic and contact cases to the $k$-contact realm developed in \cite{deLucasLangeSardonRivas2026} for studying systems of PDEs via HdDW equations. It is also a natural analogue of the notion of regularity for Lagrangian functions in the $k$-contact setting appearing in \cite{Riv_22}.

\begin{definition}\label{def:regular_hamiltonian}
We say that \((\mathcal{J}_{Q,k},\bm \eta_{Q,k},h)\) is \emph{regular} if the fibre derivative \(\mathbb Fh:\bigoplus^k\T^*Q\times\mathbb R^k\to \bigoplus^k\T Q\times\mathbb R^k\), given in canonical coordinates by \(\mathbb Fh(q^i,p_i^\alpha,z^\alpha)=(q^i,\partial h/\partial p_i^\alpha,z^\alpha)\), is a local diffeomorphism. If \(\mathbb Fh\) is a global diffeomorphism, then \((\mathcal{J}_{Q,k},\bm \eta_{Q,k},h)\) is said to be \emph{hyperregular}.
\end{definition}

In a slight abuse of notation, the terms `regular' and `hyperregular' are only used in reference to $h$ when it is understood from context that the notion is with respect to $(\mathcal{J}_{Q,k},\bm\eta_{Q,k})$. 

An analogue of the following proposition for the Lagrangian setting can be found in \cite[Proposition 8.1.6]{Riv_22}.

\begin{proposition}\label{prop:equivalent_regular_hamiltonian}
Given \((\mathcal{J}_{Q,k},\bm\eta_{Q,k}, h)\), the following conditions are equivalent: 
\begin{enumerate}
\item \(h\) is regular.
\item The Hessian matrix \(\bigl(\partial^2 h/\partial p_i^\alpha\partial p_j^\beta\bigr)\) is non-singular at every point.
\item The relations \(v_\alpha^i=\partial h/\partial p_i^\alpha\) can be solved locally for the variables \(p_i^\alpha\) as smooth functions of \((q^i,v_\alpha^i,z^\alpha)\).
\end{enumerate}
\end{proposition}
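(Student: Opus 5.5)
The plan is to prove the cycle of equivalences $1\Leftrightarrow 2\Leftrightarrow 3$ by working in the canonical coordinates $\{q^i,p_i^\alpha,z^\alpha\}$ of $\mathcal{J}_{Q,k}$. In these coordinates $\mathbb Fh(q^i,p_i^\alpha,z^\alpha)=(q^i,\partial h/\partial p_i^\alpha,z^\alpha)$, with induced coordinates $\{q^i,v^i_\alpha,z^\alpha\}$ on $\bigoplus^k\T Q\times\mathbb R^k$. Both spaces have the same dimension $n+nk+k$. Hence, by the inverse function theorem, $\mathbb Fh$ is a local diffeomorphism if and only if its Jacobian is invertible at every point.

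For $1\Leftrightarrow 2$, I will compute the Jacobian of $\mathbb Fh$, ordering the variables as $(q,z,p)$ in the domain and $(q,z,v)$ in the target. It is block lower-triangular:
\[
\begin{pmatrix} \mathrm{Id}_n & 0 & 0\\ 0 & \mathrm{Id}_k & 0\\ \ast & \ast & \bigl(\partial^2 h/\partial p_j^\beta\partial p_i^\alpha\bigr)\end{pmatrix}.
\]
Its determinant is therefore the determinant of the $nk\times nk$ Hessian with respect to the momenta. This gives the equivalence pointwise, and hence globally. Since regularity is a statement about the map $\mathbb Fh$, it does not depend on the chart, so checking it in canonical charts covering $\mathcal{J}_{Q,k}$ suffices.

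For $2\Rightarrow 3$, fix a point $(q_0,p_0,z_0)$ and set $v_0=\partial h/\partial p(q_0,p_0,z_0)$. I will apply the implicit function theorem to the map $F(q,v,z,p)=v-\partial h/\partial p(q,p,z)$. Its partial derivative with respect to $p$ is minus the Hessian, which is invertible by item 2. This produces smooth local functions $p_i^\alpha(q,v,z)$ solving $v^i_\alpha=\partial h/\partial p_i^\alpha$. For $3\Rightarrow 2$, suppose a smooth local solution $p=\phi(q,v,z)$ exists with $\partial h/\partial p(q,\phi(q,v,z),z)=v$. Differentiating this identity with respect to $v$ gives $\mathrm{Hess}_p h\cdot D_v\phi=\mathrm{Id}_{nk}$, so the Hessian is invertible at every point of the form $(q,\phi(q,v,z),z)$.

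The one delicate step is making item 3 precise. The local solvability must be understood around every point of $\mathcal{J}_{Q,k}$, with the local inverse defined on a neighbourhood of $\mathbb Fh(x)$ and passing through $x$; in other words, $\phi$ is a local inverse of $\mathbb Fh$ in the fibre variables. Without this reading, the points reached by $\phi$ need not exhaust $\mathcal{J}_{Q,k}$. I will adopt this interpretation, under which $3\Rightarrow 1$ is also immediate, because $(q,v,z)\mapsto(q,\phi(q,v,z),z)$ is a smooth local inverse of $\mathbb Fh$. The remaining steps are routine applications of the inverse and implicit function theorems.
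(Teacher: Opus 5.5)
Your proposal is correct and follows the standard argument the paper has in mind. The paper gives no written proof: it cites the Lagrangian analogue in Rivas's thesis, and for the $\pi$-regular version it says the result is a usual consequence of the inverse function theorem applied to the fibre derivative. Your block-triangular Jacobian computation and the inverse and implicit function theorems supply exactly that argument.

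Your reading of item 3, as smooth local solvability through every point of $\mathcal{J}_{Q,k}$, is something the paper leaves implicit. It is genuinely needed for $3\Rightarrow 2$ to give invertibility of the Hessian everywhere.
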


Let us now formalise for our setting a result that describes how $k$-contact geometry has been used to study higher-order systems of PDEs in the literature. In particular, we will focus on the case of Hamiltonian functions that are polynomial in the dissipative variables, which is a common situation in many models due to its simplicity and applications (cf. \cite{Riv_22,Riv_23,RSS_24,GRR_22}). The following result states that, for such Hamiltonians, every integral section of an \(\bm\eta_{Q,k}\)-Hamiltonian   $k$-vector field associated with the same Hamiltonian function projects onto a map whose coordinates satisfy the same second-order system of PDEs. It also explains certain additional properties of this case, which will be important for the applications of $k$-contact geometry to the study of PDEs in the literature and the rest of our paper.

\begin{theorem}\label{thm:linear_z_same_second_order_system}
Let \(h\in C^\infty(\mathcal J_{Q,k})\) take the form \(h(q^i,p_i^\alpha,z^\alpha)=g(q^i,p_i^\alpha)+\sum_{\alpha=1}^kA_\alpha (z^\alpha)^{\lambda_\alpha}\), where \(A_1,\dots,A_k\) are real constants, $\lambda_1,\ldots,\lambda_k$ are non-negative integers, and \(h\) is regular. Let  \(\psi(t)=(q^i(t),p_i^\alpha(t),z^\alpha(t))\) be an integral section of an \(\bm\eta_{Q,k}\)-Hamiltonian \(k\)-vector field \({\bm X}_{h}\). Then, \(\psi(t)\) projects onto a map \(q(t)=(q^1(t),\dots,q^n(t))\) satisfying a second-order system of PDEs. More precisely, if \(p_i^\alpha=P_i^\alpha(q ,v_\beta^j)\) denotes the local inverse of the fibre derivative, determined by \(v_\beta^i=\partial h/\partial p_i^\beta=\partial g/\partial p_i^\beta\), then the functions \(q^i(t)\) satisfy the second-order system of PDEs of the form
\begin{equation}\label{eq:second_order_system_linear_z}
\sum_{\alpha=1}^k \frac{\partial}{\partial t^\alpha}\Bigl(P_i^\alpha(q,v_\beta)\Bigr)
+\frac{\partial g}{\partial q^i}\Bigl(q,P_j^\gamma(q,v_\beta)\Bigr)
+\sum_{\alpha=1}^k R(z_\alpha,\lambda_\alpha) A_\alpha P_i^\alpha(q,v_\beta)=0,\qquad i=1,\ldots,n\,,
\end{equation}
where \(q=q(t)\) and \(v_\beta =\partial q /\partial t^\beta\) and $R(z ,\mu)=z ^{\mu-1}\mu$  for $\mu\in \mathbb{N}_+$ and $R(z ,0)=0$ for every $z\in \mathbb{R}$.
\end{theorem}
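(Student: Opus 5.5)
The proof works directly with the local expression \eqref{eq:k-contact-HdDW-Darboux-coordinates} of the HdDW equations in the canonical coordinates $\{q^i,p_i^\alpha,z^\alpha\}$ of $\mathcal J_{Q,k}$. Since $\psi$ is an integral section of an $\bm\eta_{Q,k}$-Hamiltonian $k$-vector field $\bm X_h$, the components $\psi'_\alpha$ coincide with $X_\alpha\circ\psi$. Hence the equations \eqref{eq:k-contact-HdDW-fields-Darboux-coordinates1} evaluated along $\psi$ turn into \eqref{eq:k-contact-HdDW-Darboux-coordinates}, with the components of $\bm X_h$ replaced by partial derivatives of the coordinates of $\psi$. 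So the first step is to record that $\psi$ satisfies the three families of equations in \eqref{eq:k-contact-HdDW-Darboux-coordinates}. Only the first two families will be used; the balance equation of the dissipative variables plays no role in the projected system.

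Next, the first family reads $\partial q^i/\partial t^\beta=\partial h/\partial p_i^\beta\circ\psi=\partial g/\partial p_i^\beta\circ\psi$, because the dissipative term $\sum_\alpha A_\alpha(z^\alpha)^{\lambda_\alpha}$ does not depend on the momenta. Regularity of $h$, via Proposition~\ref{prop:equivalent_regular_hamiltonian}, lets us invert $v^i_\beta=\partial g/\partial p^\beta_i$ locally as $p_i^\alpha=P_i^\alpha(q,v_\beta)$. Since $\partial g/\partial p$ does not involve $z$, the functions $P_i^\alpha$ are independent of $z$. A point to check here is that the local inverse furnished by Proposition~\ref{prop:equivalent_regular_hamiltonian} may a priori depend on $z$. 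However, the defining relation is $z$-independent, so by uniqueness of the local inverse (implicit function theorem) it is independent of $z$. Substituting $v_\beta=\partial q/\partial t^\beta$ then gives $p_i^\alpha(t)=P_i^\alpha(q(t),\partial q/\partial t)$ along $\psi$, on the domain where this local inverse applies; this is the sense in which the result is local.

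Finally, we substitute into the balance equations of the polymomenta. We have $\partial h/\partial q^i=\partial g/\partial q^i$ and $\partial h/\partial z^\alpha=A_\alpha\lambda_\alpha (z^\alpha)^{\lambda_\alpha-1}=A_\alpha R(z^\alpha,\lambda_\alpha)$ when $\lambda_\alpha\ge1$, while this derivative vanishes when $\lambda_\alpha=0$, matching $R(z,0)=0$. The second line of \eqref{eq:k-contact-HdDW-Darboux-coordinates} then becomes
\[
\sum_{\alpha=1}^k\frac{\partial}{\partial t^\alpha}P_i^\alpha(q,v_\beta)+\frac{\partial g}{\partial q^i}\bigl(q,P(q,v)\bigr)+\sum_{\alpha=1}^k R(z^\alpha,\lambda_\alpha)A_\alpha P_i^\alpha(q,v)=0,
\]
which is \eqref{eq:second_order_system_linear_z}. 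Because $P_i^\alpha$ depends on first derivatives of $q$, the total derivative $\partial_{t^\alpha}P_i^\alpha$ produces second derivatives of $q$, so the system is of second order.

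The main subtlety is the interpretation of the $z^\alpha$ appearing in $R(z^\alpha,\lambda_\alpha)$ when some $\lambda_\alpha\ge2$. Then the coefficient is $z^\alpha(t)$ along $\psi$, which is not a function of $q$ and its derivatives. The system is therefore "the same" for all integral sections only when it is read as coupled to $z(t)$ (as in the damped double sine--Gordon example), or as a genuine PDE system in $q$ alone when every $\lambda_\alpha\le1$. I would state this explicitly: for $\lambda_\alpha\in\{0,1\}$ the coefficients are constants; otherwise $z^\alpha(t)$ enters as a coefficient that is only constrained by the third equation of \eqref{eq:k-contact-HdDW-Darboux-coordinates}.
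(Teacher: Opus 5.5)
Your proposal is correct and follows essentially the argument the paper relies on. The paper states this theorem without a separate written proof, but it uses exactly this Darboux-coordinate computation in its examples and in the proof of Proposition~\ref{prop:regularity_not_hyperbolicity}: read off $\partial q^i/\partial t^\beta=\partial g/\partial p_i^\beta$, invert it by regularity with a $z$-independent $P_i^\alpha$, and substitute into the polymomentum balance equations using $\partial h/\partial z^\alpha=A_\alpha R(z^\alpha,\lambda_\alpha)$. Your remark that for $\lambda_\alpha\geq 2$ the value $z^\alpha(t)$ along $\psi$ enters as a coefficient, so that the system is free of dissipative variables only when all $\lambda_\alpha\le 1$, also matches how the paper uses the result, for instance when it prescribes $z^t=\lambda(t,x)$.
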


The system of second-order PDEs \eqref{eq:damped-DSG} is the particular case of the second-order system \eqref{eq:second_order_system_linear_z} for the Hamiltonian function \eqref{eq:Hamiltonian-DSG}, which is polynomial in the dissipative variables in the manner given in Theorem~\ref{thm:linear_z_same_second_order_system}. Hence, Theorem~\ref{thm:linear_z_same_second_order_system} ensures that every integral section of an \(\bm\eta_{Q,k}\)-Hamiltonian  $k$-vector field associated with the Hamiltonian \eqref{eq:Hamiltonian-DSG} projects onto a solution of the damped double sine-Gordon equation \eqref{eq:damped-DSG}.

It is worth noting that higher-order systems of PDEs can be written as second-order systems of PDEs by considering higher-order partial derivatives of the coordinates of solutions as new variables. In this manner, the resulting system can be described by our $k$-contact formalism.

Moreover,  Theorem \ref{thm:linear_z_same_second_order_system} can be modified to consider a Hamiltonian function of the form \(h(q^i,p_i^\alpha,z^\alpha)=g(q^i,p_i^\alpha)+\sum_{\alpha=1}^k A_\alpha(q^i)(z^\alpha)^{\lambda_\alpha}\), where \(A_1,\dots,A_k\in C^\infty(Q)\). One may also consider the particular case of Theorem \ref{thm:linear_z_same_second_order_system} given by \(h(q^i,p_i^\alpha,z^\alpha)=g(q^i,p_i^\alpha)+\sum_{\alpha=1}^k A_\alpha z^\alpha\), where \(A_1,\dots,A_k\in \mathbb{R}\). In the latter case, the second-order system of PDEs satisfied by the coordinates of the projection of integral sections of \(\bm\eta_{Q,k}\)-Hamiltonian $k$-vector fields associated with \(h\) to $Q$ is independent of the dissipative variables \(z^1,\dots,z^k\).

\begin{proposition}\label{prop:regularity_not_hyperbolicity}
Let \((J_{Q,k},\bm\eta_{Q,k},h)\) be a canonical \(k\)-contact Hamiltonian system with $\dim Q=1$, and assume  the associated  second-order system of PDEs as in Theorem~\ref{thm:linear_z_same_second_order_system}.   
If $A=\left(\frac{\partial^2 h}{\partial p^\alpha\partial p^\beta}\right)$ is definite, the resulting second-order equation is elliptic. If $A$ has signature $(k-1,1)$ or $(1,k-1)$, one obtains a hyperbolic equation. If $A$ is indefinite with at least two positive and two negative eigenvalues, the equation is of ultrahyperbolic type.  

\end{proposition}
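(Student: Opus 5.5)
The argument reduces to computing the principal symbol of the second-order equation \eqref{eq:second_order_system_linear_z} in the case $n=\dim Q=1$. With a single field $q(t)$, the inverse fibre derivative $p^\alpha=P^\alpha(q,v_\beta)$ is defined by $v_\beta=\partial g/\partial p^\beta$. Differentiating the identity $v_\beta=\partial g/\partial p^\beta(q,P(q,v))$ with respect to $v_\gamma$ gives
\[
\frac{\partial P^\alpha}{\partial v_\gamma}=(A^{-1})^{\alpha\gamma},
\]
where $A=\bigl(\partial^2 h/\partial p^\alpha\partial p^\beta\bigr)$ is evaluated along the solution. This matrix is invertible by Proposition~\ref{prop:equivalent_regular_hamiltonian}, since $h$ is regular. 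Because $h=g+\sum_\alpha A_\alpha(z^\alpha)^{\lambda_\alpha}$, the Hessian of $h$ in the momenta coincides with that of $g$.

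Next, expand the first term of \eqref{eq:second_order_system_linear_z} with the chain rule:
\[
\sum_\alpha \frac{\partial}{\partial t^\alpha}P^\alpha(q,v)=\sum_{\alpha,\gamma}(A^{-1})^{\alpha\gamma}\frac{\partial^2 q}{\partial t^\alpha\partial t^\gamma}+\sum_\alpha\frac{\partial P^\alpha}{\partial q}\frac{\partial q}{\partial t^\alpha}.
\]
The remaining terms in \eqref{eq:second_order_system_linear_z}, namely $\partial g/\partial q$ and the dissipative terms $R(z_\alpha,\lambda_\alpha)A_\alpha P^\alpha$, involve at most first derivatives of $q$. The dependence on $z^\alpha(t)$ enters only as lower-order coefficients. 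The equation is therefore quasilinear with principal coefficient matrix $A^{-1}$, which is symmetric because $A$ is a Hessian.

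Finally, classify the equation. By Sylvester's law of inertia, $A^{-1}$ has the same signature as $A$, since $A^{-1}=A^{-1}AA^{-1}$ is congruent to $A$. Hence the three cases follow:
\begin{itemize}
\item a definite $A$ gives a definite principal symbol, so the equation is elliptic;
\item signature $(k-1,1)$ or $(1,k-1)$ gives a Lorentzian principal symbol, so the equation is hyperbolic (wave type);
\item at least two positive and two negative eigenvalues give an ultrahyperbolic equation, by the standard classification of second-order scalar equations.
\end{itemize}
Nondegeneracy, and hence the absence of a parabolic case, is guaranteed by regularity.

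The only point needing care is that $A$ depends on $(q,p)$. The classification is therefore pointwise along the solution: the type is determined at each point by the signature of $A$ evaluated at $\psi(t)$. The statement should be read either as assuming a constant signature throughout or as a pointwise statement. I would note this explicitly. Apart from that, the argument is the chain-rule computation above.
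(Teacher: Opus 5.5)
Your proposal is correct and follows essentially the same route as the paper. In both, the principal part is $\sum_{\alpha,\beta}(\partial P^\alpha/\partial v_\beta)\,\partial^2 q/\partial t^\alpha\partial t^\beta$, regularity gives $(\partial P^\alpha/\partial v_\beta)=A^{-1}$, and the classification follows because $A$ and $A^{-1}$ have the same inertia; your explicit derivation of $\partial P/\partial v=A^{-1}$ and your remark that the type is determined pointwise along the solution are useful clarifications the paper leaves implicit.
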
 
 
\begin{proof}
Since \(\dim Q=1\), the reconstructed second-order system reduces to a single second-order equation. Hence its analytic type is determined by the quadratic form associated with its principal symbol. By Theorem~\ref{thm:linear_z_same_second_order_system}, if \(v_\beta=\partial q/\partial t^\beta\), the reconstructed equation has the form
\[
\sum_{\alpha=1}^k \frac{\partial}{\partial t^\alpha}\Bigl(P^\alpha(q,v_\beta)\Bigr)
+\frac{\partial g}{\partial q}\Bigl(q,P^\gamma(q,v_\beta)\Bigr)
+\sum_{\alpha=1}^k R(z^\alpha,\lambda_\alpha)A_\alpha P^\alpha(q,v_\beta)=0.
\]
Therefore, the principal part of the equation, namely the one concerning second-order derivatives of the unknown variables, is
\[
\sum_{\alpha,\beta=1}^k \frac{\partial P^\alpha}{\partial v_\beta}(q,v_\gamma)\,
\frac{\partial^2 q}{\partial t^\alpha\partial t^\beta},
\]
and its principal symbol is the quadratic form
\[
\sigma(\xi)=\sum_{\alpha,\beta=1}^k \frac{\partial P^\alpha}{\partial v_\beta}(q,v_\gamma)\,\xi_\alpha\xi_\beta.
\]

Now, the regularity of \(h\) implies that the Hessian matrix
$A$
is invertible. Since \(P^\alpha\) is obtained locally by inverting the relations \(v_\alpha=\partial h/\partial p^\alpha\), the matrix
\(
\left(\frac{\partial P^\alpha}{\partial v_\beta}\right)
\)
is precisely \(A^{-1}\). Hence the principal symbol is determined by the quadratic form associated with \(A^{-1}\), and therefore by the index of \(A\), since \(A\) and \(A^{-1}\) have the same inertia.

Consequently, if \(A\) is definite, then \(\sigma\) is definite and the equation is elliptic. If \(A\) has signature \((k-1,1)\) or \((1,k-1)\), then \(\sigma\) has the same signature and the equation is hyperbolic. More generally, if \(A\) is indefinite with at least two positive and two negative eigenvalues, then the equation is of ultrahyperbolic type.

\end{proof}

\begin{remark}
    The assumption \(\dim Q=1\) is essential in the previous proposition since only in the scalar case is the analytic type determined by a single quadratic form. For higher-dimensional target manifolds, one obtains instead a matrix-valued principal symbol, and the above classification must be replaced by the corresponding analysis of the full principal symbol of the system.
\end{remark}
\section{HdDW equations for {\it k}-contactifications}
\label{sec:tools_apps}
Let us provide a general structure for the HdDW equations in $k$-contactifications of exact $k$-symplectic manifolds. This structure will be useful to study particular types of Hamiltonian functions and their associated second-order systems of PDEs in the applications of Section~4.

\begin{proposition}\label{prop:split_HdDW_contactification}
Let \((P,\bm\omega=-\dd \bm\theta)\) be an exact \(k\)-symplectic manifold and let \((M=P\times\mathbb R^k,\bm\eta_{\bm\theta})\) be its \(k\)-contactification and $\rho:M\rightarrow P$ the natural projection. Then, a section \(\psi=(\phi,z^1,\dots,z^k)\colon D\subset\mathbb R^k\to M\) satisfies the \(k\)-contact Hamilton--De Donder--Weyl equations for \(h\in C^\infty(M)\) if and only if
\begin{equation}\label{eq:split_HdDW_base}
\sum_{\alpha=1}^k \iota_{\phi'_\alpha}\omega^\alpha=\left(\dd_P h+\sum_{\alpha=1}^k \parder{h}{z^\alpha}\,\rho^*\theta^\alpha\right)\circ\psi,\qquad 
\sum_{\alpha=1}^k \parder{z^\alpha}{t^\alpha}=\left(\sum_{\alpha=1}^k\rho^*\theta^\alpha(\phi'_\alpha)-h\right)\circ\psi.
\end{equation}
In particular, the equations on the \(k\)-symplectic factor \(P\) and the equation for the dissipative variables split naturally.
\end{proposition}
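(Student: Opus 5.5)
The plan is to substitute the explicit form of $\bm\eta_{\bm\theta}$ from Example~\ref{ex:contactification-k-symplectic-manifold} into the $k$-contact HdDW equations \eqref{eq:k-contact-HdDW}. Then we decompose the resulting one-form equation along the product structure $\T^*M=\rho^*\T^*P\oplus\langle \dd z^1,\ldots,\dd z^k\rangle$. First, since $\eta^\alpha=\dd z^\alpha-\rho^*\theta^\alpha$, we have $\dd\eta^\alpha=-\rho^*\dd\theta^\alpha=\rho^*\omega^\alpha$. Second, the Reeb vector fields are $R_\alpha=\partial/\partial z^\alpha$: indeed $\iota_{\partial_{z^\alpha}}\eta^\beta=\delta_\alpha^\beta$ because $\rho^*\theta^\beta$ annihilates vertical vectors, and $\iota_{\partial_{z^\alpha}}\rho^*\omega^\beta=0$. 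By the uniqueness in Theorem~\ref{thm:k-contact-Reeb}, these are the Reeb fields, so $\Lie_{R_\alpha}h=\partial h/\partial z^\alpha$.

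Write $\psi'_\alpha=\phi'_\alpha+\partial_{t^\alpha}z^\beta\,\partial_{z^\beta}$, where $\phi'_\alpha$ is regarded as tangent to the $P$-factor. Then the left-hand side of the first equation is
$$
\sum_\alpha\iota_{\psi'_\alpha}\rho^*\omega^\alpha=\rho^*\Bigl(\sum_\alpha\iota_{\phi'_\alpha}\omega^\alpha\Bigr),
$$
which has no $\dd z$-component. On the right-hand side, we split $\dd h=\dd_P h+\sum_\beta \partial_{z^\beta}h\,\dd z^\beta$, where $\dd_P$ denotes the differential along $P$ at fixed $z$. This gives
$$
\dd h-\sum_\alpha \partial_{z^\alpha}h\,(\dd z^\alpha-\rho^*\theta^\alpha)=\dd_Ph+\sum_\alpha\partial_{z^\alpha}h\,\rho^*\theta^\alpha .
$$
The $\dd z$-components therefore cancel identically. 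This cancellation is the key observation, and it is the step I expect to need the most care: keeping track of which forms are horizontal. It shows that the first HdDW equation contains no condition on the $\dd z$-directions. The first equation is thus equivalent to equality of the $P$-horizontal parts, which is exactly the first equation of \eqref{eq:split_HdDW_base}. To make this rigorous, I would evaluate both sides on horizontal vectors and on each $\partial_{z^\beta}$ separately, using that $\T M=\T P\oplus\T\R^k$ pointwise.

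For the second equation, we compute
$$
\sum_\alpha\iota_{\psi'_\alpha}\eta^\alpha=\sum_\alpha\partial_{t^\alpha}z^\alpha-\sum_\alpha\rho^*\theta^\alpha(\phi'_\alpha).
$$
Setting this equal to $-h\circ\psi$ and rearranging yields the second equation of \eqref{eq:split_HdDW_base}. Each step is reversible, so the two systems are equivalent.

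Finally, note the structure of the result. The $P$-equation depends on $z$ only through the coefficients $h$ and $\partial h/\partial z^\alpha$. The $z$-equation involves only the divergence $\sum_\alpha\partial_{t^\alpha}z^\alpha$. This is the asserted natural splitting.
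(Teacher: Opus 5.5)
Your proof is correct and follows essentially the same route as the paper. Like the paper, you identify $R_\alpha=\partial/\partial z^\alpha$, use $\dd\bm\eta_{\bm\theta}=\rho^*\bm\omega$ and $\eta^\alpha=\dd z^\alpha-\rho^*\theta^\alpha$ to rewrite both sides of the first HdDW equation, and compute $\sum_\alpha\iota_{\psi'_\alpha}\eta^\alpha$ for the second; your checks of the Reeb fields via uniqueness and of the vanishing of the $\dd z$-components on both sides just make explicit steps the paper takes for granted.
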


\begin{proof}
Since \(R_\alpha=\partial/\partial z^\alpha\) for $\alpha=1,\ldots,k$, the \(k\)-contact Hamilton--De Donder--Weyl equations read
\[
\sum_{\alpha=1}^k \iota_{\psi'_\alpha}\dd\eta_\theta^\alpha=\left(\dd h-\sum_{\alpha=1}^k \parder{h}{z^\alpha}\eta_\theta^\alpha\right)\circ\psi,\qquad
\sum_{\alpha=1}^k \iota_{\psi'_\alpha}\eta_\theta^\alpha=-h\circ\psi.
\]
Since \(\dd{\bm \eta}_{\bm\theta}=\rho^*\bm \omega\), one has that \(\iota_{\psi'_\alpha}\dd\eta_{\theta}^\alpha=\iota_{\phi'_\alpha}\rho^*\omega^\alpha\). Moreover, since \(\eta^\alpha=\dd z^\alpha-\rho^*\theta^\alpha\), one has
\[
\dd h-\sum_{\alpha=1}^k \parder{h}{z^\alpha}\eta_\theta^\alpha=\dd_P h+\sum_{\alpha=1}^k \parder{h}{z^\alpha}\rho^*\theta^\alpha.
\]
This yields the first equation in \eqref{eq:split_HdDW_base}. On the other hand,
\[
\sum_{\alpha=1}^k \iota_{\psi'_\alpha}\eta_\theta^\alpha=\sum_{\alpha=1}^k\left(\parder{z^\alpha}{t^\alpha}-\rho^*\theta^\alpha(\phi'_\alpha)\right),
\]
which gives the second equation in \eqref{eq:split_HdDW_base}. 
\end{proof}

The main point in applications to second-order systems of PDEs is that, for particular types of Hamiltonian functions, the second equation in \eqref{eq:split_HdDW_base} becomes a balance equation for the dissipative variables, while the first one is a second-order system of PDEs for the coordinates of the projection $\phi$ of \(\psi\) onto \(P\). This projection may be independent of the dissipative variables,  or it may happen that dissipative variables can be fixed of a particular form so as to the balance equation of the dissipation variables always holds. The following result describes this situation for Hamiltonian functions that are affine in the dissipative variables.
\begin{corollary}\label{cor:affine_dissipative_variables}
In the setting of Proposition~\ref{prop:split_HdDW_contactification}, define \(h(x,z)=g(x)+\sum_{\alpha=1}^k a_\alpha(x)\,z^\alpha\), where \(x\in P\). Then,  equations  \eqref{eq:split_HdDW_base}  become
\begin{equation}\label{eq:split_HdDW_affine_base}
\sum_{\alpha=1}^k \iota_{\phi'_\alpha}\omega^\alpha=\left(\dd g+\sum_{\alpha=1}^k z^\alpha\,\dd a_\alpha+\sum_{\alpha=1}^k a_\alpha\,\rho^*\theta^\alpha\right)\circ\psi
\end{equation}
and
\begin{equation}\label{eq:split_HdDW_affine_diss}
\sum_{\alpha=1}^k \parder{z^\alpha}{t^\alpha}=\left(\sum_{\alpha=1}^k\rho^*\theta^\alpha(\phi'_\alpha)-g-\sum_{\alpha=1}^k a_\alpha z^\alpha\right)\circ\psi.
\end{equation} 
\end{corollary}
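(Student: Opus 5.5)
The plan is to substitute the affine Hamiltonian \(h(x,z)=g(x)+\sum_{\alpha=1}^k a_\alpha(x)z^\alpha\) directly into the two equations \eqref{eq:split_HdDW_base} of Proposition~\ref{prop:split_HdDW_contactification}. Both identities then follow from computing two objects: the partial derivatives \(\partial h/\partial z^\alpha\), and the fibred differential \(\dd_P h\) of \(h\) along \(P\) with \(z\) frozen.

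For the first equation, I would note that \(\partial h/\partial z^\alpha=a_\alpha\), which is understood as \(\rho^*a_\alpha\) on \(M\). Next I would compute \(\dd_P h\). Freezing \(z\), the function \(h(\cdot,z)\) on \(P\) equals \(g+\sum_\alpha z^\alpha a_\alpha\), so its differential along \(P\) is \(\dd g+\sum_\alpha z^\alpha\,\dd a_\alpha\). Here \(g\) and \(a_\alpha\) are pulled back via \(\rho\), and there is no \(\dd z^\alpha\) term, since the \(z\)-direction has been removed.

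This is the only point that needs care. In the proof of Proposition~\ref{prop:split_HdDW_contactification} the expression \(\dd h-\sum_\alpha(\partial h/\partial z^\alpha)\eta^\alpha_\theta\) was rewritten as \(\dd_P h+\sum_\alpha(\partial h/\partial z^\alpha)\rho^*\theta^\alpha\). I would check explicitly that the \(\dd z^\alpha\) components cancel. Indeed, \(\dd h=\dd_P h+\sum_\alpha a_\alpha\,\dd z^\alpha\), and subtracting \(\sum_\alpha a_\alpha(\dd z^\alpha-\rho^*\theta^\alpha)\) leaves \(\dd_P h+\sum_\alpha a_\alpha\rho^*\theta^\alpha\). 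Inserting these expressions and evaluating along \(\psi\) gives \eqref{eq:split_HdDW_affine_base}.

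For the second equation, I would write \(h\circ\psi=g\circ\phi+\sum_\alpha (a_\alpha\circ\phi)z^\alpha\). Substituting this into the right-hand side of the balance equation in \eqref{eq:split_HdDW_base} yields \eqref{eq:split_HdDW_affine_diss} immediately.

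There is no real obstacle in this argument. The only subtlety is consistent bookkeeping: pull-backs by \(\rho\) must be handled uniformly, and \(\dd_P\) must be read as the exterior derivative along \(P\) at fixed \(z\). With that, the two substitutions give exactly the stated equations.
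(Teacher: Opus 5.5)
Your proposal is correct and follows the paper's own proof: compute \(\partial h/\partial z^\alpha=a_\alpha\) and \(\dd_P h=\dd g+\sum_{\alpha=1}^k z^\alpha\,\dd a_\alpha\), then substitute these into the two equations of Proposition~\ref{prop:split_HdDW_contactification}. Your explicit check that the \(\dd z^\alpha\) terms cancel repeats a step already done in the proof of that proposition, so it is harmless but not needed.
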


\begin{proof}
Indeed, \(\parder{h}{z^\alpha}=a_\alpha(x)\) and \(\dd_P h=\dd g+\sum_{\alpha=1}^k z^\alpha\,\dd a_\alpha\). Substituting these identities into Proposition~\ref{prop:split_HdDW_contactification} yields the result.
\end{proof}

\subsection{Adapted reduced two-contact phase spaces and partial regularity}
Let us explain a type of exact two-symplectic model that will appear in many applications in Section~\ref{sec:examples}. In those examples, the reduced exact two-symplectic phase space is of cotangent type, in the sense that one component of the primitive is the tautological one-form of a cotangent bundle, whereas the other one is the pull-back of a primitive on its base manifold. This structure is precisely what allows one to distinguish the field variables from the spatial momenta and, after imposing a suitable regularity condition on the Hamiltonian, to eliminate the latter and recover a second-order system of partial differential equations. We now formalise this observation.

\begin{definition}\label{def:adapted_exact_two_symplectic}
Let \(\tau:\T^*N\to N\) be the canonical projection of the cotangent bundle of an \(n\)-dimensional manifold $N$, let \(\theta_N\) be the Liouville one-form on \(\T^*N\), and let \(\theta_e\in\Omega^1(\T^*N)\) be basic (it is the pull-back to $N$ of a differential one-form on $N$). Then, \(\T^*N\) is endowed with {\it an adapted exact two-symplectic structure} if \(\bm\omega=-\dd\theta_e\otimes e_1-\dd\theta_N\otimes e_2\). The associated two-contactification is the manifold \(M:=\T^*N\times\mathbb R^2\) endowed with the two-contact form \(\bm\eta_\theta=(\dd z^t-\rho^*\theta_e)\otimes e_1+(\dd z^x-\rho^*\theta_N)\otimes e_2\), where \(\rho:M\to \T^*N\) is the canonical projection and \(z^t,z^x\) are the canonical coordinates on \(\mathbb R^2\), pulled back to \(M\) via the projection from $M$ onto \(\mathbb R^2\).
\end{definition}

Note that we assume that $\theta_e$ is a basic form on $\T^*N$, which restricts the class of adapted exact two-symplectic structures. This assumption is motivated by the fact that, in the applications in the following sections, $\theta_e$ relates the variables in  the base manifold $N$, while $\theta_N$ is able to distinguish the field variables from the spatial momenta.

In local coordinates \(y^1,\dots,y^n\) on \(N\), with induced fibre coordinates \(\pi_1,\dots,\pi_n\) on \(T^*N\), one has \(\theta_e=\sum_{i=1}^n (\theta_e)_i(y) \,\dd y^i\) and \(\theta_N=\sum_{i=1}^n \pi_i\,\dd y^i\), so that \(\bm\omega=\frac12\sum_{i,j=1}^n \Omega_{ij}(y)\,\dd y^i\wedge\dd y^j\otimes e_1+\sum_{i=1}^n \dd y^i\wedge \dd\pi_i\otimes e_2\), where \(\Omega_{ij}:=\partial_j(\theta_e)_i-\partial_i(\theta_e)_j\). We hereafter write in this section
$$
\bm\theta=\theta_1\otimes e_1+\theta_2\otimes e_2=\theta_e\otimes e_1+\theta_N\otimes e_2\in \Omega^1(\T^*N\times \mathbb{R}^2)
$$
and $\rho:\T^*N\times \mathbb{R}^2\rightarrow \T^*N$ for the canonical projection.

The roles of $\theta_e$ and $\theta_N$ are different: the component associated with \(\theta_N\) encodes the spatial momenta through the tautological cotangent structure, while the component associated with \(\theta_e\) encodes a pairing on the base manifold \(N\). This is the geometric mechanism underlying the examples on adapted two-contact phase spaces of cotangent type considered later on, and it is a key for describing their HdDW equations.

The next notion is the natural analogue of regularity for this reduced setting.

\begin{definition}\label{def:xregular_reduced}
Let \(M=\T^*N\times\mathbb R^2\) be the two-contactification of an adapted exact two-symplectic manifold on $\T^*N$, and let \(h\in C^\infty(M)\). We say that \(h\) is {\it \(\pi\)-regular} if the partial fibre derivative \(\mathbb F_\pi h:\T^*N\times\mathbb R^2\to \T N\times\mathbb R^2\), given by \((y^1,\dots,y^n,\pi_1,\dots,\pi_n,z^t,z^x)\mapsto \left(y^1,\dots,y^n,\parder{h}{\pi_1},\dots,\parder{h}{\pi_n},z^t,z^x\right)\), is a local diffeomorphism. If \(\mathbb F_\pi h\) is a global diffeomorphism, then \(h\) is said to be {\it \(\pi\)-hyperregular}.
\end{definition}

The following is a usual consequence of the inverse function theorem applied to the partial fibre derivative \(\mathbb F_\pi h\).

\begin{proposition}\label{prop:xregular_equivalent}
The following conditions are equivalent:
\begin{enumerate}
\item \(h\) is \(\pi\)-regular;
\item the Hessian matrix \(\left(\partial^2 h/\partial \pi_i\partial \pi_j\right)_{i,j=1,\dots,n}\) is non-singular at every point;
\item the relations \(v^i=\partial h/\partial \pi_i\), with \(i=1,\dots,n\), can be solved locally for the variables \(\pi_1,\dots,\pi_n\) as smooth functions of \((y^1,\dots,y^n,v^1,\dots,v^n,z^t,z^x)\).
\end{enumerate}
\end{proposition}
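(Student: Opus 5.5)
The plan is to treat $\mathbb F_\pi h$ as a fibred map over the coordinates $(y,z^t,z^x)$ and apply the inverse function theorem, in the same way as Proposition~\ref{prop:equivalent_regular_hamiltonian} is obtained for $\mathbb Fh$. The map $\mathbb F_\pi h$ preserves the coordinates $y^1,\dots,y^n,z^t,z^x$ and only replaces $\pi_i$ by $\partial h/\partial\pi_i$. In the bases $\{\partial_{y},\partial_{\pi},\partial_{z^t},\partial_{z^x}\}$ and $\{\partial_{y},\partial_{v},\partial_{z^t},\partial_{z^x}\}$, its Jacobian is block lower-triangular:
\[
\T\mathbb F_\pi h=\begin{pmatrix} \mathrm{Id}_n & 0 & 0\\ \partial^2 h/\partial\pi\partial y & \bigl(\partial^2 h/\partial\pi_i\partial\pi_j\bigr) & \partial^2 h/\partial\pi\partial z\\ 0&0&\mathrm{Id}_2\end{pmatrix}.
\]
Hence $\det \T\mathbb F_\pi h=\det\bigl(\partial^2h/\partial\pi_i\partial\pi_j\bigr)$. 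It suffices to work locally in these induced coordinates, since local diffeomorphism is a local property and the coordinates are charts on $\T^*N\times\mathbb R^2$ and $\T N\times\mathbb R^2$.

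(1)$\Leftrightarrow$(2): A smooth map between manifolds of equal dimension $2n+2$ is a local diffeomorphism exactly when its differential is invertible at every point. By the inverse function theorem and the determinant identity above, this is equivalent to the Hessian in the $\pi$ variables being non-singular everywhere.

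(1)$\Rightarrow$(3): If $\mathbb F_\pi h$ is a local diffeomorphism, take a local inverse near each point. It has the form $(y,v,z)\mapsto(y,\Pi(y,v,z),z)$, because $\mathbb F_\pi h$ fixes $y$ and $z$, and so does its inverse. The smooth functions $\pi_i=\Pi_i(y,v,z)$ then solve $v^i=\partial h/\partial\pi_i$.

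(3)$\Rightarrow$(1): Suppose a smooth local solution $\pi=\Pi(y,v,z)$ exists, in the sense that it is the local inverse relation. Then $(y,v,z)\mapsto(y,\Pi,z)$ is a smooth local inverse of $\mathbb F_\pi h$, so $\mathbb F_\pi h$ is a local diffeomorphism.

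The only delicate point is the interpretation of (3). Solvability must be understood as the existence of a local smooth inverse, that is, locally unique smooth solutions, exactly as in Proposition~\ref{prop:equivalent_regular_hamiltonian}. Otherwise a degenerate example such as $v=\pi^3$ would admit a continuous but non-smooth solution. With smoothness of $\Pi$ imposed, the chain rule applied to $v=\partial h/\partial\pi(y,\Pi(y,v,z),z)$ gives
\[
\bigl(\partial^2h/\partial\pi\partial\pi\bigr)\,\bigl(\partial\Pi/\partial v\bigr)=\mathrm{Id}_n .
\]
This shows directly that the Hessian is invertible, which gives (3)$\Rightarrow$(2) and closes the cycle of implications.
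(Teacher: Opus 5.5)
Your proposal is correct and is essentially the paper's argument: the paper only remarks that the equivalence is a standard consequence of the inverse function theorem applied to the partial fibre derivative $\mathbb F_\pi h$, and you carry this out explicitly. Your reading of (3) as smooth local solvability through every point, so that the chain-rule identity forces the Hessian to be invertible, is a sensible clarification the paper leaves implicit. One cosmetic point: in the block order you wrote, the Jacobian is not literally block lower-triangular, because the $\partial^2 h/\partial\pi\partial z$ block sits above the diagonal. Expanding along the identity rows, or reordering the variables as $(y,z,\pi)$, still gives $\det \T\mathbb F_\pi h=\det\bigl(\partial^2h/\partial\pi_i\partial\pi_j\bigr)$.
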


The relevance of $\pi$-regularity is that it allows one to reconstruct a second-order system of PDEs from the two-contact Hamilton–De Donder–Weyl equations on $M$.

\begin{proposition}\label{prop:reduced_second_order_equations}
Let \(M=\T^*N\times\mathbb R^2\) be the two-contactification of an adapted exact two-symplectic manifold, and let \(h\in C^\infty(M)\) be \(\pi\)-regular. Every solution, let us say \(\psi=(y^1,\dots,y^n,\pi_1,\dots,\pi_n,z^t,z^x):D\subset\mathbb R^2\to M\), of the two-contact Hamilton--De Donder--Weyl equations satisfies locally a second-order system of PDEs for the variables \(y^1,\dots,y^n\). More precisely, if the local inverse of the relations \((y^i)_x=\parder{h}{\pi_i}\) with \(i=1,\dots,n\) read \(\pi=\Pi(y,y_x,z^t,z^x)\), then
\begin{equation}\label{eq:reduced_second_order_system}
\sum_{j=1}^n \Omega_{ji}(y)\,(y^j)_t-\partial_x\!\bigl(\Pi_i(y,y_x,z^t,z^x)\bigr)=\left(\parder{h}{y^i}+\parder{h}{z^t}\,(\theta_e)_i(y)+\parder{h}{z^x}\,\pi_i\right)\Big|_{\pi=\Pi(y,y_x,z^t,z^x)},
\end{equation}
for  \(\Omega_{ki}:=\partial_i(\theta_e)_k-\partial_k(\theta_e)_i\) and $i,k=1,\dots,n$. Moreover, the dissipative variables satisfy
\begin{equation}\label{eq:reduced_dissipative_balance}
\partial_t z^t+\partial_x z^x=\sum_{i=1}^n (\theta_e)_i(y)\,(y^i)_t+\sum_{i=1}^n \Pi_i(y,y_x,z^t,z^x)\,(y^i)_x-h\bigl(y,\Pi(y,y_x,z^t,z^x),z^t,z^x\bigr).
\end{equation}
\end{proposition}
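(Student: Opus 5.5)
The plan is to specialise Proposition~\ref{prop:split_HdDW_contactification} to $P=\T^*N$ with $\bm\theta=\theta_e\otimes e_1+\theta_N\otimes e_2$. We then write both equations in \eqref{eq:split_HdDW_base} in the coordinates $(y^i,\pi_i,z^t,z^x)$, identifying $t^1=t$ and $t^2=x$. Finally, we use $\pi$-regularity through Proposition~\ref{prop:xregular_equivalent} to eliminate the momenta $\pi_i$.

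\emph{Step 1: the left-hand side of \eqref{eq:split_HdDW_base}.} Set $\phi=(y(t,x),\pi(t,x))$, so that $\phi'_1=\sum_i\bigl((y^i)_t\,\partial_{y^i}+(\pi_i)_t\,\partial_{\pi_i}\bigr)$ and similarly for $\phi'_2$ with $x$-derivatives. Since $\omega^1=\frac12\sum_{i,j}\Omega_{ij}\,\dd y^i\wedge\dd y^j$ is basic, $\iota_{\phi'_1}\omega^1$ only involves $\dd y^i$, with coefficients $\sum_j\Omega_{ji}(y^j)_t$ once the sign convention for $\Omega$ in the statement is tracked. For $\omega^2=\sum_i\dd y^i\wedge\dd\pi_i$ we get $\iota_{\phi'_2}\omega^2=\sum_i\bigl((y^i)_x\,\dd\pi_i-(\pi_i)_x\,\dd y^i\bigr)$.

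\emph{Step 2: the right-hand side.} It is $\dd_P h+\partial_{z^t}h\,\rho^*\theta_e+\partial_{z^x}h\,\rho^*\theta_N$. Comparing $\dd\pi_i$-components gives the evolution equations $(y^i)_x=\partial h/\partial\pi_i$; no $t$-evolution of $y$ appears, because $\theta_e$ is basic. Comparing $\dd y^i$-components gives
\[
\sum_j\Omega_{ji}(y^j)_t-(\pi_i)_x=\parder{h}{y^i}+\parder{h}{z^t}(\theta_e)_i+\parder{h}{z^x}\pi_i .
\]

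\emph{Step 3: elimination of the momenta.} By $\pi$-regularity and Proposition~\ref{prop:xregular_equivalent}, the relations $(y^i)_x=\partial h/\partial\pi_i$ can be solved locally as $\pi=\Pi(y,y_x,z^t,z^x)$ along $\psi$. Substituting this into the $\dd y^i$-equation gives \eqref{eq:reduced_second_order_system}. The term $\partial_x\Pi_i$ contains $y_{xx}$, and also $z^t_x,z^x_x$ if $\Pi$ depends on $z$, so the resulting system is second order in $y$.

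\emph{Step 4: the dissipative balance.} The second equation of \eqref{eq:split_HdDW_base} reads $\partial_tz^t+\partial_xz^x=\theta_e(\phi'_1)+\theta_N(\phi'_2)-h$. Here $\theta_e(\phi'_1)=\sum_i(\theta_e)_i(y^i)_t$ and $\theta_N(\phi'_2)=\sum_i\pi_i(y^i)_x$. Substituting $\pi=\Pi$ yields \eqref{eq:reduced_dissipative_balance}.

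The only delicate point is the sign and index bookkeeping in Step~1. We must check that the convention $\Omega_{ij}=\partial_j(\theta_e)_i-\partial_i(\theta_e)_j$, together with $\omega^1=-\dd\theta_e$, produces exactly $\sum_j\Omega_{ji}(y^j)_t$ with the stated orientation. We also need to note that the "local" nature of the result comes solely from the local inversion in Step~3.
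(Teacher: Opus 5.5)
Your proposal is correct and follows the paper's proof essentially step for step: specialise Proposition~\ref{prop:split_HdDW_contactification}, read off the $\dd\pi_i$- and $\dd y^i$-components of the first equation, invert $(y^i)_x=\partial h/\partial\pi_i$ using $\pi$-regularity, and substitute into both equations. The sign check you left open also goes through: since $\omega^1=\frac12\sum_{i,j}\Omega_{ij}\,\dd y^i\wedge\dd y^j$, one gets $\iota_V\omega^1=\sum_{i,j}\Omega_{ji}V^j\,\dd y^i$ for $V=\sum_j V^j\partial_{y^j}$, exactly as you claimed.
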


\begin{proof}
By Proposition~\ref{prop:split_HdDW_contactification}, the HdDW equations split into an equation on \(\T^*N\) and an equation for the dissipative variables. In the above coordinates, the first equation  in \eqref{eq:split_HdDW_base} reads
\[
\iota_{\phi'_t}\omega_1+\iota_{\phi'_x}\omega_2=\dd_{\T^*N}h+\parder{h}{z^t}\rho^*\theta_e+\parder{h}{z^x}\rho^*\theta_N.
\]
Since
\[
\omega_1=\frac12\sum_{i,j=1}^n \Omega_{ij}(y)\,\dd y^i\wedge\dd y^j,\qquad
\omega_2=\sum_{i=1}^n \dd y^i\wedge\dd\pi_i,
\]
one obtains
\begin{equation}\label{eq:relInver}
(y^i)_x=\parder{h}{\pi_i},\qquad i=1,\dots,n,
\end{equation}
and
\begin{equation}\label{eq:2}
\sum_{j=1}^n \Omega_{ji}(y)\,(y^j)_t-\partial_x\pi_i=\parder{h}{y^i}+\parder{h}{z^t}\,(\theta_e)_i(y)+\parder{h}{z^x}\,\pi_i,\qquad i=1,\dots,n.
\end{equation}
By \(\pi\)-regularity, the first relations \eqref{eq:relInver} can be solved locally as \(\pi_i=\Pi_i(y,y_x,z^t,z^x)\), for \(i=1,\dots,n\). Substituting them into  \eqref{eq:2} yields \eqref{eq:reduced_second_order_system}. Equation \eqref{eq:reduced_dissipative_balance} follows from Proposition~\ref{prop:split_HdDW_contactification}, since
\[
\rho^*\theta_1(\phi'_t)+\rho^*\theta_2(\phi'_x)=\sum_{i=1}^n (\theta_e)_i(y)\,(y^i)_t+\sum_{i=1}^n \pi_i (y^i)_x.
\]
\end{proof} 

The previous proposition explains the structure of the reduced examples studied later on. The component associated with \(\theta_{N}\) is responsible for the constitutive relations that identify the spatial momenta in terms of derivatives of solutions relative to independent variables, while the component associated with \(\theta_e\) determines the evolution pairing on the base manifold. The target system of second-order PDEs is then obtained by eliminating the momenta from the balance equations. In this way, adapted two-contactifications of exact two-symplectic manifolds provide a natural geometric mechanism to construct dissipative partial differential equations of second-order from two-contact Hamiltonian systems. In particular, the terms \(\parder{h}{z^x}\,\pi_i\) for $i=1,\ldots,n$ in \eqref{eq:reduced_second_order_system}  account for the additional reduced models in which the Hamiltonian depends on the spatial dissipative variable \(z^x\).

\begin{remark}\label{rem:affine_reduced_case}
If, in addition, \(h\) is affine in the dissipative variables, say
\[
h(y,\pi,z^t,z^x)=g(y,\pi)+a(y,\pi)z^t+b(y,\pi)z^x,
\]
then \eqref{eq:reduced_second_order_system} becomes
\begin{equation}\label{eq:reduced_second_order_system_affine}
\sum_{j=1}^n \Omega_{ji}(y)\,(y^j)_t-\partial_x\!\bigl(\Pi_i(y,y_x,z^t,z^x)\bigr)\!=\!\left(\parder{g}{y^i}+z^t\parder{a}{y^i}+z^x\parder{b}{y^i}+a\,(\theta_e)_i(y)+b\,\pi_i\right),
\end{equation}
for \(i=1,\dots,n\) and ${\pi=\Pi(y,y_x,z^t,z^x)}$, while \eqref{eq:reduced_dissipative_balance} becomes
\begin{multline*}
\partial_t z^t+\partial_x z^x=\sum_{i=1}^n (\theta_e)_i(y)\,(y^i)_t+\sum_{i=1}^n \Pi_i(y,y_x,z^t,z^x)\,(y^i)_x-g\bigl(y,\Pi(y,y_x,z^t,z^x)\bigr)\\-a\bigl(y,\Pi(y,y_x,z^t,z^x)\bigr)z^t-b\bigl(y,\Pi(y,y_x,z^t,z^x)\bigr)z^x.
\end{multline*}
 This expression covers both the examples with terms with a  dependence on \(z^t\) and those with a  dependence on \(z^x\) appearing in Section~\ref{sec:examples}.
\end{remark}

\subsection{Dissipation laws} \label{Sec:Dissipation}

In the $k$-contact setting, one is frequently interested in dissipative systems admitting quantities that are no longer conserved but still evolve according to a specific law \cite{CCM_18}. As a motivation, recall that if $X_h$ is a contact Hamiltonian vector field on $P$, then $\mathcal{L}_{X_h}h=-(Rh)h$ and more generally, a dissipative quantity is a function $F\in C^\infty(P)$ that satisfies that $\mathcal{L}_{X_h}F=-(Rh)F$. This shows that, in the contact case, the Hamiltonian evolves according to a  dissipation law governed by the Reeb dynamics.

Guided by this fact and by the corresponding notion of conservation law in field theory, one may introduce the concept of dissipation law for a $k$-contact Hamiltonian system $(M,\bm \eta ,H)$. For a map $F=(F^1,\dots,F^k)\colon M\to\mathbb{R}^k$, two related notions are defined. First, $F$ satisfies the \emph{dissipation law for sections} if, for every solution $\psi$ of the $k$-contact HdDW equations, the divergence of $F\circ\psi$ satisfies
\[
\operatorname{div}(F\circ\psi)=-\sum_{\alpha=1}^k\bigl[(R_\alpha h)F^\alpha\bigr]\circ\psi.
\]
Note that $\psi:U\subset \mathbb{R}^k\rightarrow M$ and $F\circ\psi:t\in U\subset \mathbb{R}^k\mapsto F\circ\psi(t)\in \mathbb{R}^k\simeq \T_t\mathbb{R}^k$ can be understood as a vector field, while $\mathbb{R}^k$ can be endowed with the standard volume form $\Omega=dt^1\wedge\ldots\wedge dt^k$ giving rise to the divergence of $F\circ \psi$. 

It is also worth analysing the meaning of the above expression. When the right-hand side is zero, this expression is what is called a {\it local conservation law for systems of PDEs}, where $F\circ \psi$ is called the {\it conserved current} \cite{OLV_86,Gun_87,OP2020}. Meanwhile, the right-hand side  measures how the left-hand side differs from a local conservation law, which is recovered when, in particular, the function $h$ is a first-integral of the Reeb vector fields.

Second, $F$ satisfies the \emph{dissipation law for $k$-vector fields} if, for every $k$-vector field $\mathbf{X}=(X_1,\ldots,X_k)$ solving the geometric HdDW equations for $k$-vector fields, one has
\[
\sum_{\alpha=1}^k\mathcal{L}_{X_\alpha}F^\alpha=-\sum_{\alpha=1}^k(R_\alpha h)F^\alpha.
\]
It can be proved that  if $F$ satisfies the dissipation law for sections, then the above identity for $k$-vector fields holds for every integrable section of $\mathbf{X}$; conversely, if the identity holds for a given integrable $k$-vector field, then its integral sections satisfy the corresponding divergence law.  

The following results are relevant as they explain how infinitesimal dynamical symmetries generate dissipation laws (see \cite[Proposition 6.6 \& Proposition 6.10]{GGMRR_20}). 

\begin{definition} An infinitesimal dynamical symmetry of a Hamiltonian $k$-contact manifold $(M,\bm\eta,h)$ is a vector field $Y$ on $M$ such that
$$
\mathcal{L}_Y\bm\eta=0,\qquad \mathcal{L}_Yh=0.
$$
\end{definition}

\begin{theorem} If $Y$ is an infinitesimal dynamical symmetry, then
\(
{\bm F}=-\iota_Y{\bm \eta}
\) 
satisfies the dissipation law for $k$-vector fields.
\end{theorem}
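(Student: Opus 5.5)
We need to show that for every $\bm\eta$-Hamiltonian $k$-vector field $\mathbf X$ the function $F^\alpha=-\iota_Y\eta^\alpha$ satisfies
\[
\sum_{\alpha=1}^k\Lie_{X_\alpha}F^\alpha=-\sum_{\alpha=1}^k(R_\alpha h)F^\alpha .
\]
The plan is to compute the left-hand side with Cartan's formula, use the symmetry condition $\Lie_Y\bm\eta=0$ to move $Y$ through the contractions, and then substitute the two HdDW equations \eqref{eq:k-contact-HdDW-fields1}.

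First we rewrite $\Lie_Y\eta^\alpha=0$ componentwise via Cartan's formula:
\[
\iota_Y\dd\eta^\alpha=-\dd(\iota_Y\eta^\alpha)=\dd F^\alpha,\qquad \alpha=1,\ldots,k.
\]
Then, for each $\alpha$,
\[
\Lie_{X_\alpha}F^\alpha=\iota_{X_\alpha}\dd F^\alpha=\iota_{X_\alpha}\iota_Y\dd\eta^\alpha=-\iota_Y\iota_{X_\alpha}\dd\eta^\alpha .
\]
Summing over $\alpha$ and using the first HdDW equation gives
\[
\sum_{\alpha=1}^k\Lie_{X_\alpha}F^\alpha=-\iota_Y\Bigl(\dd h-\sum_{\alpha=1}^k(R_\alpha h)\eta^\alpha\Bigr)=-\Lie_Yh+\sum_{\alpha=1}^k(R_\alpha h)\,\iota_Y\eta^\alpha .
\]
The symmetry condition $\Lie_Yh=0$ removes the first term. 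The second term equals $-\sum_\alpha(R_\alpha h)F^\alpha$ by the definition of $F^\alpha$, which is exactly the dissipation law.

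The only point requiring care is the sign bookkeeping in the two contractions, namely $\iota_{X}\iota_Y\dd\eta=-\iota_Y\iota_X\dd\eta$ together with the sign convention $F=-\iota_Y\bm\eta$; I would verify this in Darboux coordinates if in doubt. The second HdDW equation $\iota_{\mathbf X}\bm\eta=-h$ is not needed.

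I would also note a consequence: $Y$ preserves the Reeb vector fields, since $\Lie_Y$ commutes with the defining conditions \eqref{eq:k-contact-Reeb} and these have a unique solution by Theorem~\ref{thm:k-contact-Reeb}. This is not required for the argument above, but it makes the statement independent of the choice of $\mathbf X$. Uniqueness of solutions of the HdDW equations is not needed either, because only the first HdDW equation is used and every solution $\mathbf X$ satisfies it.
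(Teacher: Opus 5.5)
Your proof is correct. Cartan's formula turns $\Lie_Y\eta^\alpha=0$ into $\dd F^\alpha=\iota_Y\dd\eta^\alpha$, and the sign $\iota_{X_\alpha}\iota_Y\dd\eta^\alpha=-\iota_Y\iota_{X_\alpha}\dd\eta^\alpha$ is right. Contracting the first HdDW equation with $Y$ then yields the identity $\sum_\alpha\Lie_{X_\alpha}F^\alpha=-\Lie_Yh-\sum_\alpha(R_\alpha h)F^\alpha$. This holds for any $Y$ with $\Lie_Y\bm\eta=0$ and becomes the dissipation law once $\Lie_Yh=0$.

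The paper itself does not prove the statement; it simply defers to \cite{GGMRR_20}. A common alternative in the literature is modelled on the contact-mechanics proof that uses $\Lie_{X_h}\eta=-(Rh)\eta$. It expands $\Lie_{X_\alpha}\iota_Y\eta^\alpha=\iota_Y\Lie_{X_\alpha}\eta^\alpha+\iota_{[X_\alpha,Y]}\eta^\alpha$ and needs the second equation $\iota_{\mathbf X}\bm\eta=-h$ twice: once to cancel the term $\iota_Y\dd h$, and once to rewrite $\sum_\alpha\iota_{[Y,X_\alpha]}\eta^\alpha=\Lie_Y\bigl(\sum_\alpha\iota_{X_\alpha}\eta^\alpha\bigr)=-\Lie_Yh$.

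Your route uses the hypothesis $\Lie_Y\bm\eta=0$ once, up front, and avoids brackets altogether. It needs only the first HdDW equation, so it shows the law holds for any $k$-vector field with $\iota_{\mathbf X}\dd\bm\eta=\dd h-\sum_\alpha(R_\alpha h)\eta^\alpha$, regardless of the normalisation $\iota_{\mathbf X}\bm\eta=-h$. The bracket version, in exchange, makes the parallel with the one-dimensional contact case explicit.

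Your side remark $[Y,R_\alpha]=0$ is also true, since the bracket lies in $\ker\bm\eta\cap\ker\dd\bm\eta=0$. However, it does not make anything ``independent of the choice of $\mathbf X$'': the statement and your proof already cover every solution. What it does give is $Y(R_\alpha h)=R_\alpha(Yh)=0$, i.e.\ the dissipation rates are themselves invariant under $Y$.
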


Hence, infinitesimal dynamical symmetries generate canonical dissipated quantities in complete analogy with the rôle played by symmetries and conservation laws in conservative theories, but now adapted to the genuinely dissipative $k$-contact framework.

\section{Applications}
\label{sec:examples}

The aim of this section is to show that the $k$-contact Hamilton--De Donder--Weyl formalism provides a natural framework for physically relevant field theories with dissipation and attenuation. In contrast with conservative multisymplectic or $k$-symplectic approaches, dissipative variables make it possible to encode nonconservative contributions directly at the geometric level \cite{GGMRR_20,Riv_22}. This makes the formalism especially suitable for models such as damped wave, Klein--Gordon-, or telegrapher-type equations, as well as for more general media in which dissipation is not imposed externally but evolves as part of the dynamics itself. The examples considered below should therefore be understood not only as concrete applications, but also as evidence that $k$-contact geometry offers a promising geometric language for a broader class of nonconservative field theories.
\subsection{A damped Klein--Gordon model}
\label{subsec:dampedwave}
Consider the Hamiltonian function
\(
h\in C^\infty(\mathcal{J}_{\mathbb{R},2})
\)
given by
\[
h(u,p^{t},p^{x},z^{t},z^{x})
=
\frac12\Bigl((p^{t})^{2}-\frac{1}{c^{2}}(p^{x})^{2}\Bigr)
+\frac12(z^{t})^{2}
+\frac{\epsilon}{2}u^{2},
\qquad c>0,\ \epsilon\geq 0,
\]
where \((u,p^{t},p^{x},z^{t},z^{x})\) are canonical Darboux coordinates on \((\mathcal{J}_{\mathbb{R},2},\bm\eta_{\mathbb{R},2},\mathcal{V}_{\mathbb{R},2})\).

Hence, the two-contact Hamilton--De Donder--Weyl equations for a section \(\psi=\psi(t,x)\) read
\[
u_{t}=\frac{\partial h}{\partial p^{t}},\qquad
u_{x}=\frac{\partial h}{\partial p^{x}},\qquad
\partial_{t}p^{t}+\partial_{x}p^{x}
=
-\Bigl(\frac{\partial h}{\partial u}
+p^{t}\frac{\partial h}{\partial z^{t}}
+p^{x}\frac{\partial h}{\partial z^{x}}\Bigr),
\]
together with
\[
\partial_{t}z^{t}+\partial_{x}z^{x}
=
p^{t}\frac{\partial h}{\partial p^{t}}
+p^{x}\frac{\partial h}{\partial p^{x}}
-h.
\]
Since $h$ is regular, one can write the momenta in terms of $u_t,u_x$. Indeed, \(\partial h/\partial p^{t}=p^{t}\) and \(\partial h/\partial p^{x}=-p^{x}/c^{2}\), and the first two equations give \(p^{t}=u_{t}\) and \(p^{x}=-c^{2}u_{x}\). On the other hand, \(\partial h/\partial u=\epsilon u\), \(\partial h/\partial z^{t}=z^{t}\), and \(\partial h/\partial z^{x}=0\), so the momentum balance equation becomes
\[
\partial_{t}p^{t}+\partial_{x}p^{x}=-(\epsilon u+z^{t}p^{t}).
\]
Substituting the above expressions for the momenta, one obtains
\[
u_{tt}-c^{2}u_{xx}+z^{t}u_{t}+\epsilon u=0.
\]
Hence, \(h\) yields a damped Klein--Gordon-type equation with effective damping coefficient \(z^{t}\) and restoring term \(\epsilon u\). The equation gives also a telegrapher-type equation describing signal propagation in transmission lines with resistance and inductance \cite{SakataWakasugi2017,deLucasLangeSardonRivas2026}. Since the Hessian of $h$ relative to momenta has two diagonal values of opposite sign, the resulting PDE is hyperbolic.

The evolution of the dissipative variables is governed by
\[
\partial_{t}z^{t}+\partial_{x}z^{x}
=
p^{t}\frac{\partial h}{\partial p^{t}}
+p^{x}\frac{\partial h}{\partial p^{x}}
-h
=
\frac12(p^{t})^{2}
-\frac{1}{2c^{2}}(p^{x})^{2}
-\frac12(z^{t})^{2}
-\frac{\epsilon}{2}u^{2},
\]
that is,
\[
\partial_{t}z^{t}+\partial_{x}z^{x}
=
\frac12u_{t}^{2}
-\frac12c^{2}u_{x}^{2}
-\frac12(z^{t})^{2}
-\frac{\epsilon}{2}u^{2}.
\]
Therefore, $z^t$   evolves according to a balance law. Anyhow, one may assume that $z^t(t,x)=\lambda(t,x)$ is an arbitrary function and $z^x(t,x)$ can be determined from the balance for the dissipation variables and each solution $u(t,x)$ of the damped Klein--Gordon equation
$$
u_{tt}-c^2u_{xx}+\lambda(t,x)u_t+\epsilon u=0.
$$

This example may be regarded as a natural two-contact candidate for damped Klein--Gordon or telegrapher equations \cite{Wakasugi2012,AncoMarquezGarridoGandarias2024}.

Let us now interpret dissipation in the sense of $k$-contact geometry. By the general notion of dissipation law for $k$-contact Hamiltonian systems \cite{GGMRR_20}, a map \(F=(F^{t},F^{x})\colon M\to\mathbb{R}^{2}\) is said to be dissipated along solutions if
\[
\partial_{t}(F^{t}\circ\psi)+\partial_{x}(F^{x}\circ\psi)
=
-\bigl[(\Lie_{R_{t}}h)F^{t}+(\Lie_{R_{x}}h)F^{x}\bigr]\circ\psi.
\]
In the present case, \(R_{t}=\partial/\partial z^{t}\) and \(R_{x}=\partial/\partial z^{x}\), so
\(
\Lie_{R_{t}}h=z^{t},  \Lie_{R_{x}}h=0,
\)
and consequently
\[
\partial_{t}(F^{t}\circ\psi)+\partial_{x}(F^{x}\circ\psi)
=
-(z^{t}F^{t})\circ\psi.
\]
Thus, \(z^{t}\) plays the role of an instantaneous dissipation rate.

When \(\epsilon=0\), the Hamiltonian is independent of \(u\), and therefore \(Y=\partial/\partial u\) is an infinitesimal dynamical symmetry. The associated dissipated quantity is \(F=(-\iota_Y\eta^t,-\iota_Y\eta^x)=(p^t,p^x)\). Hence, the general dissipation theorem yields
\[
\partial_t p^t+\partial_x p^x=-z^t p^t.
\]
Using \(p^t=u_t\) and \(p^x=-c^2u_x\), one obtains
\[
\partial_t(u_t)+\partial_x(-c^2u_x)=-z^t u_t.
\]
Thus, for \(z^t=\lambda(t,x)\), one recovers the standard local momentum-type balance law for the damped wave equation \(u_{tt}-c^2u_{xx}+\lambda(t,x)u_t=0\). In the particular case of constant damping \(z^t=\lambda\), and under suitable boundary conditions, this yields
\[
\frac{d}{dt}\int p^t\,dx=-\lambda\int p^t\,dx,
\qquad
e^{\lambda t}\int p^t\,dx=\mathrm{const}.
\]
Therefore, the \(k\)-contact formalism geometrises a classical exponentially weighted balance law for damped wave equations. This is consistent with the generalized momentum-type conservation laws described in the symmetry analysis of one-dimensional damped wave equations, and also with the conformal balance laws arising in damped acoustic wave models written in multi-symplectic form \cite{CaiZhangWang2017}.
\subsection{Reaction--diffusion / Allen--Cahn type equation}
\label{subsec:allencahn}

Consider the equation
\begin{equation}\label{eq:RD_AC}
u_t-\nu u_{xx}+V'(u)+\lambda u=0,\qquad \nu>0,\ \lambda\ge0,
\end{equation}
where \(V\in C^\infty(\mathbb R)\), for instance \(V(u)=\frac14(u^2-1)^2\). This class includes, in particular, Allen--Cahn type reaction--diffusion equations \cite{AllenCahn1979,Murray2002,Fife1979}.

Let us study \eqref{eq:RD_AC} via the two-contact manifold obtained as the contactification of a reduced exact two-symplectic manifold. Let \(P_0=\mathbb R^4\) with coordinates \((u,v,p^x,q^x)\), endowed with the one-form 
\[\bm\theta=\frac12(-v\,\dd u+u\,\dd v)\otimes e_1+(p^x\,\dd u+q^x\,\dd v)\otimes e_2\]

Hence, one obtains an exact two-symplectic manifold $(P_0,\bm\omega=-\dd\bm\theta)$. 
 Let us now consider the contactification \(M=P_0\times\mathbb R^2\) with coordinates \((u,v,p^x,q^x,z^t,z^x)\) and the two-contact form \({\bm \eta}=(\dd z^t-\theta^t)\otimes e_1+(\dd z^x-\theta^x)\otimes e_2.\) Then, consider  the Hamiltonian
\[
h_{\mathrm{RD}}=-\frac1{\nu}\,p^xq^x+v\bigl(V'(u)+\lambda u\bigr).
\]
Since \(h_{\mathrm{RD}}\) is independent of \(z^t\) and \(z^x\), one has \( {R_t}h_{\mathrm{RD}}= {R_x}h_{\mathrm{RD}}=0\). Hence, the two-contact Hamilton--De Donder--Weyl equations reduce to
\[
\partial_t v-\partial_x p^x=\frac{\partial h_{\mathrm{RD}}}{\partial u},\qquad
-\partial_t u-\partial_x q^x=\frac{\partial h_{\mathrm{RD}}}{\partial v},\qquad
\partial_x u=\frac{\partial h_{\mathrm{RD}}}{\partial p^x},\qquad
\partial_x v=\frac{\partial h_{\mathrm{RD}}}{\partial q^x}.
\]
A direct computation gives
\[
\frac{\partial h_{\mathrm{RD}}}{\partial u}=v\bigl(V''(u)+\lambda\bigr),\qquad
\frac{\partial h_{\mathrm{RD}}}{\partial v}=V'(u)+\lambda u,
\]
\[
\frac{\partial h_{\mathrm{RD}}}{\partial p^x}=-\frac1{\nu}q^x,\qquad
\frac{\partial h_{\mathrm{RD}}}{\partial q^x}=-\frac1{\nu}p^x.
\]
Therefore, \(q^x=-\nu u_x\) and \(p^x=-\nu v_x\). Substituting these expressions into the previous equations, one obtains
\[
u_t-\nu u_{xx}+V'(u)+\lambda u=0,\qquad
v_t+\nu v_{xx}=v\bigl(V''(u)+\lambda\bigr).
\]
Hence the first equation is precisely \eqref{eq:RD_AC}, while the second one describes the complementary evolution of the auxiliary field \(v\).

In particular, \(v=0\) is a consistent reduction, and every solution \(u\) of \eqref{eq:RD_AC} defines a solution of the full Hamiltonian system by taking \(v=0\), \(q^x=-\nu u_x\), and \(p^x=0\).

Thus, \eqref{eq:RD_AC} admits a natural two-contact Hamilton--De Donder--Weyl description on a reduced phase space. Since \(h_{\mathrm{RD}}\) is independent of the dissipative variables, there is no explicit dissipation through the Reeb directions in this model.  
\subsubsection{A family of generalized Burgers-type equations in a two-contact setting}

We now show that the previous two-contact construction can be modified so as to cover a wider class of Burgers-type equations. Consider the reduced phase space \(P_0\) with coordinates \((u,v,p_u^x,p_v^x)\), endowed with the one-form with values in $\mathbb{R}^2$ given by
$$
\bm \theta=\theta_e\otimes e_1+\theta_\mathbb{R}\otimes e_2,
$$
where
\(\theta_e=\frac12(-v\,\dd u+u\,\dd v)\) and \(\theta_\mathbb{R}=p_u^x\,\dd u+p_v^x\,\dd v\). Then, 
\[
\bm\omega=-\dd \bm\theta=-\dd\theta_e\otimes e_1-\dd\theta_N \otimes e_2=-\dd u\wedge \dd v\otimes e_1+(\dd u\wedge \dd p_u^x+\dd v\wedge \dd p_v^x)\otimes e_2.
\]
We consider the two-contactification \(M=P_0\times\mathbb R^2\) with coordinates \((u,v,p_u^x,p_v^x,z^t,z^x)\) and canonical two-contact forms \(\bm\eta= (\dd z^t-\rho^*\theta_e)\otimes e_1+(\dd z^x-\rho^*\theta_N)\otimes e_2\), where $\rho:M\rightarrow P_0$ is the standard projection.

Let \(D,B,C\in C^\infty(\mathbb R)\), with \(D(u)\neq0\), and consider the Hamiltonian
\[
h=
-\frac{1}{D(u)}\,p_u^x p_v^x-\frac{B(u)}{D(u)}\,z^x+v\,C(u).
\]
The associated two-contact Hamilton--De Donder--Weyl equations read
\[
\partial_t v-\partial_x p_u^x=\frac{\partial h}{\partial u}+p_u^x\frac{\partial h}{\partial z^x},\qquad
-\partial_t u-\partial_x p_v^x=\frac{\partial h}{\partial v}+p_v^x\frac{\partial h}{\partial z^x},
\]
together with \(\partial_x u=\frac{\partial h}{\partial p_u^x}\) and \(\partial_x v=\frac{\partial h}{\partial p_v^x}\). Since
\[
\frac{\partial h}{\partial v}=C(u),\qquad
\frac{\partial h}{\partial z^x}=-\frac{B(u)}{D(u)},\qquad
\frac{\partial h}{\partial p_u^x}=-\frac{1}{D(u)}\,p_v^x,\qquad
\frac{\partial h}{\partial p_v^x}=-\frac{1}{D(u)}\,p_u^x,
\]
the last two equations yield \(p_v^x=-D(u)\,u_x\) and \(p_u^x=-D(u)\,v_x\). Moreover, the equation for \(u\) becomes
\[
-\partial_t u-\partial_x p_v^x=C(u)-\frac{B(u)}{D(u)}\,p_v^x.
\]
Substituting \(p_v^x=-D(u)\,u_x\), we obtain
\begin{equation}\label{eq:generalized_burgers_family}
u_t-\partial_x\bigl(D(u)\,u_x\bigr)+B(u)\,u_x+C(u)=0.
\end{equation}
Hence, this two-contact Hamiltonian system induces a whole family of generalized Burgers-type equations of convection--diffusion--reaction type \cite{PopovicPtashnykSattar2025}.

Several relevant cases are recovered immediately. If \(D(u)=\nu>0\) is constant and \(C(u)=0\), then \eqref{eq:generalized_burgers_family} reduces to \(u_t+B(u)\,u_x=\nu u_{xx}\). In particular, for \(B(u)=u\) one recovers the viscous Burgers equation \(u_t+u\,u_x=\nu u_{xx}\) (see for instance \cite{Salih2016BurgersEquation}). More generally, if \(D(u)=\nu\) is constant, then \eqref{eq:generalized_burgers_family} becomes \(u_t+B(u)\,u_x=\nu u_{xx}-C(u)\), which includes Burgers-type equations with reaction terms, namely with $C(u)\neq 0$ (cf. \cite{KayaElSayed2004}).

This shows that the two-contact approach is flexible enough to accommodate a nontrivial family of nonlinear scalar equations while keeping the underlying geometric structure fixed and modifying only the Hamiltonian.
 
\subsection{Porous medium equation with linear absorption}
\label{subsec:PME}

Consider the porous medium equation with linear absorption
\begin{equation}\label{eq:PME}
u_t=(u^m)_{xx}-bu,\qquad m>1,\qquad b\geq 0,
\end{equation}
which is a standard model of nonlinear diffusion in porous media with absorption; see, for instance, \cite{Barenblatt1952,Aronson1986,Vazquez2007,BandleNanbuStakgold1998,Kwak1998}. For simplicity, we restrict ourselves to $u>0$.

Let us describe \eqref{eq:PME} on a two-contactification. Consider \(P_0=\mathbb R^4\simeq \T^*\mathbb{R}^2\) with global coordinates \((u,v,p^x,q^x)\), and define
\[
\bm\theta=\frac12(-v\,\dd u+u\,\dd v)\otimes e_1+(p^x\,\dd u+q^x\,\dd v)\otimes e_2.
\]
This gives rise to an exact two-symplectic manifold \((P_0,\bm\omega=-\dd\bm\theta)\) with
\[
\bm\omega=(-\dd u\wedge\dd v)\otimes e_1+(\dd u\wedge\dd p^x+\dd v\wedge\dd q^x)\otimes e_2.
\]
Consider the contactification \(M=P_0\times\mathbb R^2\) with coordinates \((u,v,p^x,q^x,z^t,z^x)\) and two-contact form
\[
\bm\eta=(\dd z^t-\rho^*\theta^1)\otimes e_1+(\dd z^x-\rho^*\theta^2)\otimes e_2,
\]
where \(\rho:M\to P_0\) is the canonical projection. Define the Hamiltonian \(h_{\mathrm{PME}}\in C^\infty(P_0\times\mathbb R^2)\) by
\[
h_{\mathrm{PME}}=-\frac{1}{m u^{m-1}}\,p^x q^x+2b z^t.
\]
Then, \(R_th_{\mathrm{PME}}=2b\) and \(R_xh_{\mathrm{PME}}=0\). Hence, the associated Hamilton--De Donder--Weyl equations are
\[
\partial_t v-\partial_x p^x=\frac{\partial h_{\mathrm{PME}}}{\partial u}- b   v,\quad
-\partial_t u-\partial_x q^x=\frac{\partial h_{\mathrm{PME}}}{\partial v}+ b  u,\quad
\partial_x u=\frac{\partial h_{\mathrm{PME}}}{\partial p^x},\quad
\partial_x v=\frac{\partial h_{\mathrm{PME}}}{\partial q^x}.
\]
A direct computation gives
\[
\frac{\partial h_{\mathrm{PME}}}{\partial u}=\frac{m-1}{m u^m}\,p^x q^x,\quad
\frac{\partial h_{\mathrm{PME}}}{\partial v}=0,\quad
\frac{\partial h_{\mathrm{PME}}}{\partial p^x}=-\frac{1}{m u^{m-1}}\,q^x,\quad
\frac{\partial h_{\mathrm{PME}}}{\partial q^x}=-\frac{1}{m u^{m-1}}\,p^x.
\]
Therefore, \(q^x=-m u^{m-1}u_x=-(u^m)_x\), while the second balance equation becomes \(-u_t-\partial_x q^x=b u\). Substituting the expression for \(q^x\), one recovers precisely \eqref{eq:PME}. Hence, \eqref{eq:PME} admits a natural two-contact Hamilton--De Donder--Weyl description on the region where \(u\neq0\) since $m>1$.

The first equation determines the complementary evolution of the auxiliary field \(v\), namely \(\partial_t v-\partial_x p^x=\frac{m-1}{m u^m}\,p^x q^x-b v\). In particular, \(v=0\) and \(p^x=0\) define a consistent reduction, and every solution \(u\) of \eqref{eq:PME} lifts to a solution of the full Hamiltonian system by taking \(q^x=-(u^m)_x\), \(v=0\), and \(p^x=0\).

Moreover, the dissipative variables satisfy
\[
\partial_t z^t+\partial_x z^x=\theta^t(\phi'_t)+\theta^x(\phi'_x)-h_{\mathrm{PME}}
=\frac12(-v u_t+u v_t)+p^x u_x+q^x v_x-h_{\mathrm{PME}}.
\]
Using the constitutive relations for \(u_x\) and \(v_x\), this becomes
\[
\partial_t z^t+\partial_x z^x=\frac12(-v u_t+u v_t)-\frac{1}{m u^{m-1}}\,p^x q^x-2bz^t.
\]

Under the consistent reduction \(v=0\) and \(p^x=0\), this becomes \(\partial_t z^t+\partial_x z^x=-2b z^t\). Thus the linear absorption term is encoded through the time-like contact variable \(z^t\), while the singularity of \(h_{\mathrm{PME}}\) at \(u=0\) reflects the well-known degeneracy of the porous medium equation at vanishing density.


\subsection{Complex Ginzburg--Landau equation} \label{subsec:CGL} Consider the complex Ginzburg--Landau equation \begin{equation}\label{eq:CGL} \psi_t=(1+i\alpha)\psi_{xx}+(1-\gamma)\psi-(1+i\beta)|\psi|^2\psi,\qquad \alpha,\beta,\gamma\in\mathbb R, \end{equation} which is a standard model for nonlinear waves and pattern formation in dissipative media \cite{KuramotoTsuzuki1976,AransonKramer2002,CrossHohenberg1993}, and its variants with an additional linear gain/loss term also arise naturally in dissipative nonlinear optics and mode-locked laser theory \cite{AkhmedievSotoCrespoTown2001,KomarovLeblondSanchez2005}. Writing \(\psi=a+ib\), one obtains \[ a_t=a_{xx}-\alpha b_{xx}+R_a(a,b)-\gamma a,\qquad b_t=\alpha a_{xx}+b_{xx}+R_b(a,b)-\gamma b, \] where \(R_a(a,b)=a-(a^2+b^2)a+\beta(a^2+b^2)b\) and \(R_b(a,b)=b-\beta(a^2+b^2)a-(a^2+b^2)b\). It is convenient to write the diffusion part in matrix form as \[ \binom{a_t}{b_t} = D\binom{a_{xx}}{b_{xx}}+\binom{R_a(a,b)-\gamma a}{R_b(a,b)-\gamma b}, \qquad D= \begin{pmatrix} 1&-\alpha\\ \alpha&1 \end{pmatrix}, \] so that \[ D^{-1}=\frac{1}{1+\alpha^2} \begin{pmatrix} 1&\alpha\\ -\alpha&1 \end{pmatrix}. \] Instead of introducing spatial derivatives as independent constrained variables, we consider a reduced phase space \(P_0\) with coordinates \((a,b,c,d,p_a^x,p_b^x,q_a^x,q_b^x)\), where \(c,d\) are auxiliary fields. On \(P_0\) define \[ \bm \theta=\frac12(-c\,\dd a+a\,\dd c-d\,\dd b+b\,\dd d)\otimes e_1+(p_a^x\,\dd a+p_b^x\,\dd b+q_a^x\,\dd c+q_b^x\,\dd d)\otimes e_2. \] Then, one has that
\[\bm\omega=-d\bm\theta=(-\dd a\wedge \dd c-\dd b\wedge \dd d)\otimes e_1+(\dd a\wedge \dd p_a^x+\dd b\wedge \dd p_b^x+\dd c\wedge \dd q_a^x+\dd d\wedge \dd q_b^x)\otimes e_2.
\]
Consider the two-contactification \(M=P_0\times\mathbb R^2\) with coordinates \((a,b,c,d,p_a^x,p_b^x,q_a^x,q_b^x,z^t,z^x)\) and canonical two-contact form given by  \(\bm\eta=(\dd z^t-\rho^*\theta_e)\otimes e_1+(\dd z^x-\rho^*\theta_{P_0})\otimes e_2\), where $\rho:M\rightarrow P_0$ is the standard projection. We take the Hamiltonian \[ h_{\mathrm{CGL}} = -\frac{1}{1+\alpha^2}\Bigl(p_a^x(q_a^x+\alpha q_b^x)+p_b^x(-\alpha q_a^x+q_b^x)\Bigr) -c\,R_a(a,b)-d\,R_b(a,b)+2\gamma z^t. \] Since \(\partial h_{\mathrm{CGL}}/\partial z^t=2\gamma\) and \(\partial h_{\mathrm{CGL}}/\partial z^x=0\), the two-contact Hamilton--De Donder--Weyl equations become \[ \partial_t c-\partial_x p_a^x=\frac{\partial h_{\mathrm{CGL}}}{\partial a}-\gamma c,\qquad -\partial_t a-\partial_x q_a^x=\frac{\partial h_{\mathrm{CGL}}}{\partial c}+\gamma a, \] \[ \partial_t d-\partial_x p_b^x=\frac{\partial h_{\mathrm{CGL}}}{\partial b}-\gamma d,\qquad -\partial_t b-\partial_x q_b^x=\frac{\partial h_{\mathrm{CGL}}}{\partial d}+\gamma b, \] together with \[ \partial_x a=\frac{\partial h_{\mathrm{CGL}}}{\partial p_a^x},\qquad \partial_x b=\frac{\partial h_{\mathrm{CGL}}}{\partial p_b^x}. \] A direct computation gives \(\partial h_{\mathrm{CGL}}/\partial c=-R_a(a,b)\), \(\partial h_{\mathrm{CGL}}/\partial d=-R_b(a,b)\), and \[ \partial_x a=-\frac{1}{1+\alpha^2}(q_a^x+\alpha q_b^x),\qquad \partial_x b=-\frac{1}{1+\alpha^2}(-\alpha q_a^x+q_b^x). \] Solving for \(q_a^x\) and \(q_b^x\), one obtains \[ q_a^x=-a_x+\alpha b_x,\qquad q_b^x=-\alpha a_x-b_x. \] Substituting into the equations for \(a\) and \(b\), we get \[ a_t=-\partial_x q_a^x+R_a(a,b)-\gamma a,\qquad b_t=-\partial_x q_b^x+R_b(a,b)-\gamma b, \] which reproduces precisely the real form of \eqref{eq:CGL}. Therefore, this complex Ginzburg--Landau equation with linear gain/loss admits a natural two-contact Hamilton--De Donder--Weyl description on a reduced phase space, now with a Hamiltonian affine in the dissipative variable \(z^t\). The remaining equations describe the evolution of the auxiliary fields \(c\) and \(d\). In particular, \(c=d=p_a^x=p_b^x=0\) is a consistent reduction, and every solution \((a,b)\) of \eqref{eq:CGL} determines a solution of the full Hamiltonian system by taking \(q_a^x=-a_x+\alpha b_x\), \(q_b^x=-\alpha a_x-b_x\).


\subsection{Damped nonlinear Schr\"odinger equation in real form} \label{subsec:dNLS} Consider the damped nonlinear Schr\"odinger equation \begin{equation}\label{eq:damped_NLS} i\psi_t+\psi_{xx}+|\psi|^2\psi+i\gamma\psi=0,\qquad \gamma\ge0, \end{equation} which may be regarded as a dissipative variant of the nonlinear Schr\"odinger equation, a standard model in nonlinear optics, while damped and driven versions play a central role in dissipative cavity optics \cite{SulemSulem1999,LugiatoLefever1987,KippenbergEtAl2018}. Writing \(\psi=a+ib\), one obtains \[ a_t=-b_{xx}-(a^2+b^2)b-\gamma a,\qquad b_t=a_{xx}+(a^2+b^2)a-\gamma b. \] Set \(R_a(a,b)=-(a^2+b^2)b-\gamma a\) and \(R_b(a,b)=(a^2+b^2)a-\gamma b\). Then the system reads \(a_t=-b_{xx}+R_a(a,b)\), \(b_t=a_{xx}+R_b(a,b)\). Instead of introducing derivative constraints by hand, we consider a reduced phase space \(P_0\) with coordinates \((a,b,c,d,p_a^x,p_b^x,q_a^x,q_b^x)\), endowed with the one-form taking values in $\mathbb{R}^2$ of the form
\[ 
\bm\theta=\frac12(-c\,\dd a+a\,\dd c-d\,\dd b+b\,\dd d)\otimes e_1+(p_a^x\,\dd a+p_b^x\,\dd b+q_a^x\,\dd c+q_b^x\,\dd d)\otimes e_2. 
\] 
Then, 
\[\bm \omega=-(\dd a\wedge \dd c+\dd b\wedge \dd d)\otimes e_1+(\dd a\wedge \dd p_a^x+\dd b\wedge \dd p_b^x+\dd c\wedge \dd q_a^x+\dd d\wedge \dd q_b^x)\otimes e_2.
\]
Define the two-contactification \(M=P_0\times\mathbb R^2\) with coordinates \((a,b,c,d,p_a^x,p_b^x,q_a^x,q_b^x,z^t,z^x)\) and  two-contact form  \(\bm\eta=(\dd z^t-\rho^*\theta_e)\otimes e_1+(\dd z^x-\rho^*\theta^x)\otimes e_2,\) where $\rho:M
\rightarrow P_0$ is the canonical projection. 
We take the Hamiltonian 
\[ h_{\mathrm{dNLS}}=-p_a^x q_b^x+p_b^x q_a^x-c\,R_a(a,b)-d\,R_b(a,b). \] 
Since \(h_{\mathrm{dNLS}}\) is independent of \(z^t\) and \(z^x\), one has \( R_t h_{\mathrm{dNLS}}= R_x h_{\mathrm{dNLS}}=0\). Hence, the two-contact Hamilton--De Donder--Weyl equations reduce to 
\[ \partial_t c-\partial_x p_a^x=\frac{\partial h_{\mathrm{dNLS}}}{\partial a},\qquad -\partial_t a-\partial_x q_a^x=\frac{\partial h_{\mathrm{dNLS}}}{\partial c}, \] 
\[ \partial_t d-\partial_x p_b^x=\frac{\partial h_{\mathrm{dNLS}}}{\partial b},\qquad -\partial_t b-\partial_x q_b^x=\frac{\partial h_{\mathrm{dNLS}}}{\partial d}, \] 
together with 
\[ \partial_x a=\frac{\partial h_{\mathrm{dNLS}}}{\partial p_a^x}=-q_b^x,\qquad \partial_x b=\frac{\partial h_{\mathrm{dNLS}}}{\partial p_b^x}=q_a^x. \] 
Therefore, \(q_a^x=b_x\) and \(q_b^x=-a_x\). Moreover, \(\partial h_{\mathrm{dNLS}}/\partial c=-R_a(a,b)\) and \(\partial h_{\mathrm{dNLS}}/\partial d=-R_b(a,b)\), so the equations for \(a\) and \(b\) become \[ a_t=-\partial_x q_a^x+R_a(a,b),\qquad b_t=-\partial_x q_b^x+R_b(a,b). \] Substituting the expressions for \(q_a^x\) and \(q_b^x\), one recovers precisely the real form of \eqref{eq:damped_NLS}. Hence the damped nonlinear Schr\"odinger equation admits a natural \(2\)-contact Hamilton--De Donder--Weyl description on a reduced phase space. The remaining equations determine the auxiliary fields \(c\) and \(d\). In particular, \(c=d=p_a^x=p_b^x=0\) is a consistent reduction, and every solution \((a,b)\) of \eqref{eq:damped_NLS} lifts to a solution of the full Hamiltonian system by taking \(q_a^x=b_x\), \(q_b^x=-a_x\), with \(z^t\) and \(z^x\) constant. The damping parameter \(\gamma\) is already incorporated into the reaction terms \(R_a\) and \(R_b\). Therefore, the present construction does not use the contact variables to generate dissipation through the Reeb directions. If one wishes to introduce genuinely two-contact dissipation laws, one should allow the Hamiltonian to depend explicitly on \(z^t\) or \(z^x\), which would produce a different nonconservative deformation of \eqref{eq:damped_NLS}.
 
\subsection{Fisher--KPP equation with linear loss}
\label{subsec:Fisher}

Consider the reaction--diffusion equation
\begin{equation}\label{eq:FisherKPP_loss}
u_t=\mu u_{xx}+ru(1-u)-\lambda u,\qquad \mu>0,\ r\ge0,\ \lambda\ge0,
\end{equation}
which may be regarded as a dissipative variant of the classical Fisher--KPP equation \cite{Fisher1937,KurataShi2008,Wang2016,OrugantiShiShivaji2002,RoquesChekroun2007}. This equation describes for particular cases of the coefficients several natural processes like the expansion of advantageous genes \cite{Fisher1937}.

Let us describe \eqref{eq:FisherKPP_loss} on the $k$-symplectic part of its $k$-contactification. Let \(P_0\) be the manifold with coordinates \((u,v,p^x,q^x)\), endowed with the one-form \(\bm \theta=\frac12(-v\,\dd u+u\,\dd v)\otimes e_1+(p^x\,\dd u+q^x\,\dd v)\otimes e_2\). Then, \(\bm\omega=-\dd\bm \theta=-\dd u\wedge \dd v\otimes e_1+(\dd u\wedge \dd p^x+\dd v\wedge \dd q^x)\otimes e_2\). We consider the two-contactification \(M=P_0\times\mathbb R^2\) with coordinates \((u,v,p^x,q^x,z^t,z^x)\) and  the canonical two-contact form \(\bm\eta=(\dd z^t-\rho^*\theta^1)\otimes e_1+(\dd z^x-\rho^*\theta^2)\otimes e_2\), where $\rho:P_0\times \mathbb{R}^2\rightarrow P_0$ is the natural projection.

We take the Hamiltonian
\[
h_{\mathrm{FKPP}}=-\frac1{\mu}\,p^xq^x+v\bigl(-ru(1-u)+\lambda u\bigr).
\]
Since \(h_{\mathrm{FKPP}}\) is independent of \(z^t\) and \(z^x\), one has \( {R_t}h_{\mathrm{FKPP}}= {R_x}h_{\mathrm{FKPP}}=0\). Hence, the two-contact Hamilton--De Donder--Weyl equations reduce to
\[
\partial_t v-\partial_x p^x=\frac{\partial h_{\mathrm{FKPP}}}{\partial u},\qquad
-\partial_t u-\partial_x q^x=\frac{\partial h_{\mathrm{FKPP}}}{\partial v},\qquad
\partial_x u=\frac{\partial h_{\mathrm{FKPP}}}{\partial p^x},\qquad
\partial_x v=\frac{\partial h_{\mathrm{FKPP}}}{\partial q^x}.
\]
A direct computation gives \(\partial h_{\mathrm{FKPP}}/\partial v=-ru(1-u)+\lambda u\), \(\partial h_{\mathrm{FKPP}}/\partial p^x=-q^x/\mu\), and \(\partial h_{\mathrm{FKPP}}/\partial q^x=-p^x/\mu\). Therefore \(q^x=-\mu u_x\) and \(p^x=-\mu v_x\), while the equation for \(u\) becomes
\[
-\partial_t u-\partial_x q^x=-ru(1-u)+\lambda u.
\]
Substituting \(q^x=-\mu u_x\), one recovers precisely \eqref{eq:FisherKPP_loss}. Hence, the Fisher--KPP equation with linear loss admits a natural two-contact Hamilton--De Donder--Weyl description on a reduced phase space.

The remaining equation determines the evolution of the auxiliary field \(v\). In particular, \(v=0\) and \(p^x=0\) define a consistent reduction, and every solution \(u\) of \eqref{eq:FisherKPP_loss} lifts to a solution of the full Hamiltonian system by taking \(q^x=-\mu u_x\), with \(z^t\) and \(z^x\) constant.

Thus the linear loss term is incorporated directly into the reaction part of the Hamiltonian, without requiring any explicit dependence on the contact variables.

\subsection{A damped quartic field on a Minkowski space}
\label{subsec:phi4_31}

Let \(x^0=t\) and \(x^1,x^2,x^3\) be the standard coordinates on \(\mathbb R^{3,1}\). Consider the damped \(\phi^4\) equation
\begin{equation}\label{eq:damped_phi4}
\phi_{tt}-\Delta\phi+m^2\phi+g\phi^3+\lambda \phi_t=0,
\qquad m>0,\ g>0,\ \lambda\ge0,
\end{equation}
which may be regarded as a dissipative nonlinear Klein--Gordon model with quartic potential \cite{Ryd_96,WangCheng2005,QuinteroSanchezMertens2001,LizunovaVanWezel2021}.

We describe \eqref{eq:damped_phi4} on the canonical four-contact manifold \(\mathcal J_{\mathbb R,4}=\bigoplus^4 \T^*\mathbb R\times\mathbb R^4\) with coordinates \((\phi,p^0,p^1,p^2,p^3,z^0,z^1,z^2,z^3)\) and canonical four-contact form \(\bm\eta_{\mathbb{R},4}=\sum_{\alpha=0}^3(\dd z^\alpha-p^\alpha \dd\phi)\otimes e_\alpha.\) Define the Hamiltonian
\[
h_{\phi^4}
=
\frac12\Bigl((p^0)^2-(p^1)^2-(p^2)^2-(p^3)^2\Bigr)
+\frac{m^2}{2}\phi^2+\frac{g}{4}\phi^4+\lambda z^0.
\]
The associated Hamilton--De Donder--Weyl equations are
\[
\partial_0\phi=\frac{\partial h_{\phi^4}}{\partial p^0}=p^0,\qquad
\partial_i\phi=\frac{\partial h_{\phi^4}}{\partial p^i}=-p^i,\qquad i=1,2,3,
\]
\[
\sum_{\mu=0}^3\partial_\mu p^\mu
=
-\left(
\frac{\partial h_{\phi^4}}{\partial \phi}
+\sum_{\mu=0}^3 p^\mu \frac{\partial h_{\phi^4}}{\partial z^\mu}
\right)
=
-\bigl(m^2\phi+g\phi^3+\lambda p^0\bigr),
\]
and
\[
\sum_{\mu=0}^3\partial_\mu z^\mu
=
\sum_{\mu=0}^3 p^\mu \frac{\partial h_{\phi^4}}{\partial p^\mu}-h_{\phi^4}
=
\frac12\Bigl((p^0)^2-(p^1)^2-(p^2)^2-(p^3)^2\Bigr)
-\frac{m^2}{2}\phi^2-\frac{g}{4}\phi^4-\lambda z^0.
\]
The regularity of $h_{\phi^4}$ allows us to describe the momenta in terms of $\phi_t,\phi_x,\phi$ and relate $h_{\phi^4}$ to a second-order system of PDEs on the variable $\phi$. Indeed, using \(p^0=\phi_t\) and \(p^i=-\partial_i\phi\), \(i=1,2,3\), the momentum balance equation yields precisely \eqref{eq:damped_phi4}. Hence, \(h_{\phi^4}\) provides a natural \(4\)-contact Hamilton--De Donder--Weyl description of the damped \(\phi^4\) field.

Moreover, the dissipative variables satisfy
\[
\sum_{\mu=0}^3\partial_\mu z^\mu
=
\frac12\bigl(\phi_t^2-|\nabla\phi|^2\bigr)
-\frac{m^2}{2}\phi^2-\frac{g}{4}\phi^4-\lambda z^0.
\]
Thus, the dissipation is encoded through the time-like contact variable \(z^0\), while the remaining \(z^i\) contribute only through the divergence term.

\subsection{Damped sine--Gordon equation}
\label{subsec:damped_sG}

Consider the damped sine--Gordon equation
\begin{equation}\label{eq:damped_sine_gordon}
u_{tt}-c^2u_{xx}+\sin u+\lambda u_t=0,
\qquad c>0,\ \lambda\ge0,
\end{equation}
which arises naturally in dissipative models of long Josephson junctions and, more broadly, in applied versions of the sine--Gordon model \cite{McLaughlinScott1978,CostabileEtAl1978,CuevasMaraverKevrekidisWilliams2014}.

We describe \eqref{eq:damped_sine_gordon} on the canonical two-contact manifold \(\mathcal J_{\mathbb R,2}=\bigoplus^2\T^*\mathbb  R\times\mathbb R^2\) with coordinates \((u,p^t,p^x,z^t,z^x)\) and the standard two-contact form $\bm\eta_{\mathbb{R},2}$. Consider the Hamiltonian
\[
h_{\mathrm{sG}}=\frac12\Bigl((p^t)^2-\frac{1}{c^2}(p^x)^2\Bigr)+(1-\cos u)+\lambda z^t.
\]
The Hamilton--De Donder--Weyl equations read
\[
u_t=\frac{\partial h_{\mathrm{sG}}}{\partial p^t}=p^t,\qquad
u_x=\frac{\partial h_{\mathrm{sG}}}{\partial p^x}=-\frac{1}{c^2}p^x,
\]
and
\[
\partial_t p^t+\partial_x p^x
=
-\left(
\frac{\partial h_{\mathrm{sG}}}{\partial u}
+p^t\frac{\partial h_{\mathrm{sG}}}{\partial z^t}
+p^x\frac{\partial h_{\mathrm{sG}}}{\partial z^x}
\right)
=
-\bigl(\sin u+\lambda p^t\bigr).
\]
Moreover, the balance equation for the dissipative variables is
\[
\partial_t z^t+\partial_x z^x
=
p^t\frac{\partial h_{\mathrm{sG}}}{\partial p^t}
+p^x\frac{\partial h_{\mathrm{sG}}}{\partial p^x}
-h_{\mathrm{sG}}
=
\frac12\Bigl((p^t)^2-\frac{1}{c^2}(p^x)^2\Bigr)-(1-\cos u)-\lambda z^t.
\]
Since $h_{\rm sG}$ is regular, one can describe the momenta by means of the derivatives $u_t,u_x$ and remaining coordinates.  Using \(p^t=u_t\) and \(p^x=-c^2u_x\), the momentum balance equation yields precisely \eqref{eq:damped_sine_gordon}. Hence, \(h_{\mathrm{sG}}\) provides a two-contact Hamilton--De Donder--Weyl formulation of the damped sine--Gordon equation. Furthermore, the dissipative variables satisfy
\[
\partial_t z^t+\partial_x z^x
=
\frac12\bigl(u_t^2-c^2u_x^2\bigr)-(1-\cos u)-\lambda z^t.
\]

One could use a more general approach to include a damped sine-Gordon equation with dissipation modelled by a function $\lambda(t,x)$. Indeed, let us define Hamiltonian
\[
h_{\mathrm{sG}}^{\mathrm{quad}}
=
\frac12\Bigl((p^t)^2-\frac{1}{c^2}(p^x)^2\Bigr)+(1-\cos u)+\frac12 (z^t)^2.
\]
Then, the Hamilton--De Donder--Weyl equations become
\[
u_t=p^t,\qquad u_x=-\frac{1}{c^2}p^x,
\qquad 
\partial_t p^t+\partial_x p^x=-(\sin u+z^t p^t),
\]
and
\[
\partial_t z^t+\partial_x z^x
=
\frac12\Bigl((p^t)^2-\frac{1}{c^2}(p^x)^2\Bigr)-(1-\cos u)-\frac12(z^t)^2.
\]
Hence, after eliminating the momenta, which is possible due to the regularity of $h^{\rm quad}_{\rm sG}$, one obtains
\[
u_{tt}-c^2u_{xx}+\sin u+z^t u_t=0.
\]
Moreover, the form of $h^{\rm quad}_{\rm sG}$ shows that the equation is hyperbolic. 

This formulation is useful if one wishes to realise a prescribed damping profile \(\lambda=\lambda(x,t)\). Indeed, imposing
\(
z^t=\lambda(x,t),
\)
the field equation becomes
\[
u_{tt}-c^2u_{xx}+\sin u+\lambda(x,t)u_t=0,
\]
while the balance equation for the dissipative variables reduces to
\[
\partial_x z^x
=
\frac12\bigl(u_t^2-c^2u_x^2\bigr)-(1-\cos u)-\frac12\lambda(x,t)^2-\partial_t\lambda(x,t).
\]
Therefore, once \(u\) and \(\lambda(x,t)\) are fixed, the variable \(z^x\) can be chosen locally so that the above identity holds. In this sense, the quadratic dependence on \(z^t\) allows one to accommodate nonconstant damping profiles by adjusting the second dissipative variable \(z^x\).
 
\subsection{FitzHugh--Nagumo reaction--diffusion system}
\label{subsec:FHN}

Consider the FitzHugh--Nagumo system for two dependent variables $u=u(t,x)$ and $v=v(t,x)$ of the form
\begin{equation}\label{eq:FHN}
u_t=D_u u_{xx}+f(u)-v+I,\qquad
v_t=D_v v_{xx}+\varepsilon(u-a v),
\end{equation}
where \(D_u,D_v>0\), \(\varepsilon>0\), \(a>0\), $f(u)$ is a certain function, and \(I\in\mathbb R\). This system was originally introduced as a simplification of the Hodgkin--Huxley model and is a standard prototype for excitable media \cite{FitzHugh1961,Hastings1982FHN}.

Let us describe \eqref{eq:FHN} on a two-contactification. Consider \(P_0=\mathbb R^8\) with global coordinates \((u,v,r,s,p_u^x,p_v^x,q_u^x,q_v^x)\), and define
\[
\bm \theta =\frac12(-r\,\dd u+u\,\dd r-s\,\dd v+v\,\dd s)\otimes e_1+(p_u^x\,\dd u+p_v^x\,\dd v+q_u^x\,\dd r+q_v^x\,\dd s)\otimes e_2.
\]
This gives rise to an exact  two-symplectic manifold $(P_0,\bm\omega=-\dd\bm\theta)$ with \[\bm \omega=(-\dd u\wedge \dd r-\dd v\wedge \dd s)\otimes e_1+(\dd u\wedge \dd p_u^x+\dd v\wedge \dd p_v^x+\dd r\wedge \dd q_u^x+\dd s\wedge \dd q_v^x)\otimes e_2.\] Consider the contactification \(M=P_0\times\mathbb R^2\) with coordinates \((u,v,r,s,p_u^x,p_v^x,q_u^x,q_v^x,z^t,z^x)\) and two-contact form 
$$
{\bm \eta}=(\dd z^t-\rho^*\theta_e)\otimes e_1+(\dd z^x-\rho^*\theta_{P_0})\otimes e_2
$$
with  the projection $\rho:M\rightarrow P_0$. Define the Hamiltonian $h_{\mathrm{FHN}}\in C^\infty(P_0\times \mathbb{R}^2)$ of the form \[ h_{\mathrm{FHN}}= -\frac1{D_u}p_u^x q_u^x-\frac1{D_v}p_v^x q_v^x-r\,(f(u)-v+I)-s\,\varepsilon(u-a v). \]

Since \(h_{\mathrm{FHN}}\) is independent of \(z^t\) and \(z^x\), the Hamilton--De Donder--Weyl equations reduce to
\[
\partial_t r-\partial_x p_u^x=\frac{\partial h_{\mathrm{FHN}}}{\partial u},\qquad
-\partial_t u-\partial_x q_u^x=\frac{\partial h_{\mathrm{FHN}}}{\partial r},
\]
\[
\partial_t s-\partial_x p_v^x=\frac{\partial h_{\mathrm{FHN}}}{\partial v},\qquad
-\partial_t v-\partial_x q_v^x=\frac{\partial h_{\mathrm{FHN}}}{\partial s},
\]
together with \(u_x=\partial h_{\mathrm{FHN}}/\partial p_u^x=-q_u^x/D_u\) and \(v_x=\partial h_{\mathrm{FHN}}/\partial p_v^x=-q_v^x/D_v\). Hence, \(q_u^x=-D_u u_x\) and \(q_v^x=-D_v v_x\). Moreover,
\[
\frac{\partial h_{\mathrm{FHN}}}{\partial r}=-(f(u)-v+I),\qquad
\frac{\partial h_{\mathrm{FHN}}}{\partial s}=-\varepsilon(u-a v),
\]
and the equations for \(u\) and \(v\) become
\[
u_t=-\partial_x q_u^x+f(u)-v+I,\qquad
v_t=-\partial_x q_v^x+\varepsilon(u-a v).
\]
Substituting the expressions for \(q_u^x\) and \(q_v^x\), one recovers precisely \eqref{eq:FHN}. Therefore, the FitzHugh--Nagumo system admits a natural two-contact Hamilton--De Donder--Weyl description on a reduced phase space.

The remaining equations determine the auxiliary fields \(r\) and \(s\). In particular, \(r=s=p_u^x=p_v^x=0\) is a consistent reduction, and every solution \((u,v)\) of \eqref{eq:FHN} lifts to a solution of the full two-contact Hamiltonian system by taking \(q_u^x=-D_u u_x\), \(q_v^x=-D_v v_x\), with \(z^t\) and \(z^x\) constant.

Let us finally describe a simple infinitesimal two-contact symmetry in a particular affine FitzHugh--Nagumo-type case. Assume that \(f(u)=a^{-1}u+c\), with \(c\in\mathbb R\). Then
\[
h=-\frac1{D_u}p_u^xq_u^x-\frac1{D_v}p_v^xq_v^x
-r\left(\frac1a u+c-v+I\right)-s\varepsilon(u-av).
\]
Consider the vector field
\[
\Gamma=
\frac{\partial}{\partial u}
+\frac1a\frac{\partial}{\partial v}
+\frac12\left(r+\frac{s}{a}\right)\frac{\partial}{\partial z^t}.
\]
Since
\[
L_{\partial_u+a^{-1}\partial_v}\theta^t
=d\left[\frac12\left(r+\frac{s}{a}\right)\right],
\qquad
L_{\partial_u+a^{-1}\partial_v}\theta^x=0,
\]
one has \(\mathcal{L}_\Gamma\eta=0\). Moreover, a short calculation shows that \(\Gamma(h)=0\). The corresponding current \(F=-\iota_\Gamma\bm\eta\) is
\[
F=
-\left(r+\frac{s}{a}\right)\otimes e_1
+\left(p_u^x+\frac1a p_v^x\right)\otimes e_2.
\]
Since \(R_t h=R_xh=0\), this gives the conservation law
\[
\partial_t\left[-\left(r+\frac{s}{a}\right)\right]
+\partial_x\left(p_u^x+\frac1a p_v^x\right)=0
\]
along every solution of the associated two-contact Hamilton--De Donder--Weyl equations.

\section{Conclusions}\label{sec:conclusions}

The results of this work show that \(k\)-contact geometry provides an effective and conceptually coherent Hamiltonian framework for a substantial class of dissipative field equations. Beyond the original motivating examples, the paper establishes two complementary geometric mechanisms for constructing nonconservative PDEs from \(k\)-contact Hamiltonian systems: canonical realizations on \(\bigoplus^k \T^*Q\times\mathbb R^k\), which naturally capture damped models and Hamiltonians with explicit dependence on the dissipative variables, and reduced realizations on $k$-contactifications of exact \(k\)-symplectic phase spaces, which are particularly well adapted to nonlinear diffusion, reaction--diffusion systems, complex amplitude equations, and excitable-media models. In this way, the paper does not merely collect isolated examples, but identifies a reproducible geometric strategy that explains why equations of very different analytic nature can be treated within the same \(k\)-contact Hamilton--De Donder--Weyl formalism.

The additional structural results obtained in Section~\ref{sec:tools_apps}, including the splitting of the HdDW equations on contactifications, the notion of adapted reduced two-contact phase space, the corresponding partial regularity condition, and the formulation of dissipation laws associated with infinitesimal dynamical symmetries, clarify the internal geometry behind these constructions and show how second-order dissipative PDEs emerge from first-order \(k\)-contact systems after elimination of the appropriate momenta. In particular, the present version of the paper goes beyond a purely descriptive collection of examples: it develops a general mechanism for constructing reduced two-contact models and shows how dissipation laws can be recovered geometrically from dynamical symmetries.

The worked examples of Section~\ref{sec:examples}, summarised in Table~\ref{tab:kcontact_examples}, show that this formalism already applies to a significant family of physically relevant equations, ranging from damped Klein--Gordon, sine--Gordon, and \(\phi^4\) models to Allen--Cahn, generalized Burgers, porous medium, Fisher--KPP, complex Ginzburg--Landau, damped nonlinear Schr\"odinger, and FitzHugh--Nagumo systems. Moreover, the paper also exhibits concrete instances in which the formalism geometrizes balance laws already familiar in the PDE literature, such as momentum-type dissipation laws for damped wave equations, and produces nontrivial conserved or dissipated quantities associated with symmetries in reduced models. This strengthens the interpretation of \(k\)-contact geometry not only as a language for rewriting dissipative PDEs, but as a framework capable of encoding structurally meaningful information about their dynamics.

Our formalism already covers a substantial class of dissipative models, while the candidate equations listed in Table \ref{tab:kcontact_applications_II} strongly suggest that its range of applicability is much wider. For this reason, the paper should be viewed both as a collection of explicit applications and as a foundational step towards further developments, symmetry reduction, Hamilton--Jacobi methods, and the geometric analysis of more singular or irreversible continuum models.

The models in Table  \ref{tab:kcontact_applications_II} are not presented here as fully established $k$-contact theories, but rather as natural candidate systems whose structure suggests that they may admit a meaningful description within the $k$-contact framework.

\begin{table}[t]
\centering
\small
\renewcommand{\arraystretch}{1.15}
\setlength{\tabcolsep}{4pt}
\begin{tabularx}{\textwidth}{@{}L{2.9cm}L{2.6cm}L{5.7cm}X@{}}
\toprule
Model & Configuration space $Q$ & Representative Hamiltonian $h$ & Target equation \\
\midrule
Damped sine--Gordon field
&
$\mathbb{R}$, coordinate $u$
&
$h=\frac12\bigl((p_u^t)^2-c^{-2}(p_u^x)^2\bigr)+(1-\cos u)+\lambda z^t$
&
$u_{tt}-c^2u_{xx}+\sin u+\lambda u_t=0$.
\\

Damped double sine--Gordon field
&
$\mathbb{R}$, coordinate $u$
&
$h=\frac12\bigl((p_u^t)^2-c^{-2}(p_u^x)^2\bigr)+a(1-\cos u)+b(1-\cos 2u)+\lambda z^t$
&
$u_{tt}-c^2u_{xx}+a\sin u+2b\sin(2u)+\lambda u_t=0$.
\\

Damped $\phi^4$ field
&
$\mathbb{R}$, coordinate $u$
&
$h=\frac12\bigl((p_u^t)^2-c^{-2}(p_u^x)^2\bigr)+\frac{m^2}{2}u^2+\frac{\beta}{4}u^4+\lambda z^t$
&
$u_{tt}-c^2u_{xx}+m^2u+\beta u^3+\lambda u_t=0$.
\\

Coupled dissipative scalar fields
&
$\mathbb{R}^2$, coordinates $(u,v)$
&
$h=\frac12\bigl((p_u^t)^2-c_u^{-2}(p_u^x)^2+(p_v^t)^2-c_v^{-2}(p_v^x)^2\bigr)+V(u,v)+\lambda z^t$
&
$u_{tt}-c_u^2u_{xx}+\partial_uV+\lambda u_t=0$, \ 
$v_{tt}-c_v^2v_{xx}+\partial_vV+\lambda v_t=0$.
\\

Driven Josephson-type junction
&
$\mathbb{R}$, coordinate $u$
&
$h=\frac12\bigl((p_u^t)^2-c^{-2}(p_u^x)^2\bigr)+(1-\cos u)-Iu+\lambda z^t$
&
$u_{tt}-c^2u_{xx}+\sin u+\lambda u_t=I(t,x)$.
\\
Dissipative nonlinear sigma model
&
A Riemannian manifold $(N,g)$
&
$h=\frac12 g^{AB}(u)\bigl(p_A^tp_B^t-c^{-2}p_A^xp_B^x\bigr)+U(u)+\lambda z^t$
&
Schematic damped wave-map equation with potential:
$\Box_g u+\nabla U(u)+\lambda u_t=0$.
\\

Damped massive Klein--Gordon system
&
$\mathbb{R}^N$, coordinates $(u^1,\dots,u^N)$
&
$h=\frac12\sum_{A=1}^N\bigl((p_A^t)^2-c_A^{-2}(p_A^x)^2+m_A^2(u^A)^2\bigr)+\lambda z^t$
&
$u^A_{tt}-c_A^2u^A_{xx}+m_A^2u^A+\lambda u^A_t=0$,
\quad $A=1,\dots,N$.
\\

Coupled damped Klein--Gordon system
&
$\mathbb{R}^N$, coordinates $(u^1,\dots,u^N)$
&
$h=\frac12\sum_{A=1}^N\bigl((p_A^t)^2-c_A^{-2}(p_A^x)^2\bigr)+V(u^1,\dots,u^N)+\lambda z^t$
&
$u^A_{tt}-c_A^2u^A_{xx}+\partial_{u^A}V+\lambda u^A_t=0$,
\quad $A=1,\dots,N$.
\\

Damped anisotropic wave equation
&
$\mathbb{R}$, coordinate $u$
&
$h=\frac12\Bigl((p_u^t)^2-\sum_{a=1}^d c_a^{-2} (p_u^{x^a})^2\Bigr)+V(u)+\lambda z^t$
&
$u_{tt}-\sum_{a=1}^d c_a^2u_{x^ax^a}+V'(u)+\lambda u_t=0$.
\\

Damped nonlinear telegraph equation with variable potential
&
$\mathbb{R}$, coordinate $u$
&
$h=\frac12\bigl((p_u^t)^2-c^{-2}(p_u^x)^2\bigr)+W(u)+\lambda z^t$
&
$u_{tt}-c^2u_{xx}+W'(u)+\lambda u_t=0$,
including damped telegrapher and nonlinear wave models.
\\
\bottomrule
\end{tabularx}
\caption{Possible additional applications of the $k$-contact formalism.}
\label{tab:kcontact_applications_II}
\end{table}


\FloatBarrier
\printbibliography

@article{AllenCahn1979,
  author  = {Allen, S. M. and Cahn, J. W.},
  title   = {A Microscopic Theory for Antiphase Boundary Motion and Its Application to Antiphase Domain Coarsening},
  journal = {Acta Metallurgica},
  shortjournal = {Acta Metall.},
  volume  = {27},
  number  = {6},
  pages   = {1085--1095},
  year    = {1979},
  doi     = {10.1016/0001-6160(79)90196-2}
}

@article{Hastings1982FHN,
  author  = {Hastings, Stuart P.},
  title   = {Single and Multiple Pulse Waves for the FitzHugh--Nagumo Equations},
  journal = {SIAM Journal on Applied Mathematics},
  shortjournal = {SIAM J. Appl. Math.},
  volume  = {42},
  number  = {2},
  pages   = {247--260},
  year    = {1982},
  month   = apr,
  publisher = {Society for Industrial and Applied Mathematics},
  doi     = {10.1137/0142019}
}

@article{KasTanHataTaka2001,
  author  = {Kasai, Y. and Tanda, S.  and Hatakenaka, N. and Takayanagi, H.},
  title   = {Fluxon dynamics in isolated long Josephson junctions},
  journal = {Physica C: Superconductivity},
  shortjournal = {Physica C},
  volume  = {352},
  number  = {1--4},
  pages   = {211--214},
  year    = {2001},
  doi     = {10.1016/S0921-4534(00)01727-5}
}

@article{Popov2005,
  author  = {Popov, Constantine A.},
  title   = {Perturbation theory for the double-sine-Gordon equation},
  journal = {Wave Motion},
  shortjournal = {Wave Motion},
  volume  = {42},
  number  = {4},
  pages   = {309--316},
  year    = {2005},
  doi     = {10.1016/j.wavemoti.2005.04.007}
}

@article{KayaElSayed2004,
  author  = {Kaya, D. and El-Sayed, S. M.},
  title   = {A numerical simulation and explicit solutions of the generalized Burgers--Fisher equation},
  journal = {Applied Mathematics and Computation},
  shortjournal = {Appl. Math. Comput.},
  volume  = {152},
  number  = {2},
  pages   = {403--413},
  year    = {2004},
  doi     = {10.1016/S0096-3003(03)00565-4}
}

@article{PopovicPtashnykSattar2025,
  author  = {Popovi{\'c}, Nikola and Ptashnyk, Mariya and Sattar, Zak},
  title   = {The Burgers-FKPP advection-reaction-diffusion equation with cut-off},
  journal = {Journal of Dynamics and Differential Equations},
  shortjournal = {J. Dynam. Differential Equations},
  year    = {2025},
  doi     = {10.1007/s10884-025-10458-y}
}

@misc{Salih2016BurgersEquation,
  author       = {Salih, A.},
  title        = {Burgers' Equation},
  institution  = {Indian Institute of Space Science and Technology},
  location     = {Thiruvananthapuram},
  year         = {2016},
  month        = feb,
  day          = {18},
  note         = {Lecture notes, Department of Aerospace Engineering},
  url          = {https://iist.cygnusdvlp.in/sites/default/files/2025-06/Burgers_equation_viscous.pdf}
}

@article{OP2020,
    author = {Opanasenko, Stanislav and Popovych, Roman O.},
    title = {Generalized symmetries and conservation laws of (1 + 1)-dimensional Klein–Gordon equation},
    journal = {Journal of Mathematical Physics},
    shortjournal = {J. Math. Phys.},
    volume = {61},
    number = {10},
    pages = {101515},
    year = {2020},
    month = {10},
    issn = {0022-2488},
    doi = {10.1063/5.0003304},
    url = {https://doi.org/10.1063/5.0003304},
    eprint = {https://pubs.aip.org/aip/jmp/article-pdf/doi/10.1063/5.0003304/16167317/101515_1_online.pdf},
}

@misc{deLucasLangeSardonRivas2026,
  author       = {de Lucas, Javier and Lange, Jan and Sardon, Carlos and Rivas, Xavier},
  title        = {Hamilton–Jacobi theory for non-conservative field theories in the $k$-contact framework},
  year         = {2026},
  archiveprefix= {arXiv},
  eprint       = {2604.27670},
  note         = {Preprint}
}

@article{TakacsWagner2006,
  author  = { Tak{\'a}cs, G. and W\'agner, F.},
  title   = {Double sine-Gordon model revisited},
  journal = {Nuclear Physics B},
  shortjournal = {Nuclear Phys. B},
  volume  = {741},
  number  = {3},
  pages   = {353--367},
  year    = {2006},
  doi     = {10.1016/j.nuclphysb.2006.02.004}
}

@article{FitzHugh1961,
  author  = {FitzHugh, Richard},
  title   = {Impulses and Physiological States in Theoretical Models of Nerve Membrane},
  journal = {Biophysical Journal},
  shortjournal = {Biophys. J.},
  volume  = {1},
  number  = {6},
  pages   = {445--466},
  year    = {1961},
  doi     = {10.1016/S0006-3495(61)86902-6}
}

@article{Nagumo1962,
  author  = {Nagumo, Jin-Ichi and Arimoto, Suguru and Yoshizawa, Shuji},
  title   = {An Active Pulse Transmission Line Simulating Nerve Axon},
  journal = {Proceedings of the IRE},
  shortjournal = {Proc. IRE},
  volume  = {50},
  number  = {10},
  pages   = {2061--2070},
  year    = {1962},
  doi     = {10.1109/JRPROC.1962.288235}
}

@article{AransonKramer2002,
  author  = {Aranson, Igor S. and Kramer, Lorenz},
  title   = {The World of the Complex Ginzburg--Landau Equation},
  journal = {Reviews of Modern Physics},
  shortjournal = {Rev. Modern Phys.},
  volume  = {74},
  number  = {1},
  pages   = {99--143},
  year    = {2002},
  doi     = {10.1103/RevModPhys.74.99}
}

@article{McLaughlinScott1978,
  author  = {McLaughlin, D. W. and Scott, A. C.},
  title   = {Perturbation Analysis of Fluxon Dynamics},
  journal = {Physical Review A},
  shortjournal = {Phys. Rev. A},
  volume  = {18},
  number  = {4},
  pages   = {1652--1680},
  year    = {1978},
  doi     = {10.1103/PhysRevA.18.1652}
}

@article{CostabileEtAl1978,
  author  = {Costabile, G. and Parmentier, R. D. and Savo, B. and McLaughlin, D. W. and Scott, A. C.},
  title   = {Exact Solutions of the Sine-Gordon Equation Describing Oscillations in a Long (but Finite) Josephson Junction},
  journal = {Applied Physics Letters},
  shortjournal = {Appl. Phys. Lett.},
  volume  = {32},
  number  = {9},
  pages   = {587--589},
  year    = {1978},
  doi     = {10.1063/1.90113}
}

@book{CuevasMaraverKevrekidisWilliams2014,
  editor    = {Cuevas-Maraver, Jes{\'u}s and Kevrekidis, Panayotis G. and Williams, Floyd},
  title     = {The sine-Gordon Model and its Applications: From Pendula and Josephson Junctions to Gravity and High-Energy Physics},
  publisher = {Springer},
  location  = {Cham},
  year      = {2014},
  doi       = {10.1007/978-3-319-06722-3}
}

@article{WangCheng2005,
  author  = {Wang, QuanFang and Cheng, DaiZhan},
  title   = {Numerical Solution of Damped Nonlinear Klein--Gordon Equations Using Variational Method and Finite Element Approach},
  journal = {Applied Mathematics and Computation},
  shortjournal = {Appl. Math. Comput.},
  volume  = {162},
  number  = {1},
  pages   = {381--401},
  year    = {2005},
  doi     = {10.1016/j.amc.2003.12.102}
}

@book{Murray2002,
  author    = {Murray, James D.},
  title     = {Mathematical Biology. I: An Introduction},
  edition   = {3},
  publisher = {Springer},
  location  = {New York},
  year      = {2002},
  doi       = {10.1007/b98868}
}

@article{KuramotoTsuzuki1976,
  author  = {Kuramoto, Yoshiki and Tsuzuki, Toshio},
  title   = {Persistent Propagation of Concentration Waves in Dissipative Media Far from Thermal Equilibrium},
  journal = {Progress of Theoretical Physics},
  shortjournal = {Progr. Theoret. Phys.},
  volume  = {55},
  number  = {2},
  pages   = {356--369},
  year    = {1976},
  doi     = {10.1143/PTP.55.356}
}

@article{CrossHohenberg1993,
  author  = {Cross, M. C. and Hohenberg, P. C.},
  title   = {Pattern Formation Outside of Equilibrium},
  journal = {Reviews of Modern Physics},
  shortjournal = {Rev. Modern Phys.},
  volume  = {65},
  number  = {3},
  pages   = {851--1112},
  year    = {1993},
  doi     = {10.1103/RevModPhys.65.851}
}

@article{LugiatoLefever1987, author = {Lugiato, L. A. and Lefever, R.}, title = {Spatial Dissipative Structures in Passive Optical Systems}, journal = {Physical Review Letters},
 shortjournal = {Phys. Rev. Lett.}, volume = {58}, number = {21}, pages = {2209--2211}, year = {1987}, doi = {10.1103/PhysRevLett.58.2209} }

@article{KippenbergEtAl2018, author = {Kippenberg, Tobias J. and Gaeta, Alexander L. and Lipson, Michal and Gorodetsky, Mikhail L.}, title = {Dissipative Kerr Solitons in Optical Microresonators}, journal = {Science},
 shortjournal = {Science}, volume = {361}, pages = {6402}, year = {2018}, doi = {10.1126/science.aan8083} }

@book{ErmentroutTerman2010,
  author    = {Ermentrout, G. Bard and Terman, David H.},
  title     = {Mathematical Foundations of Neuroscience},
  publisher = {Springer},
  year      = {2010},
  doi       = {10.1007/978-0-387-87708-2}
}

@book{Vazquez2007,
  author    = {V{\'a}zquez, Juan Luis},
  title     = {The Porous Medium Equation: Mathematical Theory},
  publisher = {Oxford University Press},
  year      = {2007},
  isbn      = {9780198569039}
}

@article{Barenblatt1952,
  author  = {Barenblatt, G. I.},
  title   = {On Some Unsteady Motions of a Liquid or a Gas in a Porous Medium},
  journal = {Prikladnaya Matematika i Mekhanika},
  shortjournal = {Prikl. Mat. Mekh.},
  volume  = {16},
  number  = {1},
  pages   = {67--78},
  year    = {1952}
}

@book{Fife1979,
  author    = {Fife, Paul C.},
  title     = {Mathematical Aspects of Reacting and Diffusing Systems},
  series    = {Lecture Notes in Biomathematics},
  volume    = {28},
  publisher = {Springer},
  location  = {Berlin},
  year      = {1979},
  doi       = {10.1007/978-3-540-34932-7}
}

@article{Burgers1948,
  author  = {Burgers, J. M.},
  title   = {A Mathematical Model Illustrating the Theory of Turbulence},
  journal = {Advances in Applied Mechanics},
  shortjournal = {Adv. Appl. Mech.},
  volume  = {1},
  pages   = {171--199},
  year    = {1948},
  doi     = {10.1016/S0065-2156(08)70100-5}
}

@incollection{Aronson1986,
  author    = {Aronson, D. G.},
  title     = {The Porous Medium Equation},
  booktitle = {Some Problems in Nonlinear Diffusion},
  editor    = {Fasano, A. and Primicerio, M.},
  series    = {Lecture Notes in Mathematics},
  volume    = {1224},
  pages     = {1--46},
  publisher = {Springer},
  location  = {Berlin},
  year      = {1986}
}

@book{SulemSulem1999,
  author    = {Sulem, Catherine and Sulem, Pierre-Louis},
  title     = {The Nonlinear Schr{\"o}dinger Equation: Self-Focusing and Wave Collapse},
  publisher = {Springer},
  year      = {1999},
  doi       = {10.1007/b98958}
}

@article{Fisher1937,
  author  = {Fisher, Ronald A.},
  title   = {The Wave of Advance of Advantageous Genes},
  journal = {Annals of Eugenics},
  shortjournal = {Ann. Eugenics},
  volume  = {7},
  pages   = {355--369},
  year    = {1937},
  doi     = {10.1111/j.1469-1809.1937.tb02153.x}
}

@article{Kolmogorov1937,
  author  = {Kolmogorov, A. N. and Petrovskii, I. G. and Piskunov, N. S.},
  title   = {A Study of the Diffusion Equation with Increase in the Amount of Substance, and its Application to a Biological Problem},
  journal = {Bulletin of Moscow University, Mathematics and Mechanics},
  shortjournal = {Moscow Univ. Math. Bull.},
  volume  = {1},
  pages   = {1--25},
  year    = {1937}
}

@book{Arn_89,
    author = {Vladimir I. Arnold},
    title = {{Mathematical Methods of Classical Mechanics}},
    address = {New York},
    isbn = {0387968903},
    publisher = {Springer},
    series = {Graduate Texts in Mathematics},
    volume = {60},
    year = {1989},
    note = {\href{https://doi.org/10.1007/978-1-4757-1693-1}{10.1007/978-1-4757-1693-1}}
}

@article{Bra_17,
    author = {Alessandro Bravetti},
    title = {{Contact Hamiltonian dynamics: The concept and its use}},
    journal = {Entropy},
    shortjournal = {Entropy},
    volume = {19},
    number = {10},
    pages = {535},
    year = {2017},
    doi = {10.3390/e19100535}
}

@article{BCT_17,
    author = {Alessandro Bravetti and Hans Cruz and Diego Tapias},
    title = {{Contact Hamiltonian mechanics}},
    journal = {Ann. Phys.},
    shortjournal = {Ann. Physics},
    volume = {376},
    pages = {17--39},
    year = {2017},
    doi = {10.1016/j.aop.2016.11.003}
}

@article{CCM_18,
    author = {Florio M. Ciaglia and Hans Cruz and Giuseppe Marmo},
    title = {{Contact manifolds and dissipation, classical and quantum}},
    journal = {Ann. Phys.},
    shortjournal = {Ann. Physics},
    volume = {398},
    pages = {159--179},
    year = {2018},
    doi = {10.1016/j.aop.2018.09.012}
}

@article{Gun_87,
    author = {Christian Günther},
    title = {{The polysymplectic Hamiltonian formalism in field theory and calculus of variations I: The local case}},
    journal = {J. Differ. Geom.},
    shortjournal = {J. Differential Geom.},
    volume = {25},
    number = {1},
    pages = {23--53},
    year = {1987},
    doi = {10.4310/jdg/1214440723}
}

@article{GG_22,
    author = {Katarzyna Grabowska and Janusz Grabowski},
    title = {{A geometric approach to contact Hamiltonians and contact Hamilton–Jacobi theory}},
    journal = {J. Phys. A: Math. Theor.},
    shortjournal = {J. Phys. A},
    volume = {55},
    pages = {435204},
    year = {2022},
    doi = {10.1088/1751-8121/ac9adb}
}

@article{AkhmedievSotoCrespoTown2001,
  author  = {Akhmediev, Nail and Soto-Crespo, Jose M. and Town, Graham},
  title   = {Pulsating solitons, chaotic solitons, period doubling, and pulse coexistence in mode-locked lasers: Complex Ginzburg-Landau equation approach},
  journal = {Physical Review E},
  shortjournal = {Phys. Rev. E},
  volume  = {63},
  number  = {5},
  pages   = {056602},
  year    = {2001},
  doi     = {10.1103/PhysRevE.63.056602}
}

@article{KomarovLeblondSanchez2005,
  author  = {Komarov, A. and Leblond, H. and Sanchez, F.},
  title   = {Quintic complex Ginzburg-Landau model for ring fiber lasers},
  journal = {Physical Review E},
  shortjournal = {Phys. Rev. E},
  volume  = {72},
  number  = {2},
  pages   = {025604},
  year    = {2005},
  doi     = {10.1103/PhysRevE.72.025604}
}

@article{deLucasRivasSobczak2026,
  author  = {de Lucas, Javier and Rivas, Xavier and Sobczak, Tomasz},
  title   = {$k$-contact Lie systems: theory and applications},
  journal = {Geometric Mechanics},
  shortjournal = {Geometric Mechanics},
  year    = {2026},
  pages   = {1--58},
  doi     = {10.1142/S2972458926400022}
}

@article{GGMRR_20,
    author = {Jordi Gaset and Xavier Gràcia and Miguel C. Muñoz-Lecanda and Xavier Rivas and Narciso Román-Roy},
    title = {{A contact geometry framework for field theories with dissipation}},
    journal = {Ann. Phys.},
    shortjournal = {Ann. Physics},
    volume = {414},
    pages = {168092},
    year = {2020},
    doi = {10.1016/j.aop.2020.168092}
}

@article{GGMRR_21,
    author = {Jordi Gaset and Xavier Gràcia and Miguel C. Muñoz-Lecanda and Xavier Rivas and Narciso Román-Roy},
    title = {{A $k$-contact Lagrangian formulation for nonconservative field theories}},
    journal = {Rep. Math. Phys.},
    shortjournal = {Rep. Math. Phys.},
    volume = {87},
    number = {3},
    pages = {347--368},
    year = {2021},
    doi = {10.1016/S0034-4877(21)00041-0}
}

@article{GRR_22,
    title = {{Skinner--Rusk formalism for $k$-contact systems}},
    author = {Xavier Gràcia and Xavier Rivas and Narciso Román-Roy},
    journal = {J. Geom. Phys.},
    shortjournal = {J. Geom. Phys.},
    year = {2022},
    volume = {172},
    pages = {104429},
    note = {\href{https://doi.org/10.1016/j.geomphys.2021.104429}{10.1016/j.geomphys.2021.104429}}
}

@article{HLM_26,
    title = {{A k-contact geometric approach to pseudo-gauge transformations}},
    year = {2026},
    journal = {J. Phys. A: Math. Theor.},
    shortjournal = {J. Phys. A},
    author = {Hontarenko, Mykhailo and de Lucas, Javier and Maskalaniec, Adam},
    month = {2},
    doi = {10.1088/1751-8121/ae4078},
    issn = {1751-8113}
}

@article{Kij_73,
    author = {Jerzy Kijowski},
    title = {{A finite-dimensional canonical formalism in the classical field theory}},
    journal = {Comm. Math. Phys.},
    shortjournal = {Comm. Math. Phys.},
    number = {2},
    pages = {99--128},
    volume = {30},
    year = {1973},
    doi = {10.1007/BF01645975}
}

@book{KT_79,
    title = {{A symplectic framework for field theories}},
    author = {Jerzy Kijowski and Wlodzimierz M. Tulczyjew},
    year = {1979},
    volume = {107},
    series = {Lecture Notes in Physics},
    publisher = {Springer-Verlag},
    address = {Berlin Heidelberg},
    doi = {10.1007/3-540-09538-1}
}

@book{LR_89,
    title = {{Methods of Differential Geometry in Analytical Mechanics}},
    author = {Manuel de León and P. R. Rodrigues},
    publisher = {North-Holland},
    series = {Mathematics Studies},
    volume = {158},
    address = {Amsterdam},
    year = {1989},
    note = {\href{https://doi.org/10.1016/s0304-0208(08)x7115-4}{10.1016/s0304-0208(08)x7115-4}}
}

@article{LRS_25,
    title = {{Foundations on {\it k}-contact geometry}},
    year = {2024},
    author = {de Lucas, Javier and Rivas, Xavier and Sobczak, Tomasz},
    doi = {10.48550/arXiv.2409.11001},
    journal={arXiv:2409.11001},
    shortjournal = {arXiv},
    arxivId = {2409.11001}
}

@article{LS_17,
    author = {Manuel de León and Cristina Sardón},
    title = {{Cosymplectic and contact structures for time-dependent and dissipative Hamiltonian systems}},
    journal = {J. Phys. A: Math. Theor.},
    shortjournal = {J. Phys. A},
    volume = {50},
    number = {25},
    pages = {255205},
    year = {2017},
    doi = {10.1088/1751-8121/aa711d}
}

@book{LSV_15,
    author = {Manuel de León and Modesto Salgado and Silvia Vilariño},
    title = {{Methods of Differential Geometry in Classical Field Theories: $k$-symplectic and $k$-cosymplectic approaches}},
    publisher = {World Scientific},
    year = {2015},
    doi ={10.1142/9693}
}

@phdthesis{Mer_97,
    author = {Eugenio E. Merino},
    title = {{Geometría $k$-simpléctica y $k$-cosimpléctica. Aplicaciones a las teorías clásicas de campos}},
    school = {Universidade de Santiago de Compostela},
    year = {1997}
}

@article {MRSV_15,
    author = {Juan Carlos Marrero and  Narciso Román-Roy and Modesto Salgado and Silvia Vilariño},
    title = {{Reduction of polysymplectic manifolds}},
    journal = {J. Phys. A: Math. Theor.},
    shortjournal = {J. Phys. A},
    volume = {48},
    year = {2015},
    number = {5},
    pages = {055206},
    note = {\href{http://doi.org/10.1088/1751-8113/48/5/055206}{10.1088/1751-8113/48/5/055206}}
}

@phdthesis{Riv_22,
    author = {Xavier Rivas},
    title = {{Geometrical aspects of contact mechanical systems and field theories}},
    school = {Universitat Politècnica de Catalunya (UPC)},
    year = {2021},
    doi ={10.48550/arXiv.2204.11537}
}

@article{Riv_23,
    author = {Xavier Rivas},
    title = {{Nonautonomous $k$-contact field theories}},
    journal = {J. Math. Phys.},
    shortjournal = {J. Math. Phys.},
    volume = {64},
    number = {3},
    pages = {033507},
    year = {2023},
    doi = {10.1063/5.0131110}
}

@book{Ryd_96,
    author = {Lewis H. Ryder},
    title = {{Quantum Field Theory}},
    edition = {2nd},
    publisher = {Cambridge University Press},
    year = {1996},
    doi = {10.1017/CBO9780511813900}
}

@article{LizunovaVanWezel2021,
  author  = {Lizunova, Mariya and van Wezel, Jasper},
  title   = {An introduction to kinks in $\phi^4$-theory},
  journal = {SciPost Physics Lecture Notes},
  shortjournal = {SciPost Phys. Lect. Notes},
  volume  = {23},
  pages   = {1--27},
  year    = {2021},
  doi     = {10.21468/SciPostPhysLectNotes.23}
}

@article{RoquesChekroun2007,
  author  = {Roques, Lionel and Chekroun, Micka{\"e}l D.},
  title   = {On Population Resilience to External Perturbations},
  journal = {SIAM Journal on Applied Mathematics},
  shortjournal = {SIAM J. Appl. Math.},
  volume  = {68},
  number  = {1},
  pages   = {133--153},
  year    = {2007},
  doi     = {10.1137/060676994}
}

@article{OrugantiShiShivaji2002,
  author  = {Oruganti, Shobha and Shi, Junping and Shivaji, Ratnasingham},
  title   = {Diffusive Logistic Equation with Constant Yield Harvesting, {I}: Steady States},
  journal = {Transactions of the American Mathematical Society},
  shortjournal = {Trans. Amer. Math. Soc.},
  volume  = {354},
  number  = {9},
  pages   = {3601--3619},
  year    = {2002},
  doi     = {10.1090/S0002-9947-02-03005-2}
}

@article{KurataShi2008,
  author  = {Kurata, Kazuhiro and Shi, Junping},
  title   = {Optimal Spatial Harvesting Strategy and Symmetry-Breaking},
  journal = {Applied Mathematics and Optimization},
  shortjournal = {Appl. Math. Optim.},
  volume  = {58},
  number  = {1},
  pages   = {89--110},
  year    = {2008},
  doi     = {10.1007/s00245-007-9032-7}
}

@article{BandleNanbuStakgold1998,
  author  = {Bandle, Catherine and Nanbu, Tokumori and Stakgold, Ivar},
  title   = {Porous Medium Equation with Absorption},
  journal = {SIAM Journal on Mathematical Analysis},
  shortjournal = {SIAM J. Math. Anal.},
  volume  = {29},
  number  = {5},
  pages   = {1268--1278},
  year    = {1998},
  doi     = {10.1137/S0036141096311423}
}

@article{AncoMarquezGarridoGandarias2024,
  author  = {Anco, Stephen C. and M{\'a}rquez, Almudena P. and Garrido, Tamara M. and Gandarias, Maria L.},
  title   = {Conservation laws and variational structure of damped nonlinear wave equations},
  journal = {Mathematical Methods in the Applied Sciences},
  shortjournal = {Math. Methods Appl. Sci.},
  volume  = {47},
  number  = {6},
  pages   = {3974--3996},
  year    = {2024},
  doi     = {10.1002/mma.9798}
}

@article{CaiZhangWang2017,
  author  = {Cai, Wenjun and Zhang, Huai and Wang, Yushun},
  title   = {Modelling damped acoustic waves by a dissipation-preserving conformal symplectic method},
  journal = {Proceedings of the Royal Society A},
  shortjournal = {Proc. Royal Soc. A},
  volume  = {473},
  number  = {2199},
  pages   = {20160798},
  year    = {2017},
  doi     = {10.1098/rspa.2016.0798}
}

@article{Kwak1998,
  author  = {Kwak, Minkyu},
  title   = {A Porous Media Equation with Absorption. I. Long Time Behaviour},
  journal = {Journal of Mathematical Analysis and Applications},
  shortjournal = {J. Math. Anal. Appl.},
  volume  = {223},
  number  = {1},
  pages   = {96--110},
  year    = {1998},
  doi     = {10.1006/jmaa.1998.5961}
}

@article{Wang2016,
  author  = {Wang, Mingxin},
  title   = {A Diffusive Logistic Equation with a Free Boundary and Sign-Changing Coefficient in Time-Periodic Environment},
  journal = {Journal of Functional Analysis},
  shortjournal = {J. Funct. Anal.},
  volume  = {270},
  number  = {2},
  pages   = {483--508},
  year    = {2016},
  doi     = {10.1016/j.jfa.2015.10.014}
}

@article{QuinteroSanchezMertens2001,
  author  = {Quintero, Niurka R. and S{\'a}nchez, {\'A}ngel and Mertens, Franz G.},
  title   = {Anomalies of ac driven solitary waves with internal modes: Non-parametric resonances induced by parametric forces},
  journal = {Physical Review E},
  shortjournal = {Phys. Rev. E},
  volume  = {64},
  number  = {4},
  pages   = {046601},
  year    = {2001},
  doi     = {10.1103/PhysRevE.64.046601}
}

@article{Wakasugi2012,
  author  = {Wakasugi, Yuta},
  title   = {Small data global existence for the semilinear wave equation with space-time dependent damping},
  journal = {Journal of Mathematical Analysis and Applications},
  shortjournal = {J. Math. Anal. Appl.},
  volume  = {393},
  number  = {1},
  pages   = {66--79},
  year    = {2012},
  doi     = {10.1016/j.jmaa.2012.03.035}
}

@article{SakataWakasugi2017,
  author  = {Sakata, Shigehiro and Wakasugi, Yuta},
  title   = {Movement of time-delayed hot spots in Euclidean space},
  journal = {Mathematische Zeitschrift},
  shortjournal = {Math. Z.},
  volume  = {285},
  pages   = {1007--1040},
  year    = {2017},
  doi     = {10.1007/s00209-016-1735-5}
}

@article{RSS_24,
    title = {Some contributions to $k$-contact {Lagrangian} field equations, symmetries and dissipation laws},
    volume = {36},
    doi = {10.1142/S0129055X24500193},
    number = {8},
    journal = {Rev. Math. Phys.},
    shortjournal = {Rev. Math. Phys.},
    author = {Rivas, Xavier and Salgado, Modesto and Souto, Silvia},
    year = {2024},
    pages = {2450019--2450019},
}

@book{OLV_86,
    address = {New York},
    title = {Applications of {Lie} {Groups} to {Differential} {Equations}},
    volume = {107},
    isbn = {978-1-4684-0276-6},
    doi = {10.1007/978-1-4684-0274-2},
    publisher = {Springer},
    author = {Olver, Peter J.},
    year = {1986},
}

@InProceedings{SF_25,
author="Sobczak, Tomasz
and Frelik, Tymon",
editor="Nielsen, Frank
and Barbaresco, Fr{\'e}d{\'e}ric",
title="Novel Pathways in k-Contact Geometry",
booktitle="Geometric Science of Information",
year="2026",
publisher="Springer Nature Switzerland",
address="Cham",
pages="337--345",
isbn="978-3-032-03924-8"
}
\end{document}